\PassOptionsToPackage{unicode}{hyperref}
\PassOptionsToPackage{hyphens}{url}
\PassOptionsToPackage{dvipsnames,svgnames,x11names}{xcolor}

\documentclass[12pt]{article}

\usepackage{amsmath,amssymb,amsthm}
\usepackage{iftex}
\ifPDFTeX
  \usepackage[T1]{fontenc}
  \usepackage[utf8]{inputenc}
  \usepackage{textcomp} 
\else 
  \usepackage{unicode-math}
  \defaultfontfeatures{Scale=MatchLowercase}
  \defaultfontfeatures[\rmfamily]{Ligatures=TeX,Scale=1}
\fi
\usepackage{lmodern}

\makeatletter
\@ifundefined{KOMAClassName}{
  \setlength{\parindent}{6mm}          
  \setlength{\parskip}{0pt plus 1pt}   
}{
  \KOMAoptions{parskip=half}}
\makeatother

\usepackage{xcolor}
\usepackage{titletoc}
\titlecontents{section}[0pt]{\bfseries}{\thecontentslabel\hspace{0.5em}}{}{\titlerule*[0.5pc]{.}\thecontentspage}
\titlecontents{subsection}[1.5em]{}{\thecontentslabel\hspace{0.5em}}{}{\titlerule*[0.5pc]{.}\thecontentspage}
\usepackage{setspace}
\usepackage{natbib}
\usepackage{longtable,booktabs,array}
\usepackage{calc} 
\usepackage{etoolbox}
\makeatletter
\patchcmd\longtable{\par}{\if@noskipsec\mbox{}\fi\par}{}{}
\makeatother

\usepackage{graphicx}
\makeatletter
\def\maxwidth{\ifdim\Gin@nat@width>\linewidth\linewidth\else\Gin@nat@width\fi}
\def\maxheight{\ifdim\Gin@nat@height>\textheight\textheight\else\Gin@nat@height\fi}
\makeatother
\setkeys{Gin}{width=\maxwidth,height=\maxheight,keepaspectratio}
\def\fps@figure{htbp}

\usepackage{float}
\usepackage{bookmark}
\IfFileExists{xurl.sty}{\usepackage{xurl}}{} 
\usepackage{mathtools}
\usepackage{multirow,multicol}
\usepackage[mathscr]{euscript}
\usepackage{bbm}
\usepackage{pifont} 
\usepackage{tabularx}
\usepackage{array} 
\usepackage{algorithm}
\usepackage{algpseudocode}
\usepackage{enumitem}
\usepackage{arydshln}
\usepackage{subfig}
\usepackage[english]{babel}
\usepackage[toc,page]{appendix}

\hypersetup{colorlinks=true, linkcolor=blue, citecolor=blue}
\usepackage{xr}
\usepackage{xr-hyper}
\makeatletter
\newcommand*{\addFileDependency}[1]{
\typeout{(#1)}
\@addtofilelist{#1}
\IfFileExists{#1}{}{\typeout{No file #1.}}
}
\makeatother

\usepackage{natbib}
\newtheorem{assumption}{Assumption}

\newtheorem{definition}{Definition}
\newtheorem{lemma}{Lemma}
\newtheorem{proposition}{Proposition}
\newtheorem{theorem}{Theorem}

\newcommand{\bm}{\boldsymbol}
\newcommand{\cm}[1]{\mbox{\boldmath$\mathscr{#1}$}}

\definecolor{turquoise}{rgb}{0.03, 0.7, 0.87}

\DeclareMathAlphabet\mathrsfso{U}{rsfso}{m}{n}

\numberwithin{equation}{section}
\allowdisplaybreaks

\newcommand{\anon}{1} 

\begin{document}

\def\spacingset#1{\renewcommand{\baselinestretch}{#1}\small\normalsize} 

\if1\anon
{
  \title{\bf Mixed-Frequency Time Series Forecasting via Depth-Separable Neural Networks}
  \author{Yize Wang$^{a}$, Qianqian Zhu$^{b}$\thanks{Corresponding author. Address: School of Statistics and Data Science, 777 Guoding Rd., Shanghai, 200433, P. R. China. Email: \href{mailto:zhu.qianqian@mail.shufe.edu.cn}{zhu.qianqian@mail.shufe.edu.cn}}, and Guodong Li$^{a}$ \\
  $^{a}$The University of Hong Kong and $^{b}$Shanghai University of Finance and Economics}
  \maketitle
} \fi

\if0\anon
{
  \bigskip
  \bigskip
  \bigskip
  \begin{center}
    {\LARGE\bf Mixed-Frequency Time Series Forecasting via Depth-Separable Neural Networks}
  \end{center}
  \medskip
} \fi

\bigskip
\begin{abstract}
To better forecast mixed-frequency time series, it is the key to choose a suitable way for frequency alignment. However, the existing methods are all limited to linear transformations, and this may overlook the possible nonlinearity, leading to a worse prediction. We alternatively consider a deep neural network for each frequency alignment, and hence a depth-separable neural network. Moreover, a parameter-sharing mechanism is adopted across the alignment at each stage, making possible a deeper network for a large set of higher-frequency predictors. This paper establishes an approximation theory for the proposed depth-separable network, and a non-asymptotic prediction error bound is also derived. Simulation studies demonstrate the finite-sample performance of the proposed method, and an empirical application to forecasting U.S. quarterly macroeconomic variables using monthly and daily indicators, highlights its superior predictive accuracy over existing mixed-frequency methods. 
\end{abstract}

\noindent%
{\it Keywords:} frequency alignment; least squares estimation; mixed-frequency data; non-asymptotic properties; ReLU neural network.
\vfill

\newpage

\section{Introduction}\label{sec-intro}

In the era of big data, the growing availability of complex high-dimensional datasets presents substantial opportunities alongside significant analytical challenges. These challenges are particularly pronounced in the context of mixed-frequency data, where variables are recorded at different temporal intervals, a common scenario in fields such as economics \citep{andreou-2013}, finance \citep{engle-2013}, and environmental science \citep{xu-2022}. A typical example comes from macroeconomics, where forecasting a low-frequency quarterly variable like Gross Domestic Product (GDP) growth inherently relies on higher-frequency information flows. These include monthly economic indicators like the consumer and producer price indices, as well as daily financial variables such as stock returns and interest rates. To jointly model such series, mixed-frequency data need to be aligned to a common time scale before conventional single-frequency models can be applied. Frequency alignment critically affects how information from higher-frequency variables is utilized, thereby influencing the estimation efficiency and prediction accuracy. 
Moreover, economic and financial time series often exhibit nonlinear relationships such as threshold and interaction effects \citep{audrino-2011}, which are poorly captured by traditional linear models. Consequently, it is critical to develop flexible and computationally feasible methods that can simultaneously perform adaptive frequency alignment and model complex nonlinear dependencies. 

Within mixed-frequency settings, the core methodological challenge is \textit{frequency alignment}. Two predominant frameworks in the literature address this task:
\begin{enumerate}
	\item[(i)] \textbf{Low-to-High}: Low-frequency series are treated as high-frequency series with systematically missing observations. Missing values are imputed using state-space models (SSMs), typically estimated via the expectation-maximization algorithm \citep{schumacher-2007}, the Kalman filter \citep{giannone-2008}, or Bayesian methods \citep{mariano-2002,schorfheide-2015,marcellino-2016}.   
	\item[(ii)] \textbf{High-to-Low}: High-frequency series are aggregated or transformed to match the lowest frequency in the dataset. Common approaches include simple temporal aggregation, series stacking \citep{ghysels-2016a}, or Mixed-Data Sampling (MIDAS) regression \citep{ghysels-2004,ghysels-2007,clements-2008,andreou-2010,andreou-2013}.
\end{enumerate}
State-space models provide a theoretically coherent mechanism for mixed-frequency analysis through a system of measurement equations, which link observed variables to latent states, and state equations, which govern the dynamics of these states. By embedding low-frequency variables within a high-frequency latent process, SSMs enable joint modeling across frequencies. This framework has been widely applied, for example, in nowcasting quarterly GDP using monthly and daily indicators \citep{banbura-2011}.
However, SSMs exhibit several limitations. They rely strongly on linear and Gaussian assumptions, and their performance is sensitive to model specification \citep{andreou-2013}. Misspecification of either the state or measurement equation can introduce substantial forecasting errors. Moreover, as the data dimension grows, SSMs require an increasing number of parameters, and the associated Kalman filter becomes computationally demanding \citep{bai-2013}. These constraints limit the scalability and practical applicability of SSMs in high‑dimensional mixed-frequency settings, motivating the development of more flexible and efficient alternatives.

The MIDAS framework \citep{ghysels-2004,ghysels-2007} offers a computationally efficient alternative to state-space models, typically requiring far fewer parameters and making it especially attractive for datasets with many high-frequency predictors \citep{andreou-2013}. It has consequently become the dominant approach for High-to-Low frequency alignment. One key advantage of MIDAS lies in aligning high-frequency series via parsimonious weighting schemes, such as the exponential Almon lag polynomial \citep{ghysels-2007} or the unrestricted lag polynomial \citep{foroni-2015}, which reduce the information loss of simple aggregation and avoid the parameter proliferation from direct series stacking. 
The literature has since extended MIDAS in diverse directions to enhance its flexibility: the autoregressive distributed lag MIDAS model \citep{clements-2008} for adding autoregressive dynamics, the factor-augmented versions \citep{frale-2010,andreou-2013} that incorporate latent factors, and nonparametric MIDAS \citep{breitung-2015} to allow nonparametric weighting schemes. Other notable extensions integrate MIDAS with Lasso to handle high-dimensional settings \citep{babii-2022}, with GARCH-type models for volatility modeling \citep{engle-2013, amendola-2018}, with quantile regression to capture asymmetric effects \citep{ghysels-2014, ghysels-2016b}, with classical nonlinear or nonparametric models \citep{galvao-2013,wei-2025} to capture nonlinearity, and with neural networks for weighting schemes \citep{lin-2024} or nonlinear forecasting \citep{xu-2018,xu-2021,wang-2025}. Nevertheless, existing nonparametric extensions are mostly confined to univariate or additive specifications, and neural-network variants have yet to offer reliable estimation theory.

As mixed-frequency data grow in sample size and dimension, conventional frequency alignment approaches including SSMs and MIDAS-type methods, will encounter three fundamental issues: 
(a) alignment functions are typically parametric and pre-specified, thereby ignoring potential nonparametric structures and increasing the risk of model misspecification; 
(b) the alignment is usually confined to linear combinations of the original predictors, which fails to capture more complex nonlinear dependencies; and   
(c) the alignment process inherently relies on dimension reduction, excluding dimension-expansion techniques that could help represent richer nonlinear patterns, like the reproducing kernel Hilbert space in support vector machines \citep[Chapter 5]{murty-2016}. 
To address the above issues, we introduce a novel neural-network-based framework that jointly achieves adaptive frequency alignment and captures complex nonlinear dependencies. It performs alignment and prediction sequentially, i.e., from high to medium frequency, then from medium to low frequency, and finally to prediction, and hence a depth-separable neural network from the perspective of machine learning \citep{chollet-2017,wang-2020}. The proposed framework is coupled with parameter sharing across alignment networks, ensuring computational scalability even with many high-frequency predictors.
Moreover, the framework allows either dimension reduction or expansion by controlling output dimensions. 
Crucially, the proposed network integrates neural networks end-to-end, applying them to both alignment and forecasting. This unified design move beyond the parametric, linear, and dimension-reducing constraints inherent in conventional and commonly used mixed-frequency models, providing a powerful and scalable tool for mixed-frequency data.

The main contributions of this paper are threefold.  
First, we develop a \textit{Depth-Separable Neural Network (DSNN)}, a novel architecture that preforms sequential frequency alignment through multiple deep Rectified Linear Unit (ReLU) networks and then conducts forecasting on the fully aligned series at the final stage. The DSNN overcomes the key limitations of conventional approaches by employing neural networks end-to-end for both alignment and prediction, and enabling either dimension reduction or expansion as needed.  
We establish an approximation theory for the DSNN over a general class of hierarchical composition models.  
Secondly, we propose a least squares estimation method for the DSNN and derive its non-asymptotic prediction error bound, thereby filling a notable gap in the theoretical foundation of neural-network-based nonparametric MIDAS estimation.
Finally, we apply the proposed methodology to forecast key quarterly macroeconomic variables including GDP, consumption, investment, government spending and price deflator, using a large set of quarterly, monthly, daily economic and financial indicators. Empirical results show that the DSNN delivers significant prediction gains over existing mixed-frequency methods, highlighting its practical utility.

The remainder of this paper is organized as follows. Section \ref{sec:method} introduces the model or architecture of Depth-Separable Neural Networks and presents the corresponding least squares estimation method, along with the optimization algorithm and hyper-parameter selection procedure. Theoretical analysis, including an approximation theory and a non‑asymptotic prediction error bound, is provided in Section \ref{sec:theory}. Sections \ref{sec:simulation} and \ref{sec:real data} contain simulation studies and an empirical application, respectively. Concluding remarks and discussions are given at Section \ref{sec:conclusion and discussion}. All technical proofs and additional supporting materials are collected in the Appendix.
Throughout this paper, we adopt the following notation.
Let $[m] = \{1, \ldots, m\}$. Denote by $\mathbb{R}$ the set of real numbers, $\mathbb{N}_0 = \{0,1,2,\ldots\}$ the set of nonnegative integers, and $\mathbb{N}_+ = \{1,2,3,\ldots\}$ the set of positive integers. 
Bold lowercase letters denote vectors, e.g., $\mathbf{x} = [x_1, \ldots, x_d]^\top$ for a $d$-dimensional vector. Its $\ell_q$ norm is $\|\mathbf{x}\|_q = \left(\sum_{i=1}^{d} |x_i|^q \right)^{1/q}$, and its $\ell_\infty$ norm is $\|\mathbf{x}\|_\infty = \max_{1 \leq i \leq d} |x_i|$. The elementwise square root is defined as $\sqrt{\mathbf{x}} = (\sqrt{x_1}, \ldots, \sqrt{x_d})^\top$. For $\mathbf{x}, \mathbf{y} \in \mathbb{R}^d$, elementwise division is $\mathbf{x}/\mathbf{y} = (x_1/y_1, \ldots, x_d/y_d)^\top$, and Hadamard product is $(\mathbf{x} \odot \mathbf{y})_i = x_i y_i$ for $i \in [d]$. 
Bold uppercase letters denote matrices, e.g., $\mathbf{A} = [A_{i,j}]_{i \in [n], j \in [m]}$. We define $\|\mathbf{A}\|_2 = \sup_{\mathbf{x} : \|\mathbf{x}\|_2 = 1} \|\mathbf{A}\mathbf{x}\|_2$,  and $\|\mathbf{A}\|_{\infty} = \max_{i \in [n]} \sum_{j=1}^m |A_{i,j}|$.  
For functions $f(n)$ and $g(n)$, we write $f(n) \lesssim g(n)$ to mean $f(n) \leq C \cdot g(n)$ for some constant $C$ independent of $n$. Similarly, $f(n) \gtrsim g(n)$ means there exists a constant $C > 0$ independent of $n$ such that $f(n) \geq C \cdot g(n)$. We write $f(n) \asymp g(n)$ if both $f(n) \lesssim g(n)$ and $f(n) \gtrsim g(n)$ hold.
The dataset in Section \ref{sec:real data} and computer programs for the analysis are available at \url{https://github.com/Dexter97531/MF_DSNN.git}.

\section{Methodology}\label{sec:method}
\subsection{Data and Model Settings}

This study aims to forecast a low-frequency response using predictors observed at the same or higher frequencies. For simplicity, we focus on three distinct frequencies: low, medium, and high, indexed respectively by $t$, $q$ and $s$. 
For $t \in [T], q \in [Q], s \in[S]$, we define $\mathbf{y}[t]=(y_1[t],\ldots,y_{d_0}[t])^{\prime}\in \mathbb{R}^{d_0}$ as the low-frequency response vector, $\mathbf{z}[t]=(z_1[t],\ldots,z_{d_1}[t])^{\prime}\in \mathbb{R}^{d_1}$ the low-frequency predictor vector, $\mathbf{m}[t, q]=(m_1[t, q],\ldots,m_{d_2}[t, q])^{\prime}\in \mathbb{R}^{d_2}$ the medium-frequency predictor vector, and $\mathbf{h}[t,q,s] = (h_1[t,q,s], \ldots, h_{d_3}[t,q,s])^{\prime}\in \mathbb{R}^{d_3}$ the high-frequency predictor vector. 
Here, high-frequency data are observed $S$ times per medium-frequency period, which in turn are observed $Q$ times per low-frequency period.
For the example with yearly, quarterly and monthly data, we have $S=3$ and $Q=4$, and $T$ is the number of years.

Consider the nonparametric regression model as follows: 
\begin{align}
	\mathbf{y}[{t}] = m_0({\mathbf{x}}[t-1], \cdots, {\mathbf{x}}[t-K]) + \bm\epsilon[t], \quad t \in [T],
	\label{eq:total_initial}
\end{align}
where $\mathbf{x}[t] = (\mathbf{y}^{\prime}[{t}], \mathbf{z}^{\prime}[{t}], \mathbf{m}^{\prime}[t, \cdot], \mathbf{h}^{\prime}[t,\cdot,\cdot])^{\prime}$, $K$ is the maximum lag order, and $m_0$ is an unknown smooth function. 
The errors $\bm\epsilon[t] \in \mathbb{R}^{d_0}$ are  independent and identically distributed (i.i.d.) random vectors with zero mean, and are independent with the predictors $\mathbf{x}[s]$ for any $s<t$. 
The number of covariates at model \eqref{eq:total_initial} is $K(d_0+d_1+Qd_2+QSd_3)$ and this, together with the nature of mixed frequencies, makes infeasible the direct estimation of $m_0$. A common strategy in the literature is to adopt sequential alignment and prediction: first, a function $g_1$ maps high-frequency series to a medium frequency; second, $g_2$ further aggregates to the low-frequency level; finally, $g_3$ performs forecasting using the fully aligned low-frequency series, often implemented as a VAR model or a neural network. 

This work focuses on the functions $g_1$ and $g_2$ for frequency alignment, which is the core challenge for mixed-frequency data analysis. In the literature, these functions typically take the same form, with two common choices being stacking \citep{ghysels-2016a} and MIDAS \citep{ghysels-2004}. 
In stacking, $g_1=g_2=\mathcal{S}$, where the stacking operation $\mathcal{S}$ transforms each high-frequency series $h_i[t,q,s]$ into $S$ medium-frequency sub-series, and each medium-frequency series (including $m_i[t,q]$'s and the sub-series from $h_i[t,q,s]$'s) into $Q$ low-frequency sub-series. 
In MIDAS, 
$g_1=B(L^{1/S}; \bm{\theta})$ and $g_2 = B(L^{1/Q}; \bm{\theta})$, where $\bm{\theta}$ is an unknown parameter vector and $B(L^{1/m}; \bm{\theta})= \sum_{k=1}^{m} w(k; \bm{\theta}) \, L^{k/m}$ is the MIDAS lag polynomial \citep{ghysels-2007}. The weights \(w(k; \bm{\theta})\) are often specified using the exponential Almon form, $w(k; \bm{\theta}) = \exp\left(\theta_1 k + \theta_2 k^2 + \cdots + \theta_p k^p \right)/\sum_{j=1}^{m} \exp\left(\theta_1 j + \theta_2 j^2 + \cdots + \theta_p j^p \right)$,  or the Beta polynomial form, $w(k; \bm{\theta}) = \left(k/m\right)^{\theta_1 - 1} \left(1 - k/m\right)^{\theta_2 - 1}/\sum_{j=1}^{m} \left(j/m\right)^{\theta_1 - 1} \left(1 - j/m\right)^{\theta_2 - 1}$. 
However, as mixed-frequency data increase in sample size and dimension, these approaches face three major issues:  
\begin{enumerate}
	\item[(I1)] The parametric and pre-specified nature of $g_1$ and $g_2$ typically overlooks potential nonparametric structures, increasing the risk of model misspecification; 
    \item[(I2)] The frequency alignment is usually limited to linear combinations of original mixed-frequency predictors, failing to capture nonlinear dependencies; 
    \item[(I3)] The process inherently relies on dimension reduction of the aligned series, excluding dimension-expansion techniques that could represent richer nonlinear patterns. 
\end{enumerate}

To overcome these issues, we propose to implement $g_1$ and $g_2$ as deep ReLU networks for sequential frequency alignment (high $\to$ medium $\to$ low), followed by a prediction network $g_3$. This sequential design is analogous to the depth-separable architecture in the machine learning literature \citep{chollet-2017,wang-2020}. We therefore name our method the Depth-Separable Neural Network (DSNN). 
The DSNN address (I1)--(I3) directly: the deep networks provide flexible nonparametric approximations, can model nonlinear dependencies, and allow either dimension reduction or expansion through the choice of output dimensions. As for the MIDAS framework, we further incorporate parameter sharing across series within $g_1$ and $g_2$. This sharing mechanism, together with the increasing number of available mixed-frequency series, makes it possible to apply deeper neural networks to $g_1$ and $g_2$.

\subsection{Depth-Separable Neural Network (DSNN)}
\label{subsec:DSNN}

This section presents the DSNN for data with three distinct frequencies. The neural network is designed with three hierarchical stages corresponding to the three frequencies, where the first two stages perform hierarchical frequency alignment via stacking and then factor extraction, and the last stage conducts prediction using neural networks; see Figure \ref{fig:DSNN} for illustration. The methodology can be easily extended to settings with more frequencies, though with more elaborate notation.

The first stage of DSNN aligns high-frequency data to a medium frequency. It first applies the stacking operator $\mathcal{S}_1$ to each high-frequency series $h_i[t,q,s]$ for $i \in [d_3]$, producing $S$ medium-frequency sub‑series $\widetilde{\mathbf{m}}_i[t,q] = (\widetilde{m}_{(i-1)S+1}[t,q], \widetilde{m}_{(i-1)S+2}[t,q], \ldots, \widetilde{m}_{iS}[t,q]) \in \mathbb{R}^{S}$, where $\widetilde{m}_j[t,q]=h_i[t,q,s]$ with $j = (i-1)S + s$.
Subsequently, a shared neural network extracts $r_1$ latent factors from each sub-series $\widetilde{\mathbf{m}}_i[t,q]\in \mathbb{R}^{S}$:
\begin{equation}
	 (f_{i,1}^{(1)}[t,q], \ldots, f_{i,r_1}^{(1)}[t,q])^{\prime} = g_{1,\mathrm{NN}}(\widetilde{\mathbf{m}}_i[t,q]) \in \mathbb{R}^{r_1}, \quad i \in [d_3], t \in [T], q \in [Q],
	\label{eq:FM_1}
\end{equation}
where $g_{1,\mathrm{NN}}: \mathbb{R}^{S} \rightarrow \mathbb{R}^{r_1}$ is a deep ReLU network shared across all the $d_3$ high-frequency series. Specifically, $g_{1,\mathrm{NN}}$ is a fully connected deep neural network with the ReLU activation function $\sigma(x) = \max\{0, x\}$, depth $L \in \mathbb{N}_+$, and width parameter vector $\mathbf{w} = (w_0, w_1, \ldots, w_{L+1}) \in \mathbb{N}_{+}^{L+2}$ with $w_0=S$ and $w_{L+1}=r_1$, which is defined as follows 
\begin{equation}
	g_{1,\mathrm{NN}}(\mathbf{x}; L, \mathbf{w}) = \varphi_{L+1} \circ \sigma_L \circ \varphi_L \circ \ldots \circ \varphi_2 \circ \sigma_1 \circ \varphi_1(\mathbf{x}),
	\label{eq:NN}
\end{equation}
where $\varphi_l(\mathbf{x}) =  \mathbf{W}_l \mathbf{x} + \mathbf{b}_l$ is an affine transformation with the weight matrix $\mathbf{W}_l \in \mathbb{R}^{w_l \times w_{l-1}}$ and bias vector $\mathbf{b}_l \in \mathbb{R}^{w_l}$, and $\sigma_l: \mathbb{R}^{w_l} \to \mathbb{R}^{w_l}$ applies the ReLU activation function to each entry of a $w_l$-dimensional vector. We build our framework using a deep ReLU network due to its great empirical success. Given the large number of parameters in $g_{1,\mathrm{NN}}$, sharing this deep ReLU network across all inputs in the first stage is essential for computational feasibility. 
The factor dimension $r_1$ is a flexible hyper-parameter, where $r_1 \geq S$ enables dimension expansion to capture richer nonlinearities, whereas $r_1 < S$ achieves dimension reduction, which is particularly beneficial for large $S$, e.g., $S=20$ in day-to-month alignment. In practice, $r_1$ is chosen by cross validation; see Section \ref{subsec:hyperparameter} for details.

Let $\mathbf{F}_{i,r}[\cdot,\cdot]=(f_{i,r}^{(1)}[t,q])_{t \in [T], q \in [Q]} \in \mathbb{R}^{T \times Q}$ be the matrix of the $r$-th factor for the $i$-th high-frequency variable. All such matrices are aggregated into a factor tensor $ \cm{F} \in \mathbb{R}^{r_1 d_3 \times T \times Q}$, with its first mode consisting of $ \{\mathbf{F}_{i,r}[\cdot,\cdot], 1 \leq r \leq r_1, 1 \leq i \leq d_3\} $. 
This factor tensor is then concatenated with the original medium-frequency predictor tensor $ \cm{M} \in \mathbb{R}^{d_2 \times T \times Q} $ along the variable mode, forming the augmented medium-frequency predictor tensor $ \bar{\cm{M}} \coloneqq \{\bar{m}_i[t,q], i \in [D_2], t \in [T], q \in [Q]\} \in \mathbb{R}^{D_2 \times T \times Q} $, where $D_2 = d_2 + r_1 d_3$ denotes the total number of variables passed to the second stage.

The second stage of DSNN further aligns all the augmented medium-frequency data in $\bar{\cm{M}}$ to low-frequency one via stacking and factor extraction as in the first stage. Specifically, stacking operator $\mathcal{S}_2$ is applied to each augmented medium-frequency series $\bar{m}_i[t,q]$ for $i \in [D_2]$, generating $Q$ low-frequency sub-series $\widetilde{\mathbf{z}}_i[t] = (\widetilde{z}_{(i-1)Q+1}[t], \widetilde{z}_{(i-1)Q+2}[t], \ldots, \widetilde{z}_{iQ}[t]) \in \mathbb{R}^{Q} $ with $ \widetilde{z}_j[t] = \bar{m}_i[t,q]$ for $ j = (i-1)Q + q $.
Then $r_2$ latent factors are extracted from each series $\widetilde{\mathbf{z}}_i[t]$ via a shared deep ReLU network $g_{2,\mathrm{NN}}$ as follows 
\begin{equation} 
(f_{i,1}^{(2)}[t], \ldots, f_{i,r_2}^{(2)}[t])
= g_{2,\mathrm{NN}}(\widetilde{\mathbf{z}}_i[t]) \in \mathbb{R}^{r_2}, \quad i \in [D_2], t \in [T],
\label{eq:FM_2}
\end{equation} 
where $g_{2,\mathrm{NN}}: \mathbb{R}^{Q} \rightarrow \mathbb{R}^{r_2}$ is defined similarly as for $g_{1,\mathrm{NN}}$ in \eqref{eq:NN}, and is shared across all the $D_2$ medium-frequency series to ensure computational feasibility. Here, the factor dimension $r_2$ plays an analogous role to $r_1$, and can be selected in a similar manner; see Section \ref{subsec:hyperparameter} for further discussion. 
Denote the aggregated factor matrix by $ \mathbf{F} = (\mathbf{f}_{1,1}, \ldots, \mathbf{f}_{1,r_2}, \ldots, \mathbf{f}_{D_2,1}, \ldots, \mathbf{f}_{D_2,r_2})^{\prime} \in \mathbb{R}^{r_2 D_2 \times T}$, where $\mathbf{f}_{i,r} \in \mathbb{R}^{T}$ is the vector of the $r$-th factor for the $i$-th medium-frequency variable. Then concatenate $\mathbf{F}$ with the low-frequency predictors $\mathbf{Z} \coloneqq \{z_i[t], i \in [d_1], t \in [T]\} \in \mathbb{R}^{d_1 \times T}$ and the responses $\mathbf{Y} \coloneqq \{y_i[t], i \in [d_0], t \in [T]\} \in \mathbb{R}^{d_0 \times T}$, producing the final low-frequency data matrix $\bar{\mathbf{Z}} \coloneqq \{\bar{z}_i[t], i \in [D_3], t \in [T]\} \in \mathbb{R}^{D_3 \times T}$, where $D_3 \coloneqq d_0 + d_1 + r_2 d_2 + r_1 r_2 d_3$ denotes the total number of low-frequency variables that serve as the feature inputs for the final predictive stage.

The final stage of the DSNN serves to forecast the low-frequency response $\mathbf{y}[t]$ using the aligned low-frequency predictors $\bar{\mathbf{z}}[t-j]$'s for $j\in [K]$, where $\bar{\mathbf{z}}[t]=(\bar{z}_1[t],\ldots,\bar{z}_{D_3}[t])^{\prime}$. Given the unified frequency of the data, this stage can adopt any model suitable for single-frequency forecasting, such as VAR models. Here, we alternatively consider a neural network, 
\begin{equation}
\widehat{\mathbf{y}}[t] = g_{3,\mathrm{NN}}(\bar{\mathbf{z}}[t-1], \ldots, \bar{\mathbf{z}}[t-K]), \quad K+1 \leq t \leq T,
\label{eq:Reg_3}
\end{equation}
where $g_{3,\mathrm{NN}}:\mathbb{R}^{KD_3} \rightarrow \mathbb{R}^{d_0}$ is a deep ReLU network defined similarly as for $g_{1,\mathrm{NN}}$ in \eqref{eq:NN}.

Combining \eqref{eq:FM_1}--\eqref{eq:Reg_3}, we have $\widehat{\mathbf{y}}[{t}] = m(\mathbf{x}[t-1], \cdots, \mathbf{x}[t-K])$, where $m$ is a composed function in form of  
\begin{align} \label{eq:total}
m = g_{3,\mathrm{NN}} \circ &
\begin{pmatrix}
    g_{2,\mathrm{NN}} \\
    \vdots \\
    g_{2,\mathrm{NN}} 
\end{pmatrix} 
\circ \mathcal{S}_2 \circ 
\begin{pmatrix}
    g_{1,\mathrm{NN}} \\
    \vdots \\
    g_{1,\mathrm{NN}} 
\end{pmatrix} 
\circ \mathcal{S}_1
\coloneqq g_{3,\mathrm{NN}} \ast g_{2,\mathrm{NN}} \ast g_{1,\mathrm{NN}}, \\
&\quad \scriptscriptstyle D_2 \; \text{terms} \quad \quad \quad \quad \scriptscriptstyle D_1 \; \text{terms} \nonumber
\end{align}
with $D_1 = Q d_3$ and $D_2 = d_2 + r_1 d_3$. 
The proposed method models the unknown function $m_0$ using a structured function $m$ in \eqref{eq:total}, which aligns data sequentially from high to medium frequency, then to low frequency, and finally performs prediction. This depth-separable design, combined with parameter sharing across alignment networks, ensures computational scalability even with numerous high-frequency predictors.
Hence, we name it the Depth-Separable Neural Network (DSNN); see Figure \ref{fig:DSNN} for the illustration. 

\begin{figure}[t!]
	\includegraphics[width=1.\textwidth]{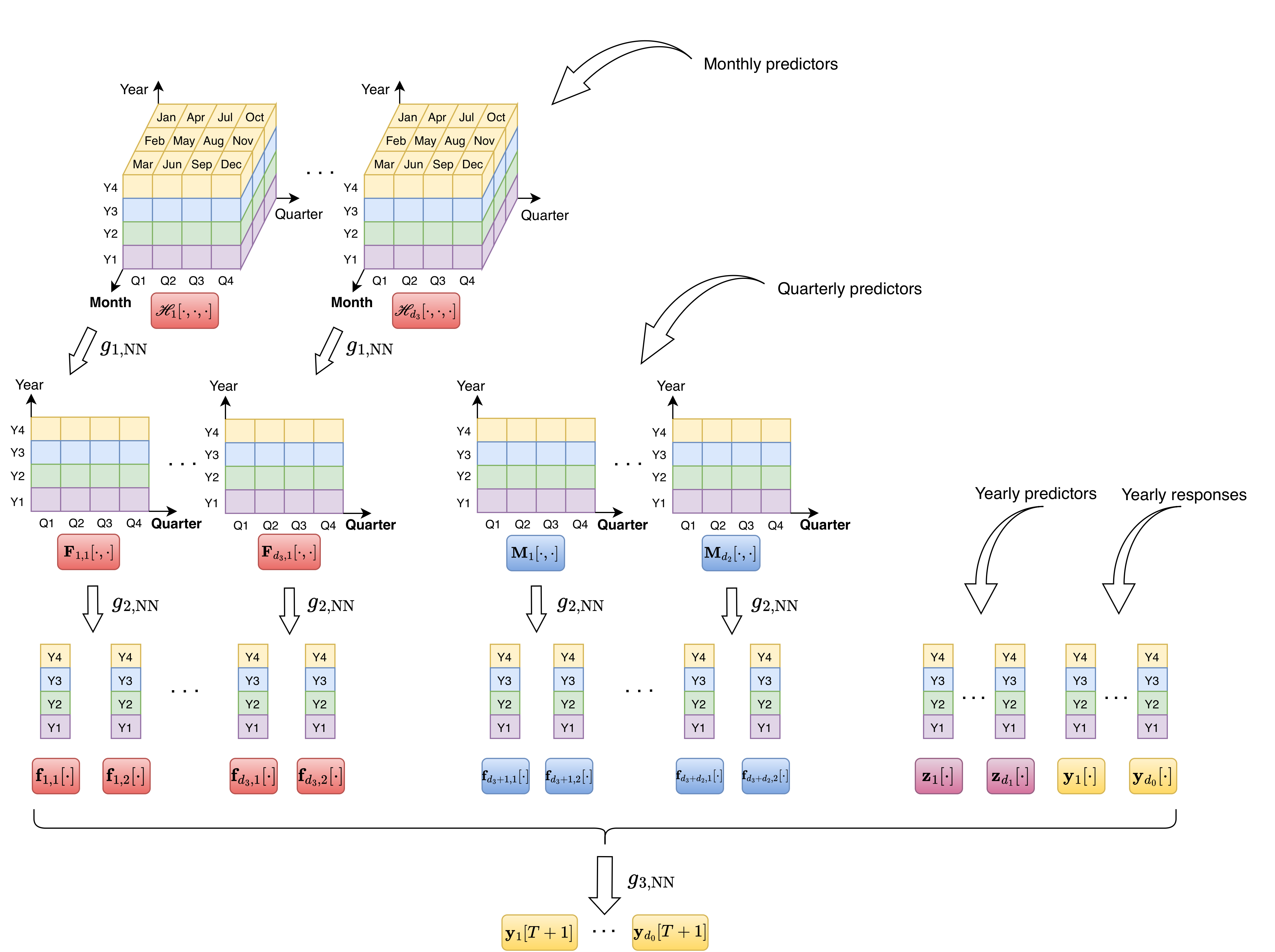}
	\caption{Illustration of the proposed Depth-Separable Neural Network (DSNN). The $\mathcal{H}_i[\cdot,\cdot,\cdot]=\{h_i[t,q,s], t \in [T], q \in [Q], s \in[S]\}$ denotes the high-frequency data tensor, and $\bm{M}[\cdot,\cdot]=\{m_i[t,q], t \in [T], q \in [Q]\}$ denotes the medium-frequency data matrix. The bold axis represents the time dimension to be transformed by the deep ReLU network.}
	\label{fig:DSNN}
\end{figure}

The proposed DSNN offers greater flexibility and generality than existing frameworks, owing to its highly nonlinear components $g_{j,\mathrm{NN}}$'s and the adaptable factor dimensions $r_i$'s. It addresses the key issues outlined in (I1)–(I3) while subsuming a wide range of existing methods as special cases, thereby establishing a unified framework for nonlinear mixed-frequency analysis. Notably, the DSNN reduces to several classical methods under specific constraints. For instance, 
(i) when $g_{1,\mathrm{NN}}$ and $g_{2,\mathrm{NN}}$ are restricted to linear mappings with fixed lag polynomials and $g_{3,\mathrm{NN}}$ to linear predictor, it recovers MIDAS-type regressions, including standard MIDAS \citep{ghysels-2007,andreou-2010}, MIDAS with an autoregressive term \citep{clements-2008}, and factor‑augmented distributed‑lag MIDAS \citep{andreou-2013}; 
(ii) if $g_{1,\mathrm{NN}}$ and $g_{2,\mathrm{NN}}$ reduce to the identity mapping and $g_{3,\mathrm{NN}}$ to a linear mapping, the model simplifies to stacked-series regression as in (2.1) of \citet{ghysels-2016a}.
Unlike the aforementioned models, which rely on linear or parsimonious nonlinear forms with few parameters, the DSNN employs trainable deep networks that can adaptively learn rich nonlinear representations from data. This flexibility allows it to capture complex temporal and cross-sectional dependencies that are often pre‑specified or omitted in traditional approaches. Consequently, the DSNN is expected to improve forecasting accuracy, particularly in settings where the data generating process exhibits strong nonlinearities. 

To estimate the function $m$ in \eqref{eq:total}, we consider the least squares estimation (LSE) method. Specifically, define the empirical risk function as follows
\begin{align}\label{eq:empirical risk}
\mathcal{R}_T(m) 
= \frac{1}{T-K}
\sum_{t=K+1}^{T}
\|
\mathbf{y}[t] - m({\mathbf{x}}[t-1], \cdots, {\mathbf{x}}[t-K])
\|_2^2.
\end{align}
The corresponding estimator of $m$ is then defined to minimize $\mathcal{R}_T(m)$, and it can be calculated by the gradient decent; see Algorithm \ref{algorithmDSNN} in Section \ref{subsec:hyperparameter} for details.

\subsection{Implementary Issues}\label{subsec:hyperparameter}


This subsection first introduces an algorithm for optimization at Section \ref{subsec:DSNN}. 
The estimation of unknown function $m$ reduces to a least squares optimization over the parameters of the deep ReLU networks $g_{j,\mathrm{NN}}$ for $j\in [3]$, which is equivalent to estimating all weight matrices $\mathbf{W}_{j,l}$'s and bias vectors $\mathbf{b}_{j,l}$'s in these networks. Denote by $\bm{\theta}_j \coloneqq \big( \operatorname{vec}(\mathbf{W}_{j,1})^{\prime},\, \mathbf{b}_{j,1}^{\prime},\, \ldots,\, \operatorname{vec}(\mathbf{W}_{j,L+1})^{\prime},\, \mathbf{b}_{j,L+1}^{\prime} \big)^{\prime}$ the parameter vector of $g_{j,\mathrm{NN}}$. The empirical risk in \eqref{eq:empirical risk} is minimized with respect to $\bm{\theta}_j$'s by using stochastic gradient descent with the Adam optimizer \citep{kingma-2014}, and it is implemented via the \textit{torch.optim} package. Adam is chosen for its adaptive learning-rate mechanism, which uses running estimates of the gradient's first moment ($\bm{\mu}_j$) and second moment ($\mathbf{v}_j$) for each network $g_{j,\mathrm{NN}}$. This approach enhances stability and convergence in high‑dimensional, ill-conditioned optimization settings. To ensure robustness and avoid overfitting, standard practices such as mini‑batching and early stopping are employed throughout training. The optimization procedure for LSE of DSNN is summarized in Algorithm \ref{algorithmDSNN}.

\begin{algorithm}
	\caption{Gradient descent algorithm for the least squares optimization problem}\label{algorithmDSNN}
	\begin{algorithmic}[1]
		\State \textbf{Input:} Factor dimensions $r_1, r_2$, lag order $K$, depth $L_j$ and width $\mathbf{w}_j$ for $g_{j,\mathrm{NN}}$ with $j\in [3]$, learning rate $\alpha$, decay rates $\beta_1, \beta_2 \in [0, 1)$, stability constant $\epsilon > 0$, max iterations $I$, and tolerance $\delta > 0$.
		\State Initialize $\bm{\theta}^{(0)}_j$ via Kaiming Uniform, $\bm{\mu}^{(0)}_j,\mathbf{v}^{(0)}_j \gets \bm{0}$ for $j\in [3]$.
		
		\For{$t=1$ to $I$}
		\State Forward: $m^{(t)} \gets g_{3,\mathrm{NN}}(g_{2,\mathrm{NN}}(g_{1,\mathrm{NN}}(\bm{x}; L_1, \mathbf{w}_1, \bm{\theta}^{(t-1)}_1); L_2, \mathbf{w}_2, \bm{\theta}^{(t-1)}_2); L_3, \mathbf{w}_3, \bm{\theta}^{(t-1)}_3)$.
		\State Gradients: $\nabla \mathbf{g}_{j}^{(t)} \gets \nabla_{\bm{\theta}_j}\mathcal{R}_T(m^{(t)})$ for $j=3,2,1$.
		\For{$j=3$ down to $1$}
		\State $\bm{\mu}^{(t)}_j \gets \beta_1 \bm{\mu}^{(t-1)}_j+(1-\beta_1)\nabla \mathbf{g}_{j}^{(t)}$,\quad $\mathbf{v}^{(t)}_j \gets \beta_2 \mathbf{v}^{(t-1)}_j+(1-\beta_2)\nabla \mathbf{g}_{j}^{(t)}\odot \nabla \mathbf{g}_{j}^{(t)}$;
		\State $\widetilde{\bm{\mu}}^{(t)}_j \gets \bm{\mu}^{(t)}_j/(1-\beta_1)$,\quad $\widetilde{\mathbf{v}}^{(t)}_j \gets \mathbf{v}^{(t)}_j/(1-\beta_2)$;
		\State $\bm{\theta}^{(t)}_j \gets \bm{\theta}^{(t-1)}_j - \alpha \widetilde{\bm{\mu}}^{(t)}_j / \left(\sqrt{\widetilde{\mathbf{v}}^{(t)}_j} + \epsilon \mathbf{1}\right)$. 
		\EndFor
		\If{$\sum_{j=1}^3\|\nabla \mathbf{g}_{j}^{(t)}\|<\delta$} \textbf{break} \EndIf
		\EndFor
		
		\State \Return $\bm{\theta}^{(t)}_1, \bm{\theta}^{(t)}_2$, and $\bm{\theta}^{(t)}_3$.
	\end{algorithmic}
\end{algorithm}

We then consider the selection of hyper-parameters, including factor dimensions $r_1, r_2$ and lag order $K$. They are selected jointly through cross-validation to optimize out-of-sample forecasting performance. Specifically, the data are divided into a training set of length $T_{\mathrm{train}}$ and a validation set of length $T_{\mathrm{val}}$. Using a rolling forecast procedure with an expanding window, the DSNN is re-estimated using data up to time $T_{\mathrm{train}} + t - 1$, and a one-step-ahead forecast is generated at each step $t \in [T_{\mathrm{val}}]$. The optimal combination $(r_1, r_2, K)$ is chosen by minimizing the Root Mean Square Forecast Error (RMSFE) on the validation set, which is searched over $1 \leq r_1 \leq r_{1,\max}$, $1 \leq r_2 \leq r_{2,\max}$, and $1 \leq K \leq K_{\max}$. Here, $r_{1,\max}$ and $r_{2,\max}$ are pre-specified upper bounds for the factor dimensions, and $K_{\max}$ is a predetermined maximum lag.


\section{Theoretical Properties}\label{sec:theory}

This section establishes theoretical results for the proposed DSNN, including an approximation theory of hierarchical composition models by the DSNN in Proposition \ref{prop:approximation_our}, and a generalization error bound for the DSNN estimated by LSE in Proposition \ref{prop:generalization_our}. These two together lead to a non-asymptotic prediction error bound in Theorem \ref{thm:1}.

The DSNN is defined by the function $m$ in \eqref{eq:total}, a composition of deep ReLU networks with stacking. We first define a function class for $m$. 

\begin{definition}[The DSNN class]\label{def:DNN}
	For any $L \in \mathbb{N}_+$ and $\mathbf{w} = (w_0, w_1, \ldots, w_{L+1}) \in \mathbb{N}_{+}^{L+2}$, the class of deep ReLU networks with depth $L$ and width parameter $\mathbf{w}$ can be defined as 
	\begin{equation}\label{eq:DNN_0} 
		\mathcal{G}(L, \mathbf{w}) = \left\{ g : \mathbb{R}^{w_0} \to \mathbb{R}^{w_{L+1}} \;\Big|\; g(\mathbf{x}; L, \mathbf{w}) = \varphi_{L+1} \circ \sigma_L \circ \varphi_L \circ \ldots \circ \varphi_2 \circ \sigma_1 \circ \varphi_1(\mathbf{x}) \right\}, 
	\end{equation} 
	where $\varphi_l$'s and $\sigma_l$'s are defined as in \eqref{eq:NN}. 
	We denote it as $\mathcal{G}(L, d_{\mathrm{in}}, d_{\mathrm{out}}, w)$ if $\mathbf{w} = (d_{\mathrm{in}}, w, \ldots, w, d_{\mathrm{out}})$, which is referred as deep ReLU network with depth $L$ and width $w$. 
	We then further define the function class for DSNNs below, 
	\begin{align*}
		\mathcal{F}_{\mathrm{DSNN}} = \left\{m= g_{3,\mathrm{NN}} \ast g_{2,\mathrm{NN}} \ast g_{1,\mathrm{NN}} \;\Big|\; g_{j,\mathrm{NN}}\in \mathcal{G}(L_j,{d}_{\mathrm{in},j}, {d}_{\mathrm{out},j}, w_j) \;\text{for}\; j =[3] \right\},
	\end{align*}
	where $L_j \in \mathbb{N}_+$ are the depths, $w_j \in \mathbb{N}_+$ are the widths, $(d_{\mathrm{in},1}, d_{\mathrm{in},2}, d_{\mathrm{in},3})=(S, Q, KD_3)$ are the input dimensions, and $(d_{\mathrm{out},1}, d_{\mathrm{out},2}, d_{\mathrm{out},3}) =(r_1, r_2, d_0)$ are the output dimensions.
\end{definition}

We next define a class of smooth functions for $m_0$ at model \eqref{eq:total_initial}.    
Given the hierarchical structure of $m$, it is natural to consider a class of hierarchical composition models \citep{kohler-2021}, and it is defined via a concept of $(p, C)$-smooth functions. 

\begin{definition}[$(p, C)$-smooth function]\label{p-C-smooth}
	Let $C > 0$, and $p = r + s$ for some $r \in \mathbb{N}_0$ and $0 < s \leq 1$. A function $f: \mathbb{R}^d\to \mathbb{R}$ is called $(p, C)$-smooth if for every  $\bm{\alpha}= (\alpha_1,\ldots,\alpha_d) \in \mathbb{N}_0^d$ with $\sum_{j=1}^d \alpha_j = r$, the partial derivative  
	$\partial^r f/(\partial x_1^{\alpha_1} \cdots \partial x_d^{\alpha_d})$
	exists and satisfies for all $\mathbf{x},\mathbf{z}\in \mathbb{R}^d$, 
	\begin{align*}
		\left| \frac{\partial^r f}{\partial x_1^{\alpha_1} \cdots \partial x_d^{\alpha_d}}(\mathbf{x}) - \frac{\partial^r f}{\partial x_1^{\alpha_1} \cdots \partial x_d^{\alpha_d}}(\mathbf{z}) \right| \leq C \|\mathbf{x} - \mathbf{z}\|_2^s. 
	\end{align*}
\end{definition}

\begin{definition}[Hierarchical composition model]\label{def:HCM}
	The hierarchical composition model $ \mathcal{H}(d, l, \mathcal{P}) $, with $ l, d \in \mathbb{N}_+ $ and $ \mathcal{P} \subseteq [1, \infty) \times \mathbb{N}_+ $ satisfying 
	$\sup_{(p, t) \in \mathcal{P}} (p \vee t) < \infty$,  
	is defined recursively: 
	\begin{align*}
		\text{For } l = 1, \quad 
		\mathcal{H}(d, 1, \mathcal{P}) = &
		\{ h : \mathbb{R}^{d} \to \mathbb{R}, 
		h(\boldsymbol{x}) = \phi(x_{\pi(1)}, \ldots, x_{\pi(t)}), 
		\text{where } \phi : \mathbb{R}^t \to \mathbb{R} \\ 
		&\text{ is } (p, C)\text{-smooth for some } (p, t) \in \mathcal{P}\ \text{and } \pi : [t] \to [d] \}.
	\end{align*}
	\begin{align*}
		\text{For } l > 1, \quad 
		\mathcal{H}(d, l, \mathcal{P}) = &
		\{ h : \mathbb{R}^{d} \to \mathbb{R}, 
		h(\boldsymbol{x}) = \phi(f_1(x), \ldots, f_t(x)),\ 
		\text{where } \phi : \mathbb{R}^t \to \mathbb{R} \\
		&\text{ is } (p, C)\text{-smooth for some } (p, t) \in \mathcal{P}\ \text{and } f_i \in \mathcal{H}(d, l-1, \mathcal{P}) \}.
	\end{align*}
\end{definition}

For each $g_j: \mathbb{R}^{d_{\mathrm{in},j}} \to \mathbb{R}^{d_{\mathrm{out},j}}$ with $j =[3]$, denote its $i$-th component by $g_{j,i}\in \mathbb{R}$ with $i\in [d_{\mathrm{out},j}]$.
We then can define the class of smooth functions for $m_0$: 
\begin{align*}
	\mathcal{F}_* = \left\{ 
		m_0= g_3 \ast g_2 \ast g_1 \;\Big|\; 
		g_{j,i}\in \mathcal{H}({d}_{\mathrm{in},j}, l_j, \mathcal{P}) 
		\;\;\text{for}\;\; j =[3] \;\;\text{and}\;\; i\in [{d}_{\mathrm{out},j}]
	\right\}.
\end{align*}
This class provides a general and flexible framework for characterizing the smoothness of $m_0$, accommodating different smoothness and order constraints within the same compositional level \citep{kohler-2021}.
This smoothness condition is standard in nonparametric analysis and ensures that $m_0$ can be effectively approximated by deep ReLU networks. 
Building on these specifications for $m$ and $m_0$, we establish an approximation theory for the DSNN over a broad class of hierarchical composition models, under the boundedness conditions summarized below.

\begin{assumption}[Boundedness]\label{assump:Boundness}
(i) The support of $\mathbf{x}[t]$ is bounded; 
(ii) The true function $m_0$ is bounded, i.e. $\|m_0\|_{\infty} \leq B$;
(iii) The $\ell_2$-norms of weight matrices and bias vectors in the deep ReLU networks $g_{j,\mathrm{NN}}$ are bounded, i.e., $\max\{\|\mathbf{W}_{j,l}\|_2,\|\mathbf{b}_{j,l}\|_2\} < U_{j,l} < \infty$ and $l \in [L_j+1]$ for $j\in [3]$. Moreover, there exist constants $M_j>0$ such that $\prod_{l=1}^{L_j+1}  U_{j,l} \leq M_j$.  
\end{assumption}

Assumption \ref{assump:Boundness}(i) imposes compact support on the predictors and responses, which is standard in the theory of neural networks and nonparametric estimation. The bounded restriction on the underlying function $m_0$ in Assumption \ref{assump:Boundness}(ii) is a common condition in the literature on nonparametric regression and deep learning \citep{farrell-2021,caner-2022,kengne-2023}. 
Bounding the $\ell_2$-norms of weight matrices and bias vectors in Assumption \ref{assump:Boundness}(iii) is a mild requirement that controls the complexity of the network. Similar $\ell_\infty$-norm bounds are often used interchangeably \citep{kengne-2023,xiu-2024}. A more restrictive condition with $\ell_\infty$-norm bounded by one, which resembles nearly orthogonal weight matrices, has been considered under orthogonal initialization \citep{schmidt-hieber-2020}.
In our DSNN, because the ReLU activation is 1-Lipschitz, the product bound $M_j$ guarantees stable error propagation across the multi-depth architecture; see also a similar stability condition in \citet{feng-2023}. Note that Assumptions \ref{assump:Boundness}(i) and (iii) ensure the boundedness of the function $m$, a standard requirement in neural network theory \citep{schmidt-hieber-2020, fan-2023, fan-2024, xiu-2024}.
Under the aforementioned smoothness and boundedness conditions, we establish the approximation error in Proposition \ref{prop:approximation_our}.

\begin{proposition}[Approximation error bound] \label{prop:approximation_our}
Suppose that Assumption \ref{assump:Boundness} holds.  
Then there exist universal constants $c_1, c_2, c_3 > 0$ (depending only on $l$, $\sup_{(p,t) \in \mathcal{P}} \max\{p, t\}$, and the constant $C$ in Definition \ref{def:HCM}), such that for any $\bar{L}_j, \bar{w}_j \geq 3$ with $j =[3]$,
\begin{align*}
&\sup_{m_0\in\mathcal{F}_{\ast}} \inf_{m\in\mathcal{F}_{\mathrm{DSNN}}} \|m - m_0\|_{\infty} 
\leq & c_1 \Big( \sqrt{r_1 r_2} M_2 M_3 (\bar{L}_1 \bar{w}_1)^{-2 \gamma^*} 
    + \sqrt{r_2} M_3 (\bar{L}_2 \bar{w}_2)^{-2 \gamma^*} 
    + (\bar{L}_3 \bar{w}_3)^{-2 \gamma^*} \Big),
\end{align*}
where $\gamma^* = p^*/t^*$ with $(p^*, t^*) = \underset{(p,t) \in \mathcal{P}}{\arg\min} p/t$, and the corresponding depths and widths of the deep ReLU networks $g_{j,\mathrm{NN}}$ are given by $L_j = c_2 \big[\bar{L}_j \log \bar{L}_j\big]$ and $w_j = c_3d_{\mathrm{out},j}\big[\bar{w}_j \log \bar{w}_j\big]$. 
\end{proposition}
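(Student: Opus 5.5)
The plan is to treat the approximation of each component network separately and then propagate the errors through the composition $m = g_{3,\mathrm{NN}} \ast g_{2,\mathrm{NN}} \ast g_{1,\mathrm{NN}}$ by a telescoping argument.

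\textbf{Step 1: approximate each stage.} Fix $m_0 = g_3 \ast g_2 \ast g_1 \in \mathcal{F}_*$. For each $j\in[3]$ and each coordinate $i\in[d_{\mathrm{out},j}]$, the function $g_{j,i}$ lies in $\mathcal{H}(d_{\mathrm{in},j}, l_j, \mathcal{P})$. I will invoke the approximation theorem of \citet{kohler-2021} for hierarchical composition models. It states that for $\bar L_j,\bar w_j\ge 3$ there is a ReLU network of depth $\asymp \bar L_j\log\bar L_j$ and width $\asymp \bar w_j\log \bar w_j$ with sup-norm error at most $c(\bar L_j\bar w_j)^{-2\gamma^*}$ on a bounded cube. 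The bounded cube is supplied by Assumption \ref{assump:Boundness}(i) for stage 1. For later stages it is supplied by the boundedness of the true intermediate outputs: $g_1, g_2$ are continuous on compacts, so their images are bounded. I then stack the $d_{\mathrm{out},j}$ scalar networks in parallel into one network $\widehat g_j\in\mathcal{G}(L_j,d_{\mathrm{in},j},d_{\mathrm{out},j},w_j)$. This is where the factor $d_{\mathrm{out},j}$ in $w_j$ comes from; depths are equalized by padding with identity layers $x=\sigma(x)-\sigma(-x)$. The result is $\|\widehat g_j-g_j\|_\infty\le c(\bar L_j\bar w_j)^{-2\gamma^*}$ componentwise, hence $\|\widehat g_j-g_j\|_{2}\le c\sqrt{d_{\mathrm{out},j}}(\bar L_j\bar w_j)^{-2\gamma^*}$ in Euclidean norm.

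\textbf{Step 2: telescoping decomposition.} I write
\[
m-m_0=(\widehat g_3\ast\widehat g_2\ast\widehat g_1-\widehat g_3\ast\widehat g_2\ast g_1)+(\widehat g_3\ast\widehat g_2\ast g_1-\widehat g_3\ast g_2\ast g_1)+(\widehat g_3\ast g_2\ast g_1-g_3\ast g_2\ast g_1).
\]
The third term is bounded directly by Step 1, giving $(\bar L_3\bar w_3)^{-2\gamma^*}$. For the other two, I use that a ReLU network whose weights satisfy Assumption \ref{assump:Boundness}(iii) is Lipschitz in $\ell_2$ with constant $\prod_l\|\mathbf W_{j,l}\|_2\le M_j$, since $\sigma$ is 1-Lipschitz.
\begin{itemize}
\item For the second term, changing $g_2$ to $\widehat g_2$ perturbs each of the $r_2$ outputs per series by at most the stage-2 error. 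Passing through $\widehat g_3$ then multiplies the perturbation by $M_3$, giving $\sqrt{r_2}M_3(\bar L_2\bar w_2)^{-2\gamma^*}$.
\item For the first term, the stage-1 error in the $r_1$ factors is amplified by $M_2$ through each copy of $\widehat g_2$. Each copy yields $r_2$ outputs, and the result is then amplified by $M_3$. This gives $\sqrt{r_1r_2}M_2M_3(\bar L_1\bar w_1)^{-2\gamma^*}$.
\end{itemize}

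\textbf{Main obstacle.} The delicate point is the bookkeeping of dimension factors and the norms in which Lipschitz constants are measured. Each shared network is applied to many series ($D_1$ and $D_2$ copies), and stacking $\mathcal S_1,\mathcal S_2$ is a coordinate permutation. A naive $\ell_2$ bound over all copies would produce factors such as $\sqrt{D_2}$ rather than $\sqrt{r_2}$. My first attempt would be to track each series separately: the final output error depends on the perturbation via $\widehat g_3$, and I would bound $\widehat g_3$'s sensitivity as the absorbing constant. The aim is that only the per-series dimensions $r_1,r_2$ enter, with $d$-dependent factors absorbed into $c_1$ through $\|\cdot\|_2\le\sqrt{d}\|\cdot\|_\infty$. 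If that fails, I would accept constants depending on $K,D_3$ inside $c_1$.

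A secondary issue is that the approximating networks from \citet{kohler-2021} must satisfy the weight bounds in (iii). I would simply take the $U_{j,l}$ large enough, and hence $M_j$, consistent with the statement's dependence on $M_2,M_3$. Finally, I take the supremum over $m_0$, using that all constants are uniform over $\mathcal F_*$.
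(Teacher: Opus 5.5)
Your route is the paper's: the same telescoping order, the same Lipschitz propagation through $\widehat g_2$ and $\widehat g_3$ (1-Lipschitz ReLU and $\prod_l U_{j,l}\le M_j$), and the same coordinatewise approximation stacked block-diagonally (Lemma \ref{lmma:approximation}). One correction of source: the rate $(\bar L\bar w)^{-2\gamma^*}$ with depth $\asymp\bar L\log\bar L$ and width $\asymp\bar w\log\bar w$ taken \emph{jointly} is Proposition~3.4 of \citet{fan-2024}, which the paper uses. The Kohler--Langer theorem is stated separately for wide-shallow or deep-narrow networks.

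The genuine gap is the step you flag yourself, and your proposed fix does not close it. A copy of $\widehat g_2$ does not receive ``the $r_1$ factors''. Its input is a $Q$-vector: one factor of one high-frequency variable across the $Q$ medium periods. So that input moves by at most $\sqrt{Q}\,\epsilon_1$ in $\ell_2$, and the copy's output by at most $M_2\sqrt{Q}\,\epsilon_1$. The input of $\widehat g_3$ contains $Kr_1d_3$ such perturbed $r_2$-blocks at once. Your first term is therefore bounded by $M_3\sqrt{Kr_1d_3}\,M_2\sqrt{Q}\,\epsilon_1=\sqrt{Kr_1D_1}\,M_2M_3\,\epsilon_1$. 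Your second term is bounded by $\sqrt{Kr_2D_2}\,M_3\,\epsilon_2$, since all $Kr_2D_2$ stage-two outputs move by up to $\epsilon_2$.

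``Tracking each series separately'' cannot remove the factors $\sqrt{KD_1}$ and $\sqrt{KD_2}$. Assumption \ref{assump:Boundness}(iii) controls $\widehat g_3$ only through $\ell_2$ operator norms on its whole $KD_3$-dimensional input. The estimate $M_3\|\delta\|_2$ cannot be improved in general: a last stage acting linearly along the perturbation $\delta$ attains it. This route therefore yields only your fallback, with prefactors depending on $K$, $D_1$, $D_2$. That contradicts the claim that $c_1$ depends only on $l$, $\sup_{(p,t)\in\mathcal P}\max\{p,t\}$ and $C$.

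You are not missing an idea the paper supplies; its proof has the same defect. It obtains $\sqrt{r_1}$ by treating $g_1(\mathbf x)\in\mathbb R^{r_1}$ as the input of $g_{2,\mathrm{NN}}$, whose input dimension is $Q$. It then propagates a single perturbed copy into $g_{3,\mathrm{NN}}$. The honest conclusion of the shared argument is the bound with the extra dimension factors above.

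Two further steps you pass over quickly are also only asserted in the paper:
\begin{itemize}
\item Choosing the $U_{j,l}$ ``large enough'' to contain the approximating networks makes $M_2$ and $M_3$ depend on $\bar L_j,\bar w_j$. Nothing in the cited construction keeps the weights within fixed bounds, so the right-hand side is no longer a clean rate.
\item Definition \ref{p-C-smooth} does not bound function values. The images of $g_1,g_2$ are bounded for each fixed $m_0$, but not uniformly over $\mathcal F_*$. Your final step, taking the supremum over $m_0$ with uniform constants, needs that uniformity.
\end{itemize}
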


Proposition \ref{prop:approximation_our} establishes that the DSNN can approximate the function $m_0$ in the hierarchical composition model class $\mathcal{H}$. This result builds on the approximation rate of $(Lw)^{-2p/t}$ in Proposition 3.4 of \citet{fan-2024} for a deep ReLU network, by propagating the error through the multi-stage architecture of the DSNN. Consequently, the resulting error bound depends on the depths $L_j$ and widths $w_j$ of the three deep ReLU networks, as well as the factor dimensions ($r_1,r_2$) and the product bounds on the network weight matrices.

We next consider the generalization error for DSNN. Define the population risk function as $\mathcal{R}(m)=\mathbb{E}\left\|\mathbf{y}[t] - m({\mathbf{x}}[t-1], \ldots, {\mathbf{x}}[t-K])\right\|_2^2$, corresponding to the empirical risk function $\mathcal{R}_T(m)$ in \eqref{eq:empirical risk}. 
To bound the generalization error $\mathcal{R}_T(m)-\mathcal{R}(m)$ for any $m\in \mathcal{F}_{\mathrm{DSNN}}$, we need to characterize the dependence structure of the time series process $\{\mathbf{x}[t]\}_{t=1}^{T}$. 
We quantify this via the $\theta_{\infty,k}$-weak dependence coefficients introduced at Definition 2.3 of \citet{dedecker-2007}, which provides a mixing measure weaker than the classical $\gamma$-mixing and is well-suited for bounded processes.

\begin{definition}[$\theta_{\infty,k}$-weak dependence]\label{def:weak_dependence}  
	Let $\{\mathbf{x}[t]\}_{t=1}^{T}$ be a bounded, stationary, and ergodic process. 
	For any Lipschitz function $f$ and any $k \in \mathbb{N}_+$, the $\theta_{\infty,k}$-weak dependence coefficients are defined as
	\begin{align*}
		\theta_{\infty,k} := \sup_{f,\, 0<j_1<\ldots<j_k} 
		\Big\| 
		\mathbb{E}\big[f(\mathbf{x}_{j_1}, \ldots, \mathbf{x}_{j_k}) \mid \mathbf{x}_{t}, t \leq 0\big] 
		- \mathbb{E}[f(\mathbf{x}_{j_1}, \ldots, \mathbf{x}_{j_k})] 
		\Big\|_{\infty},
	\end{align*}
	where $\|\cdot\|_{\infty}$ denotes the essential supremum of a random variable.
\end{definition}

\begin{assumption}[Stationarity and weak dependence]\label{assump:Weakly dependence}
	The process $\{\mathbf{x}[t]\}_{t=1}^{T}$ is stationary and ergodic, and its $\theta_{\infty,k}$-weak dependence coefficients satisfy $\theta_{\infty,k} < \infty$ for all $k > 0$. 
\end{assumption}

Assumption \ref{assump:Weakly dependence} enables the analysis of generalization error under $\theta_{\infty,k}$-weak dependence; see also \citet{kengne-2023}. This condition is notably weaker than the $\beta$-mixing assumption adopted in some related works such as \cite{feng-2023}.

\begin{proposition}[Generalization error bound]\label{prop:generalization_our}
	Suppose that Assumptions \ref{assump:Boundness}--\ref{assump:Weakly dependence} hold.  
	For all $n \in \mathbb{N}_+$ and $\nu_0 \in (0,1]$, there exists a constant $c_4> 0$ such that
	\begin{align*}
		\mathbb{E}\sup_{m\in \mathcal{F}_{\mathrm{DSNN}}} 
		\left| \mathcal{R}(m) - \mathcal{R}_T(m) \right| 
		\leq & c_4 T^{-\nu_0} \log(T)\,
		L^{\ast} \big(D_1 L_1 w_{1}^2 + D_2 L_2 w_{2}^2 + L_3 w_{3}^2\big) 
		+ T^{\nu_0 -1} (\theta_{\infty, T} + U_x)^2,
	\end{align*}
	where $L^{\ast} = L_1 + L_2 + L_3$, $\theta_{\infty, T}$ is defined as in Definition \ref{def:weak_dependence}, and $U_x \coloneqq \sup_t\|\mathbf{x}[t] \|_2$.
\end{proposition}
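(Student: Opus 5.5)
The plan is to reduce the uniform deviation to a covering-number bound for $\mathcal{F}_{\mathrm{DSNN}}$ plus a concentration inequality for a single bounded Lipschitz functional of a $\theta_{\infty,k}$-weakly dependent process. The two pieces are then balanced through a discretization scale tied to $T^{-\nu_0}$.

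\textbf{Step 1: boundedness and Lipschitz structure of the loss.} By Assumption \ref{assump:Boundness}(i),(iii), every $m\in\mathcal{F}_{\mathrm{DSNN}}$ is uniformly bounded: each stage is Lipschitz with constant at most $M_j$, since ReLU is 1-Lipschitz and $\prod_l\|\mathbf{W}_{j,l}\|_2\le M_j$. Together with $\|m_0\|_\infty\le B$ and a bounded error, the loss $\ell_m(\mathbf{y},\mathbf{x})=\|\mathbf{y}-m(\mathbf{x})\|_2^2$ is then bounded by a constant. It is also Lipschitz in its data arguments, with constant proportional to $U_x$ and the $M_j$, and Lipschitz in $m$ for the sup-norm.

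\textbf{Step 2: covering the DSNN class.} Perturbing the parameters of $g_{3,\mathrm{NN}}$, $g_{2,\mathrm{NN}}$ and $g_{1,\mathrm{NN}}$ by $\delta$ in each entry changes $m$ in sup-norm by at most a multiple of $\delta$ times products of the $M_j$ and depths; this is the standard layerwise telescoping argument. Because of parameter sharing, the number of free parameters is only $\sum_j L_j w_j^2$. However, the shared networks $g_{1,\mathrm{NN}}$ and $g_{2,\mathrm{NN}}$ are replicated $D_1$ and $D_2$ times, so their perturbations accumulate in the input to $g_{3,\mathrm{NN}}$. I would therefore track the effective count $D_1L_1w_1^2+D_2L_2w_2^2+L_3w_3^2$, which gives
\[
\log N(\delta,\mathcal{F}_{\mathrm{DSNN}},\|\cdot\|_\infty)\lesssim L^\ast\,(D_1L_1w_1^2+D_2L_2w_2^2+L_3w_3^2)\log(1/\delta).
\]
The factor $L^\ast$ enters through the Lipschitz constant of the composition with respect to the parameters.

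\textbf{Step 3: concentration for fixed $m$.} For a fixed cover element, $\mathcal{R}_T(m)-\mathcal{R}(m)$ is an average of a bounded Lipschitz function of $(\mathbf{y}[t],\mathbf{x}[t-1],\dots,\mathbf{x}[t-K])$. I would apply the Bernstein/Hoeffding-type inequality for $\theta_{\infty,k}$-weakly dependent sequences from \citet{dedecker-2007}, as used in \citet{kengne-2023}. This yields a sub-Gaussian tail of the form $\exp\!\big(-cT\varepsilon^2/(\theta_{\infty,T}+U_x)^2\big)$. A union bound over the cover followed by integration of the tail then gives
\[
\mathbb{E}\sup_m|\mathcal{R}-\mathcal{R}_T|\lesssim \delta+\inf_{\varepsilon}\Big\{\varepsilon+N(\delta)\,e^{-cT\varepsilon^2/(\theta_{\infty,T}+U_x)^2}\Big\}.
\]

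\textbf{Step 4: choosing the scales.} I would take $\delta=T^{-1}$ and choose the threshold $\varepsilon$ of order $T^{-\nu_0}\log N(\delta)$. Using the inequality $ab\le a^2/(4c)+c\,b^2$ in the form $\varepsilon\le T^{-\nu_0}\log N+T^{\nu_0-1}(\theta_{\infty,T}+U_x)^2$, the square-root rate $\sqrt{\log N\,(\theta_{\infty,T}+U_x)^2/T}$ splits into the two displayed terms. The $\log(1/\delta)=\log T$ factor accounts for the $\log(T)$ in the statement.

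\textbf{Main obstacle.} I expect the main difficulty to be Step 2, namely obtaining the stated complexity with the factors $D_1$ and $D_2$ but no extra multiplicative $M_j$ or exponential depth terms. I would handle this by absorbing the products $M_j$ into $c_4$ and bounding the parameter-Lipschitz constant polynomially in $L^\ast$. A secondary check is that the tail inequality of \citet{dedecker-2007} applies to the lag-stacked process under Assumption \ref{assump:Weakly dependence} alone.
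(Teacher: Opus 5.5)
Your argument is correct. Its skeleton is the paper's: a sup-norm cover of $\mathcal{F}_{\mathrm{DSNN}}$, a Hoeffding-type exponential-moment bound for $\theta_{\infty}$-weakly dependent data, a union bound, and integration of the tail. The differences are where the $\nu_0$-dependent form is produced and how the cover is organized.

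The paper fixes the Chernoff parameter at $\lambda=T^{\nu_0-1}$ from the start (Lemma A.2). Its tail is therefore $\mathcal{N}_\infty(\mathcal{F}_{\mathrm{DSNN}},\epsilon/(4G))\exp\{(-T^{\nu_0}\epsilon+T^{2\nu_0-1}(\theta_{\infty,T}+U_x)^2)/2\}$, and the two terms appear when the integration cutoff is optimized at $\epsilon=2T^{-\nu_0}\log\mathcal{N}+T^{\nu_0-1}(\theta_{\infty,T}+U_x)^2$. You instead optimize $\lambda$, obtain a sub-Gaussian tail and the $\nu_0$-free rate $(\theta_{\infty,T}+U_x)\sqrt{\log N(\delta)/T}$, and only then split it by weighted AM--GM. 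A fixed-$\lambda$ Chernoff exponent is the tangent-line relaxation of the quadratic one, so this is the same inequality used at a different moment. Your version buys two things: it implies the stated bound for all $\nu_0\in(0,1]$ at once, and the weighted AM--GM lets you load every constant onto the first term so the coefficient of $T^{\nu_0-1}(\theta_{\infty,T}+U_x)^2$ stays at most one.

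One correction to your Step 4 wording: the cutoff must sit at the sub-Gaussian scale, e.g.\ $(\theta_{\infty,T}+U_x)\sqrt{(\log N(\delta)+\log T)/(cT)}$. A cutoff of order $T^{-\nu_0}\log N(\delta)$ alone is in general too small for the union bound when $\nu_0>1/2$.

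On the covering side, the paper embeds the DSNN into one depth-$L^\ast$ ReLU network with block-diagonal weights (Lemma A.5) and covers every replicated block separately; that is where $D_1$ and $D_2$ come from. Its parameter-Lipschitz constant is of order $U_x\bar U^{L^\ast}$, exponential in depth, and combined with the scale $\epsilon_T=T^{-L^\ast}$ this gives the factor $L^\ast\log(\bar UT)$.

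So the obstacle you anticipate is not one. An exponential-in-depth Lipschitz constant enters only through the logarithm, and that is exactly what produces $L^\ast$; you do not need a polynomial-in-$L^\ast$ bound. The $D_j$ factors are also slack rather than something to track. Replicating the shared networks only multiplies the Lipschitz constant by a factor polynomial in $K$ and the $D_j$, which adds a logarithmic term. Covering each shared parameter once already gives an exponent no larger than the paper's block count. With your fixed $\delta=T^{-1}$ the logarithm is $O(L^\ast\log\bar U+\log T)\lesssim L^\ast\log T$, matching the statement.
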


The generalization error bound in Theorem \ref{prop:generalization_our} depends on the complexity of DSNNs and the dependence of data process. 
The first term accounts for the complexity of deep ReLU networks, which grows with their depths and widths. This is a standard finding in nonparametric regression with deep learning, as a more complex architecture tends to induce a larger generalization error. The specific rate originates from covering-number arguments, where the covering number for a deep ReLU network of depth $L$ and width $w$ scales as $L w^2 \log(U^L/\epsilon)$. This scaling is essentially the same as that obtained in \citet{ou-2024}, though our derivation employs a different technical approach. 
The second term quantifies the impact of temporal dependencies under the $\theta_{\infty, T}$-weak dependence framework of \citet{kengne-2023}. As expected, stronger dependence leads to a larger error. This contrasts with the analysis in \citet{feng-2023}, which assumes a stronger $\beta$-mixing and uses an independent-block scheme (with $2\mu_n$ blocks of length $a_n$), thereby incurring an extra error $\mu_n \beta_{a_n}$. Our result, established under the weaker $\theta_{\infty, T}$-mixing condition, therefore provides a less restrictive theoretical guarantee.

We now analyze the prediction error of DSNNs estimated by the least squares method. The LSE of $m$ is defined by minimizing the empirical risk over the DSNN function class: 
\begin{align} \label{eq:LSE}
	\widehat{m}_T &= \arg\min_{m\in \mathcal{F}_{\mathrm{DSNN}}} \mathcal{R}_T(m), 
\end{align} 
where $\mathcal{R}_T(m)$ is defined in \eqref{eq:empirical risk}. Under the squared error loss, the prediction error coincides with the excess risk $\mathcal{R}(\widehat{m}_T) - \mathcal{R}(m_0)$ up to a constant. Here, the underlying true function $m_0$ is the minimizer of population risk over the smooth function class $\mathcal{F}_{\ast}$, i.e. $m_0 = \arg\min_{f\in\mathcal{F}_{\ast}} \mathcal{R}(f)$. For any $m \in \mathcal{F}_{\mathrm{DSNN}}$, the excess risk can be decomposed as
\begin{align*}
	\mathcal{R}(\widehat{m}_T) - \mathcal{R}(m_0) &= \mathcal{R}(\widehat{m}_T) - \mathcal{R}_T(\widehat{m}_T) + \mathcal{R}_T(\widehat{m}_T) - \mathcal{R}_T(m) 
	+ \mathcal{R}_T(m) - \mathcal{R}(m) + \mathcal{R}(m) - \mathcal{R}(m_0) \nonumber\\
	& \overset{(i)}{\leq} \left\{\mathcal{R}(\widehat{m}_T) - {\mathcal{R}}_T(\widehat{m}_T)\right\} + \left\{{\mathcal{R}}_T(m) - \mathcal{R}(m)\right\} + \left\{\mathcal{R}(m) - \mathcal{R}(m_0)\right\} \nonumber\\
	& \overset{(ii)}{\leq} 2 \sup_{m \in \mathcal{F}_{\mathrm{DSNN}}} \left| {\mathcal{R}}(m) - \mathcal{R}_T(m) \right| + d_0 \|m - m_0\|_{\infty}^2. 
\end{align*} 
Inequality (i) uses the definition of the LSE, which implies ${\mathcal{R}}_T(\widehat{m}_T) \leq {\mathcal{R}}_T(m)$. 
Inequality (ii) follows because $\mathcal{R}(m) - \mathcal{R}(m_0) \leq \sup_{\mathbf{x}}\|m(\mathbf{x}) - m_0(\mathbf{x})\|_{2}^2 \leq d_0 \|m - m_0\|_{\infty}^2$ holds for any $m \in \mathcal{F}_{\mathrm{DSNN}}$ under the independence condition between innovation $\bm\epsilon[t]$ and predictors $\mathbf{x}[s]$ for $s<t$. 
Hence, it suffices to upper bound $d_0\inf_{m \in \mathcal{F}_{\mathrm{DSNN}}} \|m - m_0\|_{\infty}^2$ for the second term.
Combining the approximation bound in Proposition \ref{prop:approximation_our} and the generalization bound in Proposition \ref{prop:generalization_our}, we obtain the following non-asymptotic prediction error bound for the DSNN.

\begin{theorem}[Prediction error bound]\label{thm:1}
	Suppose that Assumptions \ref{assump:Boundness}--\ref{assump:Weakly dependence} hold. 
	For arbitrary $\nu_0 \in (0,1]$, let $L_j = c_2 \lceil \bar{L}_j \log(\bar{L}_j) \rceil$ and $w_j = c_3 d_{\mathrm{out},j} \lceil \bar{w}_j \log(\bar{w}_j) \rceil$ for $j=[3]$.  
	If we choose 
	\begin{align*}
		\bar{L}_1 \bar{w}_1 \asymp \left(\frac{r_1 r_2 d_0 M_2^2 M_3^2 \, T^{\nu_0}}{D_1 \log^{5}(T)}\right)^{\frac{1}{4\gamma^*+2}}, \quad
		\bar{L}_2 \bar{w}_2 \asymp \left(\frac{r_2 d_0 M_3^2\, T^{\nu_0}}{D_2 \log^{5}(T)}\right)^{\frac{1}{4\gamma^*+2}}, \quad
		\bar{L}_3 \bar{w}_3 \asymp \left(\frac{d_0 T^{\nu_0}}{\log^{5}(T)}\right)^{\frac{1}{4\gamma^*+2}},
	\end{align*}
	then there exists a constant $c_5 > 0$ such that
	\begin{align*} 
		\mathbb{E}\left[ 
		\mathcal{R}(\widehat{m}_T) - \mathcal{R}(m_0)\right]
		&\leq c_5(r_1 r_2 d_0 \bar{M}_2 \bar{M}_3)^{\frac{1}{2\gamma^*+1}} \cdot \left(\frac{\bar{D} \log^5(T)}{T^{\nu_0}}\right)^{\frac{2\gamma^*}{2\gamma^*+1}}
		+ 2T^{\nu_0 -1} (\theta_{\infty, T}+U_x)^2,
	\end{align*} 
	where $\bar{M}_j = \max\{M_j^2,1\}$ for $j=2,3$, $\bar{D}=\max\{D_1,D_2\}$, $\gamma^*$ is defined as in Proposition \ref{prop:approximation_our}, and $\theta_{\infty, T}$ and $U_x$ are defined as in Proposition \ref{prop:generalization_our}.
\end{theorem}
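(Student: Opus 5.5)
We start from the excess-risk decomposition displayed just before the theorem. For any $m\in\mathcal{F}_{\mathrm{DSNN}}$,
\[
\mathcal{R}(\widehat{m}_T)-\mathcal{R}(m_0)\le 2\sup_{m\in\mathcal{F}_{\mathrm{DSNN}}}|\mathcal{R}(m)-\mathcal{R}_T(m)|+d_0\|m-m_0\|_\infty^2 .
\]
Taking expectations and the infimum over $m$, the second term is controlled by Proposition \ref{prop:approximation_our} and the first by Proposition \ref{prop:generalization_our}. The latter already supplies the dependence term $2T^{\nu_0-1}(\theta_{\infty,T}+U_x)^2$ exactly as it appears in the theorem. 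What remains is to balance the approximation term against the network-complexity term.

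For the approximation term, squaring the bound of Proposition \ref{prop:approximation_our} and using $(a+b+c)^2\le 3(a^2+b^2+c^2)$ gives
\[
d_0\inf_m\|m-m_0\|_\infty^2\lesssim r_1r_2d_0M_2^2M_3^2(\bar L_1\bar w_1)^{-4\gamma^*}+r_2d_0M_3^2(\bar L_2\bar w_2)^{-4\gamma^*}+d_0(\bar L_3\bar w_3)^{-4\gamma^*}.
\]
For the complexity term, we substitute $L_j=c_2\lceil\bar L_j\log\bar L_j\rceil$ and $w_j=c_3d_{\mathrm{out},j}\lceil\bar w_j\log\bar w_j\rceil$. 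The depths, widths and their products are then polynomial in $T$, so every logarithm is $\lesssim\log T$. The bookkeeping must allocate $\log T$ from the covering bound, roughly $\log^3$ from $L^*L_jw_j^2$, and one further log. We use the balanced choice $\bar L_j\asymp\bar w_j$ (or fix depth $\bar L_j\ge3$ and let width grow), so that $L_jw_j^2$ is bounded in terms of $(\bar L_j\bar w_j)^2$ times logs. The factor $L^*$ contributes only another logarithm or a comparable power. The aim is a bound of the form
\[
T^{-\nu_0}\log^5(T)\sum_j D_j(\bar L_j\bar w_j)^2, \qquad D_3=1,
\]
where the constants $r_j, d_0$ from $w_j$ are absorbed into $c_5$.

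Each stage $j$ is then a scalar trade-off $A_j x_j^{-4\gamma^*}+B_jx_j^2$, with $x_j=\bar L_j\bar w_j$, $B_j=D_jT^{-\nu_0}\log^5T$, and $A_j$ the corresponding prefactor above. This is minimised at $x_j\asymp(A_j/B_j)^{1/(4\gamma^*+2)}$, which is precisely the prescribed choice, with value
\[
\asymp A_j^{1/(2\gamma^*+1)}B_j^{2\gamma^*/(2\gamma^*+1)}.
\]
Bounding $A_j\le r_1r_2d_0\bar M_2\bar M_3$ (this is where $\bar M_j=\max\{M_j^2,1\}$ is needed, so that the stage-2 and stage-3 prefactors are dominated) and $D_j\le\bar D$ collapses the three terms into $3c\,(r_1r_2d_0\bar M_2\bar M_3)^{1/(2\gamma^*+1)}(\bar D\log^5T/T^{\nu_0})^{2\gamma^*/(2\gamma^*+1)}$. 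The conditions $\bar L_j,\bar w_j\ge3$ hold for $T$ large; for small $T$ the bound is trivial by boundedness, which can be absorbed into $c_5$.

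The main obstacle is the logarithmic bookkeeping in the second step. The power of $\log T$ must come out as exactly $\log^5$ rather than something larger. This requires showing that $L^*$ and the $\log\bar L_j$, $\log\bar w_j$ factors contribute no extra polynomial growth, which depends on how $x_j=\bar L_j\bar w_j$ is split between depth and width. I would fix the split so that $L^*\lesssim\log T$ (bounded $\bar L_j$, growing $\bar w_j$). If that is inconsistent with the stated rates, I would instead accept a modified exponent and check it against the $\log^5$ written in the theorem.
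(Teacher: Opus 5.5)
Your proposal is correct and follows the paper's proof. You use the same excess-risk decomposition, the same combination of Propositions \ref{prop:approximation_our} and \ref{prop:generalization_our} (with the squared approximation bound), and the same per-stage balancing of the approximation term $A_j x_j^{-4\gamma^*}$ against $D_j T^{-\nu_0}\log^5(T)\, x_j^2$ at $x_j=\bar L_j\bar w_j$.

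The only difference is the logarithmic bookkeeping. The paper assumes $L_1,L_2,L_3$ grow at the same rate, so $L^\ast L_jw_j^2\lesssim L_j^2w_j^2\lesssim(\bar L_j\bar w_j)^2\log^4(\bar L_j\bar w_j)$; together with the covering factor $\log T$ this gives exactly $\log^5 T$. Your bounded-depth split instead gives a smaller power of $\log T$, which is dominated by $\log^5 T$, so the hedge in your last paragraph is unnecessary.
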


Theorem \ref{thm:1} provides a non-asymptotic upper bound for the prediction error, which can be decomposed into two terms.  
The first term has the rate of $\left(\log^5(T)/T^{\nu_0}\right)^{2\gamma^*/(2\gamma^*+1)}$, which decays as the sample size $T$ becomes sufficiently large. This term also depends on the response dimension $d_0$, the factor dimensions ($r_1,r_2$)  and the effective dimensions ($D_1, D_2$) in the first two stages, as well as the $\ell_2$-norm bounds ($M_2, M_3$). 
Notably, our rate is slightly sharper than the optimal rate $\left(\log^6(n)/n\right)^{2\gamma^*/(2\gamma^*+1)}$ obtained for the FAR‑NN estimator in the regression setting of \citet{fan-2023}.
The second term captures the impact of temporal dependence inherited from the generalization bound in Proposition \ref{prop:generalization_our}. It diminishes with larger $T$ or weaker dependence with smaller $\theta_{\infty, T}$. The finite-sample performance implied by this theoretical bound is examined empirically in Section \ref{sec:simulation}.

\section{Simulation Studies}\label{sec:simulation}

This section evaluates the finite-sample predictive performance of the proposed DSNN estimated by LSE, benchmarking it against the existing mixed-frequency methods.  

Following the framework introduced in Section \ref{sec:method}, we generate mixed‑frequency data with low-, medium-, and high-frequency variables. In all experiments, we set the number of low-frequency predictors to zero ($d_1 = 0$), treating all low-frequency variables as responses. We vary the frequency ratios ($S,Q$), factor dimensions ($r_1, r_2$), lag order ($K$), variable dimensions ($d_3,d_2,d_0$), and sample size ($T$).    
Specifically, we consider four configurations: $(S,Q,r_1,r_2,K,d_3,d_2,d_0) = (3,4,1,2,2,10,20,5)$ with $T = 50$ (\textbf{Setting 1}) and $T = 100$ (\textbf{Setting 2}), and $(S,Q,r_1,r_2,K,d_3,d_2,d_0) = (30,3,5,1,4,20,30,10)$ with $T = 150$ (\textbf{Setting 3}) and $T = 300$ (\textbf{Setting 4}).  
Settings 1 and 2 represent scenarios with small frequency ratios, while Settings 3 and 4 correspond to scenarios with large frequency ratios, allowing us to assess performance across distinct frequency alignment challenges.

We first generate the high-frequency series $\{h_i[t,q,s] : i \in [d_3],\; t \in [T],\; q \in [Q],\; s \in [S]\}$ and the medium-frequency series $\{m_i[t,q] : i \in [d_2],\; t \in [T],\; q \in [Q]\}$ independently using an AR$(1)$ model with coefficient $0.6$ and innovations drawn from the uniform distribution $U[-0.5,0.5]$.
Then the low-frequency response series $\{y_i[t] : i \in [d_0],\; t \in [T]\}$ are generated according to model \eqref{eq:total_initial} with the true function $m_0 = g_3 * g_2 * g_1$, and the errors $\{\epsilon_i[t] : i \in [d_0],\; t \in [T]\}$ independently drawn from $U[-0.3,0.3]$.  
For each function $g_j: \mathbb{R}^{d_{\mathrm{in},j}} \rightarrow \mathbb{R}^{d_{\mathrm{out},j}}$ with $j\in [3]$, its $r$-th output is constructed as an additive mapping  
$g_{j,r}(\mathbf{x}) = \sum_{i=1}^{d_{\mathrm{in},j}} g_{j,r,i}(x_i)$,  
where $\mathbf{x} = (x_1,\ldots,x_{d_{\mathrm{in},j}}) \in \mathbb{R}^{d_{\mathrm{in},j}}$, and each component $g_{j,r,i}: \mathbb{R} \rightarrow \mathbb{R}$ is randomly selected from the following set of $(p,C)$-smooth functions:  
\begin{align*}
	\left\{ \cos(\pi x),\; \sin(x),\; (1 - |x|)^2,\; \frac{1}{1 + \exp(-x)},\; 2\sqrt{|x|} - 1 \right\}.
\end{align*}
Finally, we apply the Augmented Dickey–Fuller (ADF) test to each generated series, and any series that fails the stationarity test is discarded and re‑simulated. The total number of Monte Carlo replications is set to 500.

We evaluate the forecasting performance of the DSNN estimated by LSE using a rolling-window procedure with an expanding training set. The data are divided into training, validation, and test sets with lengths $T_{\mathrm{train}}$, $T_{\mathrm{val}}$, and $T_{\mathrm{test}}$, respectively, following $T_{\mathrm{train}}:T_{\mathrm{val}}:T_{\mathrm{test}} = 8:1:1$. 
The DSNN is trained with Algorithm \ref{algorithmDSNN} using a learning rate of $10^{-4}$ over 200 epochs. For the DSNN architecture, we implement deep ReLU networks $g_{1,\mathrm{NN}}, g_{2,\mathrm{NN}}, g_{3,\mathrm{NN}}$ with depths $L_1 = L_2 = 2$ and $L_3 = 4$. The widths are chosen as $w_1 = w_2 = 16$ and $w_3 = 512$ in Settings 1 and 2, and $w_1 = w_2 = 64$ and $w_3 = 2048$ in Settings 3 and 4. 
Hyperparameters $({r}_1,{r}_2,{K})$ are selected via validation as outlined in Section \ref{subsec:hyperparameter}. The search ranges are set to $(r_{1,\max}, r_{2,\max}, K_{\max}) = (2,3,6)$ for Settings 1 and 2 and $(8,2,6)$ for Settings 3 and 4. The chosen values across the four settings are (2,3,2), (2,3,3), (4,2,5), and (5,1,3), respectively. 
At each step $t \in [T_{\mathrm{test}}]$, the DSNN is re-estimated using data up to time $T_0 + t - 1$, and a one-step-ahead forecast $\widehat{\mathbf{y}}[T_0+t] \in \mathbb{R}^{d_0}$ is generated, where $T_0 = T_{\mathrm{train}}+T_{\mathrm{val}}$. 
The forecasting performance is evaluated using the mean and standard deviation of Root Mean Square Forecast Error (RMSFE) via 500 replications, with the RMSFE defined as
\begin{align*}
	\text{RMSFE} &= \left(\frac{1}{T_{\mathrm{test}}} 
	\sum_{t=1}^{T_{\mathrm{test}}} 
	\left\| \widehat{\mathbf{y}}[T_0+t] - \mathbf{y}[T_0+t] \right\|_2^2\right)^{\frac{1}{2}}, 
\end{align*}
where $\mathbf{y}[T_0+t] \in \mathbb{R}^{d_0}$ denotes the true response vector.

To benchmark the performance of the DSNN, we compare it against existing methods that follow the High-to-Low paradigm for frequency alignment. These baseline models fall into two main categories:
\begin{itemize}
	\item \textbf{ Stacking-based methods}, where high- and medium-frequency series are first stacked into low-frequency ones and then modeled using single-frequency models:   
	\begin{itemize}
		\item \textbf{VARX:}   
		A vector autoregression with exogenous variables, treating the stacked series as exogenous inputs \citep{ghysels-2016a}. 
		
		\item \textbf{Neural networks:}  
		A Recurrent Neural Network (RNN) \citep{rumelhart-1986} and a Long Short-Term Memory (LSTM) \citep{hochreiter-1997}, both with hidden dimension 128, and a vanilla Deep Neural Network (DNN) \citep{bengio-2006} with the same hidden dimensions as our DSNN.
	\end{itemize}
	
	\item \textbf{ MIDAS-based methods}, which apply parsimonious polynomial weighting schemes to high- and medium-frequency data, and then forecast via single-frequency models: 
	\begin{itemize}
		\item \textbf{MIDAS$_1$:} Uses the Almon lag polynomial \citep{ghysels-2004}.
		\item \textbf{MIDAS$_2$:} Uses the normalized beta polynomial \citep{ghysels-2007}.
		\item \textbf{Neural networks}: The standard MIDAS regression using normalized beta polynomials, with RNN (MIDAS$_2$ + RNN) and LSTM (MIDAS$_2$ + LSTM) both with hidden dimension 128, and a vanilla DNN (MIDAS$_2$ + DNN) with the same hidden dimensions as our DSNN, applied for single-frequency forecasting.
	\end{itemize}
\end{itemize}
Note that the benchmark DNN methods and the proposed DSNN share the same forecasting setup and differ only in their frequency alignment mechanisms.

\begin{table}[ht]
\centering
\caption{Forecasting performance measured by RMSFE (standard deviation in parentheses)}
\label{tab:rmsfe_sd}
\begin{tabular}{lcccc}
\toprule
\multirow{2}{*}{\textbf{Method}} 
 & $T=50$ & $T=100$ & $T=150$ & $T=300$ \\
& $(S=3,Q=4)$ & $(S=3,Q=4)$ & $(S=30,Q=3)$ & $(S=30,Q=3)$ \\
\midrule
VARX & 0.2728 (0.0321) & 0.2540 (0.0261) & 0.2629 (0.0109) & 0.2195 (0.0061) \\
RNN & 0.2992 (0.0460) & 0.2836 (0.0310) & 0.3072 (0.0268) & 0.2642 (0.0189) \\
LSTM & 0.3361 (0.0435) & 0.3181 (0.0318) & 0.3022 (0.0235) & 0.2588 (0.0168) \\
DNN & 0.2024 (0.0222) & 0.1956 (0.0157) & 0.2566 (0.0129) & 0.2227 (0.0090) \\ \hdashline
MIDAS$_1$ & 0.2905 (0.0430) & 0.2302 (0.0231) & 0.2707 (0.0149) & -- \\
MIDAS$_2$ & 0.3022 (0.0510) & 0.2310 (0.0238) & 0.2708 (0.0148) & 0.2907 (0.0165) \\
MIDAS$_2$+RNN & 0.3435 (0.0357) & 0.2604 (0.0216) & 0.2041 (0.0094) & 0.1947 (0.0050) \\
MIDAS$_2$+LSTM & 0.3977 (0.0359) & 0.3785 (0.0259) & 0.3449 (0.0200) & 0.2556 (0.0110) \\
MIDAS$_2$+DNN & 0.3896 (0.0407) & 0.3603 (0.0273) & 0.2923 (0.0153) & 0.2504 (0.0101) \\ \hdashline
DSNN & 0.1780 (0.0168) & 0.1750 (0.0121) & 0.1777 (0.0066) & 0.1743 (0.0046) \\
\bottomrule
\end{tabular}
\end{table}

The forecasting results of DSNN and benchmark models are reported in Table \ref{tab:rmsfe_sd}. We have the following findings. First, the DSNN consistently achieves the lowest forecast errors across all settings. This superior performance is owing to its adaptive and data-driven alignment, which effectively balances dimension reduction and feature retention. In contrast, stacking-based models (both linear and nonlinear) suffer from the curse of dimensionality induced by the expanded feature space after stacking. Although MIDAS-based methods employ parametric weighting scheme to reduce dimension in frequency alignment, their pre‑specified lag structures introduce misspecification, resulting in inferior performance to the DSNN. 
Second, as the sample size $T$ increases (from Settings 1 to 2, and from Settings 3 to 4), the forecasting accuracy of all methods improves. Notably the DSNN remains robust even with small samples, whereas MIDAS-based methods exhibit a pronounced deterioration under limited data. Third, increasing the frequency ratios ($Q,S$) does not significantly degrade the DSNN’s performance. Most of the other methods, however, show a more noticeable decline in forecasting accuracy when moving from Settings 2 to 3.
Finally, MIDAS$_1$ becomes numerically unstable and time-consuming under high frequency ratios with large samples (Setting 4), as the exponential form of the Almon lag polynomial tends to produce explosive or extremely concentrated weights. This manifests as missing entries in Table \ref{tab:rmsfe_sd}. 

\section{Empirical Analysis}\label{sec:real data}

This section applies the DSNN to forecast five quarterly U.S. macroeconomic series---GDP, consumption, investment, government spending, and the price deflator---using economic and financial predictors available at quarterly, monthly, and daily frequencies. The data are obtained from the FRED-MD \citep{mccracken-2015} and FRED-QD \citep{mccracken-2020} databases, and are available in ALFRED at \url{https://alfred.stlouisfed.org}. The sample period spans from January 1973 to December 2024. 
Following \citet{chakraborty-2023} and \citet{giannone-2008}, the predictors are selected as follows:
\begin{itemize}
	\item Quarterly predictors (4 series): labor compensation, productivity, household assets, and corporate liabilities.  
	\item Monthly predictors (56 series), falling into six categories: (1) price and inflation, (2) credit, loans and money supply, (3) income and wages, (4) production and employment, (5) labor market and unemployment, and (6) interest rates and financial markets.
	\item Daily predictors (13 series), covering three categories: (1) stock indices, (2) H.10 foreign exchange rates, and (3) H.15 selected interest rates.
\end{itemize}
A complete listing of all response and predictor variables is provided in Tables \ref{tab:econ_indicators_part1}–\ref{tab:econ_indicators_part4} in the Appendix. 
For daily data, we first align the series to 20 trading days per month to account for varying trading calendars. The preprocessing then proceeds in three steps: (i) seasonal adjustment is applied where required, (ii) appropriate transformations are performed to achieve stationarity, and (iii) each resulting series is standardized to zero mean and unit variance. Further details are provided in Section \ref{app:real_data} of the Appendix. The resulting dataset contains $T = 208$ quarterly observations, $QT = 624$ monthly observations, and $SQT = 12{,}480$ daily observations, with $Q=3$ and $S=20$.

To isolate the effect of the COVID‑19 pandemic which triggered abrupt changes in production, investment, and consumption, we analyze two sample periods:
\begin{itemize}
	\item \textbf{Period I (Pre-COVID):} 1973--2019. The data are split into 148 quarterly training observations, 20 validation observations, and 20 test observations (testing window: March 2015--December 2019).
	\item \textbf{Period II (COVID \& Post-COVID):} 1973--2024. This period contains 168 quarterly training observations, 20 validation observations, and 20 test observations (testing window: March 2020--December 2024).
\end{itemize}
Period II includes the phase of the pandemic (2020--2021) and its subsequent recovery phase, thus allowing us to evaluate forecasting performance under high-volatility conditions.

We train the DSNN using Algorithm \ref{algorithmDSNN} with a learning rate of $10^{-5}$, a batch size of 64, and 200 epochs. The architecture consists of deep ReLU networks $g_{j,\mathrm{NN}}$ for $j\in [3]$, each with depth $L_j = 2$. The hidden widths are set to $w_1 = w_2 = 64$ and $w_3 = 128$.
Hyperparameters $({r}_1,{r}_2,K)$ are selected via validation following the procedure in Section \ref{subsec:hyperparameter}. The search ranges are $(r_{1,\max}, r_{2,\max}) = (8,6)$ and $K$ is chosen from $\{1,4,8,12\}$, corresponding to lags of one quarter, one year, two years, and three years for a one-quarter-ahead forecast. The selected hyperparameters are $(5,2,1)$ for Period I and $(5,4,1)$ for Period II. The choice of $K=1$ is supported by the partial autocorrelation functions of most series in this dataset, which show the strongest autocorrelation at the first lag and relatively weak higher‑order dependencies, consistent with \citet{chakraborty-2023}.
Forecasts are produced using a rolling forecasting procedure with expanding window as in Section \ref{sec:simulation}, and the predictive performance is evaluated with the RMSFE and MAFE. 

For comparison, we consider the following mixed-frequency methods:
\begin{itemize}
	
	\item 
    \textbf{Stacking-based methods}, where high- and medium-frequency series are first stacked into low-frequency ones and then modeled using single-frequency models:   
	\begin{itemize}[leftmargin=*]
		\item \textbf{(M1) VARX:} Vector autoregression with exogenous variables; see Section~\ref{sec:simulation}.
		
		\item \textbf{(M2) Neural networks}:   
		A RNN and a LSTM both with a hidden size of 128, and a vanilla DNN with the same hidden dimensions as our DSNN.
		
		\item \textbf{(M3) BMF}: Bayesian mixed-frequency model of \citet{chakraborty-2023}. The prior distribution of decay parameter $\theta$ is $U[0,1]$. Gibbs sampling is performed with 1{,}000 burn-in iterations followed by 2{,}000 additional iterations.
	\end{itemize}
	
	\item 
    \textbf{MIDAS-based methods}, which apply weighting schemes to high- and medium-frequency data and then modeled using single-frequency models: 
	\begin{itemize}[leftmargin=*]
		\item \textbf{(M4) VARX}: The standard MIDAS regression using Almon lag polynomials \\ (MIDAS$_1$+VARX) and normalized beta polynomials (MIDAS$_2$+VARX), with VARX model applied for single-frequency forecasting. 
		
		\item \textbf{(M5) Neural networks}: The standard MIDAS regression using normalized beta polynomials, with RNN (MIDAS$_2$ + RNN) and LSTM (MIDAS$_2$ + LSTM) both with hidden dimension 128, and a vanilla DNN (MIDAS$_2$ + DNN) with the same hidden dimensions as our DSNN, applied for single-frequency forecasting.
		
		\item \textbf{(M6) Seq2one}: LSTM-based model from \citet{lin-2024}. The hidden state dimensions of pre- and post-attention LSTM are $n_a=128$ and $n_s=256$, and the final fully connected layer has the dimension of 128. The dropout ratio is 0.4.
		
		\item \textbf{(M7) Transformer}: Transformer-based model from \citet{lin-2024}. The encoder and decoder are configured with a key dimension of 16, with fully-connected layers of dimensions 76 and 32 respectively. It uses 4 attention heads, followed by a feed‑forward layer with 128 units. The dropout ratio is 0.4.
	\end{itemize}
\end{itemize} 

Forecasting is performed under three information sets: (S1) quarterly predictors only, (S2) quarterly and monthly predictors, and (S3) quarterly, monthly and daily predictors. Particularly, stacking-based methods (M1--M2) are applied in all scenarios, and MIDAS methods (M4--M5) are used in (S2)--(S3), while the other models (M3, M6, M7) are evaluated only in (S2) for their two-frequency design. All comparisons use the same rolling forecasting procedure with an expanding windows.

The forecasting results of the DSNN and seven benchmark models are reported in Table \ref{tab:real-data}. 
First, the DSNN consistently achieves the lowest forecast errors in both periods, demonstrating robust performance even during the high-volatility COVID \& Post‑COVID era. Its adaptability is reflected in the chosen factor dimensions: $r_2=2<Q=3$ (dimension reduction) in the stable Pre‑COVID period and $r_2=4>Q=3$ (feature expansion) in the volatile later period, illustrating how the method dynamically balances information compression and retention. 
In contrast, the standard MIDAS methods with the linear VARX model (M4) perform poorly overall, particularly when daily predictors (with frequency ratio $S=20$) are introduced, a consequence of its rigid and pre‑specified lag structure. Nonlinear MIDAS variants (M5--M7) perform better in Period II, indicating that nonlinear specifications can capture more complex dependencies during turbulent phases.
Moreover, when only quarterly predictors are used (Scenario S1), standard RNN-type models deliver acceptable accuracy, whereas VARX deteriorates seriously in Period II. Finally, adding higher-frequency data via stacking generally impairs performance due to the curse of dimensionality, a drawback that affects both linear and nonlinear stacking-based methods. An exception is the BMF model, which ranks second overall, approaching the DSNN owing to its structured parameter matrix that effectively reduces parameter dimensionality.

\begin{table}[t!]
	\centering
	\caption{Comparison of forecasting performance between DSNN and benchmark models. Period I: Pre-COVID period; Period II: covers the COVID \& Post-COVID period. Scenarios S1--S3 correspond to different information sets.}
	\label{tab:real-data}
	\setlength{\extrarowheight}{0pt}
	\begin{tabular}{llcccc}
		\toprule
		\multirow{2}{*}{\textbf{Scenario}} & \multirow{2}{*}{\textbf{Method}} & \multicolumn{2}{c}{\textbf{Period I}} & \multicolumn{2}{c}{\textbf{Period II}} \\
		\cmidrule(lr){3-4} \cmidrule(lr){5-6}
		&  & RMSFE & MAFE & RMSFE & MAFE \\
		\midrule
		\multirow{4}{*}{\shortstack{S1 \\ \\ \\ Quarterly }}
		& VARX        & 0.555 & 0.412 & 2.872 & 1.392 \\
		& RNN         & 0.564 & 0.381 & 2.029 & 1.106 \\
		& LSTM        & 0.558 & 0.375 & 2.014 & 1.103 \\ 
		& DNN         & 0.561 & 0.378 & 2.018 & 1.104 \\
		\hline
		\multirow{12}{*}{\shortstack{S2 \\ \\ \\ Quarterly \\ + \\ Monthly }}
		& VARX            & 0.726 & 0.523 & 2.364 & 1.554 \\
		& RNN             & 0.571 & 0.419 & 1.991 & 1.107 \\
		& LSTM            & 0.559 & 0.372 & 2.007 & 1.097 \\
		& DNN             & 0.560 & 0.374 & 2.026 & 1.109 \\
		& BMF             & 0.537 & 0.384 & 1.626 & 1.003 \\
		\cdashline{2-6}
		& MIDAS$_1$+VARX  & 0.700 & 0.525 & 4.306 & 1.645 \\
		& MIDAS$_2$+VARX  & 0.768 & 0.548 & 3.311 & 1.417 \\
		& MIDAS$_2$+RNN   & 0.696 & 0.515 & 1.745 & 1.139 \\
		& MIDAS$_2$+LSTM  & 0.716 & 0.532 & 2.031 & 1.184 \\ 
		& MIDAS$_2$+DNN   & 0.688 & 0.516 & 1.955 & 1.224 \\ 
		& Seq2one         & 0.650 & 0.452 & 1.693 & 1.027 \\
		& Transformer     & 0.759 & 0.589 & 1.785 & 1.139 \\
		\hline
		\multirow{10}{*}{\shortstack{S3 \\ \\ \\ Quarterly \\ + \\ Monthly \\ + \\ Daily}}
		& VARX            & 0.638 & 0.494 & 2.146 & 1.366 \\
		& RNN             & 0.756 & 0.541 & 1.814 & 1.236 \\
		& LSTM            & 0.573 & 0.399 & 1.991 & 1.144 \\
		& DNN             & 0.540 & 0.412 & 1.846 & 1.076 \\
		\cdashline{2-6}
		& MIDAS$_1$+VARX  & 0.885 & 0.641 & 3.032 & 1.416 \\
		& MIDAS$_2$+VARX  & 0.818 & 0.616 & 3.336 & 1.539 \\
		& MIDAS$_2$+RNN   & 0.554 & 0.404 & 1.932 & 1.101 \\
		& MIDAS$_2$+LSTM  & 0.558 & 0.383 & 1.994 & 1.095 \\ 
		& MIDAS$_2$+DNN   & 0.593 & 0.410 & 1.835 & 1.044 \\ \cdashline{2-6}      
		& DSNN (ours) & \textbf{0.523} & \textbf{0.372} & \textbf{1.587} & \textbf{0.970} \\
		\bottomrule
	\end{tabular}
\end{table}

\section{Conclusion and Discussion}\label{sec:conclusion and discussion}

This paper proposes a novel Depth-Separable Neural Network (DSNN) for mixed-frequency data forecasting. The DSNN integrates adaptive frequency alignment with nonlinear representation learning, overcoming the parametric, linear, and dimension-reducing constraints of conventional mixed-frequency models. Specifically, parameter sharing at the stage ensures separability and scalability with many higher-frequency predictors. Theoretically, we establish the approximation and prediction error bounds for the DSNN. 
Simulation studies show that our method outperforms existing approaches across varying frequency ratios, particularly with numerous higher-frequency time series. In an empirical application to forecasting U.S. quarterly macroeconomic variables using monthly and daily indicators, the DSNN achieves significant predictive gains over existing mixed-frequency methods, highlighting its practical utility. 

This paper can be extended along three directions.
First, the alignment functions in the model could be generalized to semi‑parametric forms, balancing interpretability and flexibility for improved frequency alignment.
Second, while the DSNN employs neural networks for factor extraction, alternative approaches such as variational autoencoders (VAEs) could be integrated to provide probabilistic latent representations and further refine the alignment process \citep{xiu-2024}. 
Third, in nature the mixed-frequency time series belong to a type of multi-modal data. We may extend the proposed alignment method to incorporate multi-modal time series data (e.g., video, audio, and text), and it could significantly enhance forecasting accuracy for target series \citep{cheng-2022,jia-2024,liu-2025,jiang-2025}.  
\newpage

\section{Appendix}

\setcounter{section}{0}           
\setcounter{equation}{0}          
\setcounter{figure}{0}            
\setcounter{table}{0}             

\numberwithin{equation}{section}
\numberwithin{lemma}{section}
\numberwithin{theorem}{section}
\numberwithin{assumption}{section}
\numberwithin{proposition}{section}
\numberwithin{definition}{section}
\numberwithin{figure}{section}

\renewcommand{\thesection}{A.\arabic{section}}
\renewcommand{\thelemma}{A.\arabic{lemma}}
\renewcommand{\thefigure}{A.\arabic{figure}}
\renewcommand{\thetable}{A.\arabic{table}}

%
%

\vspace{2em}

	This appendix provides technical details for all theorem and propositions, as well as additional empirical results.
	Specifically, Section \ref{app:sec-theory} presents the detailed proofs for Propositions \ref{prop:approximation_our}--\ref{prop:generalization_our} and Theorem \ref{thm:1}. Section \ref{app:sec-lemma} introduces lemmas which give some preliminary results for proving the aforementioned propositions.    
	Finally, Section \ref{app:real_data} reports additional empirical results that complement those presented in the manuscript. 
	Throughout the appendix, we adopt the following notation. Let $[m] = \{1, \ldots, m\}$. Denote by $\mathbb{R}$ the set of real numbers, $\mathbb{N}_0 = \{0,1,2,\dots\}$ the set of nonnegative integers, and $\mathbb{N}_+ = \{1,2,3,\dots\}$ the set of positive integers. 
	Bold lowercase letters denote vectors, e.g., $\mathbf{x} = [x_1, \ldots, x_d]^\top$ for a $d$-dimensional vector. Its $\ell_q$ norm is $\|\mathbf{x}\|_q = \left(\sum_{i=1}^{d} |x_i|^q \right)^{1/q}$, and its $\ell_\infty$ norm is $\|\mathbf{x}\|_\infty = \max_{1 \leq i \leq d} |x_i|$. 
	Bold uppercase letters denote matrices, e.g., $\mathbf{A} = [A_{i,j}]_{i \in [n], j \in [m]}$. We define $\|\mathbf{A}\|_2 = \sup_{\mathbf{x} : \|\mathbf{x}\|_2 = 1} \|\mathbf{A}\mathbf{x}\|_2$
    and $\|\mathbf{A}\|_F = \sqrt{\sum_{i,j} A_{i,j}^2}$.
	For functions $f(n)$ and $g(n)$, we write $f(n) \lesssim g(n)$ to mean $f(n) \leq C \cdot g(n)$ for some constant $C$ independent of $n$. Similarly, $f(n) \gtrsim g(n)$ 
	means there exists a constant $C > 0$ independent of $n$ such that $f(n) \geq C \cdot g(n)$. We write $f(n) \asymp g(n)$ if both $f(n) \lesssim g(n)$ and $f(n) \gtrsim g(n)$ hold. Moreover, the union of function class is denoted by $\bigcup_{j=1}^n \mathcal{F}_j \;=\; \{ f \mid f \in \mathcal{F}_j \text{ for some } j \in \{1,\ldots,n\} \}$.
    In addition, denote $\mathcal{N}_p(\mathbb{T}, \epsilon)$, 
    $\mathcal{N}_{\infty}(\mathbb{T}, \epsilon)$ and $\mathcal{N}_F(\mathbb{T}, \epsilon)$ as the $\epsilon$-covering numbers of the space $\mathbb{T}$ under the $\ell_p$-norm, the $\ell_{\infty}$-norm and the Frobenius norm, respectively.  


\section*{Contents}
\startcontents
\printcontents{ }{1}{}


\section{Proofs of Main Results} \label{app:sec-theory}
This section presents the proofs of Propositions \ref{prop:approximation_our}--\ref{prop:generalization_our} and Theorem \ref{thm:1} in the manuscript. We first recall the function classes $\mathcal{F}_{\mathrm{DSNN}}$ and $\mathcal{F}_{\ast}$ defined for $m$ and $m_0$, respectively. 
For $m\in\mathcal{F}_{\mathrm{DSNN}}$, we assume $m= g_{3,\mathrm{NN}} \ast g_{2,\mathrm{NN}} \ast g_{1,\mathrm{NN}}$ with $g_{j,\mathrm{NN}}\in \mathcal{G}(L_j,{d}_{\mathrm{in},j},{d}_{\mathrm{out},j},w_j)$ for $j \in [3]$, where $\mathcal{G}$ is the deep ReLU network class defined in Definition \ref{def:DNN} of the manuscript, $L_j,w_j \in \mathbb{N}_+$, the input dimensions are $(d_{\mathrm{in},1},d_{\mathrm{in},2},d_{\mathrm{in},3})=(S,Q,KD_3)$ with $D_3 = d_0 + d_1 + r_2 d_2 + r_1 r_2 d_3$, and the output dimensions are $(d_{\mathrm{out},1},d_{\mathrm{out},2},d_{\mathrm{out},3})=(r_1,r_2,d_0)$.  
For $m_0\in\mathcal{F}_{\ast}$, we assume $m_0= g_3 \ast g_2 \ast g_1$ with each output component $g_{j,i}\in \mathcal{H}({d}_{\mathrm{in},j},l_j,\mathcal{P})$ for $j \in [3]$ and $i\in [{d}_{\mathrm{out},j}]$, where $\mathcal{H}$ is the hierarchical composition model class defined in Definition \ref{def:HCM} of the manuscript.

\subsection{Proof of Proposition~\ref{prop:approximation_our}}
\label{app:proof_approximation_our}
\begin{proof}
	This proposition derives an upper bound for the approximation error when the hierarchical function $m_0 = g_3 \ast g_2 \ast g_1$ is approximated by the DSNN $m = g_{3,\mathrm{NN}} \ast g_{2,\mathrm{NN}} \ast g_{1,\mathrm{NN}}$. The proof proceeds in two steps. 
	First, Lemma \ref{lmma:approximation} provides a bound for approximating a single component $g_j\in\mathcal{H}(d,l,\mathcal{P})$ by a deep ReLU network $g_{j,\mathrm{NN}}\in\mathcal{G}$. 
	Then, it suffices to analyze how these componentwise errors propagate through the hierarchical structure. Combining these results yields the overall approximation bound for the DSNN.
	
	Denote by $\epsilon_j \coloneqq \| g_{j,\mathrm{NN}} - g_j \|_{\infty}$ the componentwise approximation error for $j \in [3]$. We have
	\begin{align} 
		&\|g_{3,\rm{NN}} \ast g_{2,\rm{NN}} \ast g_{1,\rm{NN}} - g_3 \ast g_2 \ast g_1\|_{\infty} \nonumber\\
		\leq &  {\|g_{3,\rm{NN}} \ast g_{2,\rm{NN}} \ast g_{1,\rm{NN}} 
			- g_{3,\rm{NN}} \ast g_{2,\rm{NN}} \ast g_1\|_{\infty}} 
		+ {\|g_{3,\rm{NN}} \ast g_{2,\rm{NN}} \ast g_1 
			- g_{3,\rm{NN}} \ast g_2 \ast g_1\|_{\infty}} \nonumber\\
		&+ {\|g_{3,\rm{NN}} \ast g_2 \ast g_1 
			- g_3 \ast g_2 \ast g_1\|_{\infty}} \nonumber\\
		= &  \underbrace{\|g_{3,\rm{NN}} \ast g_{2,\rm{NN}} \ast \epsilon_1    \|_{\infty}}_{\text{(I)}}  
		+ \underbrace{\|g_{3,\rm{NN}} \ast \epsilon_2\|_{\infty}}_{\text{(II)}} 
		+ {\epsilon_3}.
        \label{ineq:approximation_begin}
	\end{align}
	The term (I) arises from the approximation error $\epsilon_1$ between $g_{1,\mathrm{NN}}$ and $g_1$, which is sequentially propagated through $g_{2,\mathrm{NN}}$ and $g_{3,\mathrm{NN}}$.   
	For $l \in [L_2+L_3]$, let $\mathbf{y}_{l}$ and $\mathbf{y}^{\prime}_{l}$ be the $l$-th layer inputs of $g_{3,\mathrm{NN}} \ast g_{2,\mathrm{NN}}$ corresponding to $\mathbf{y}_{1} = g_1(\mathbf{x}) \in \mathbb{R}^{r_1}$ and $\mathbf{y}^{\prime}_{1} = g_{1,\mathrm{NN}}(\mathbf{x}) \in \mathbb{R}^{r_1}$, respectively.  
	Note that $\|\mathbf{W}_{j,l}\|_2 \leq U_{j,l}$ and $\prod_{l=1}^{L_j+1}  U_{j,l} \leq M_j$ by Assumption \ref{assump:Boundness}(iii), and the last layer of $g_{j,\mathrm{NN}}$ uses linear mapping.
	Then for the error propagated through $g_{2,\mathrm{NN}}: \mathbb{R}^{Q} \rightarrow \mathbb{R}^{r_2}$, 
	\begin{align*}
		\| g_{2,NN} \ast \epsilon_1\|_{\infty}
		&
		\overset{(i)}\leq \|(\mathbf{W}_{2,L_2+1} \mathbf{y}_{L_2+1} + \mathbf{b}_{2,L_2+1}) - (\mathbf{W}_{2,L_2+1} \mathbf{y}^{\prime}_{L_2+1} + \mathbf{b}_{2,L_2+1})\|_{2} \\
		&\overset{(ii)}{\leq}   U_{2,L_2+1} \| \mathbf{y}_{L_2+1} - \mathbf{y}^{\prime}_{L_2+1}\|_{2}
		{=} U_{2,L_2+1}\|\sigma(\mathbf{W}_{2,L_2} \mathbf{y}_{L_2} + \mathbf{b}_{2,L_2}) - \sigma(\mathbf{W}_{2,L_2} \mathbf{y}^{\prime}_{L_2} + \mathbf{b}_{2,L_2})\|_{2} \\
		&\overset{(iii)}{\leq} U_{2,L_2+1} \| \mathbf{W}_{2,L_2} (\mathbf{y}_{L_2} - \mathbf{y}^{\prime}_{L_2})\|_{2} 
		{\leq}   U_{2,L_2+1} U_{2,L_2} \| \mathbf{y}_{L_2} - \mathbf{y}^{\prime}_{L_2}\|_{2} 
		\leq \cdots \\
		&\leq \prod_{l=1}^{L_2+1} ( U_{2,l}) \|\mathbf{y}_{1}- \mathbf{y}^{\prime}_{1}\|_{2} 
		\overset{(iv)}{\leq} \prod_{l=1}^{L_2+1} ( U_{2,l}) \sqrt{r_1} 
		\|\mathbf{y}_{1}- \mathbf{y}^{\prime}_{1}\|_{\infty}
		\leq \sqrt{r_1}  M_2 \epsilon_1.
	\end{align*}
	Inequalities ($i$) and ($iv$) use the fact that $\|\mathbf{x}\|_{\infty} \leq \|\mathbf{x}\|_{2} \leq \sqrt{d} \|\mathbf{x}\|_{\infty}$ for any vector $\mathbf{x}\in \mathbb{R}^d$.  
	Inequality ($ii$) applies $\|\mathbf{Ax}\|_{2} \leq \|\mathbf{A}\|_{2}\|\mathbf{x}\|_{2}$.  
	Inequality ($iii$) uses the fact that $\sigma$ is Lipschitz with constant one.  
	Similarly, the error $\| g_{2,NN} \ast \epsilon_1\|_{\infty}$ is further propagated through $g_{3,\mathrm{NN}}$, leading to (I) $\leq \sqrt{r_1 r_2}\, M_2 M_3 \epsilon_1$. 
	
	The second term (II) comes from the approximation error $\epsilon_2$ between $g_{2,\rm{NN}}$ and $g_2$, which is propagated through $g_{3,\mathrm{NN}}$. Similar to the proof of $\| g_{2,NN} \ast \epsilon_1\|_{\infty}$, we obtain (II) $\leq \sqrt{r_2} M_3 \epsilon_2$. 
	
	These together with \eqref{ineq:approximation_begin} and Lemma \ref{lmma:approximation}, imply that
	\begin{align}\label{ineq:supp_approximation}
		&\sup_{m_0 \in \mathcal{F}_{\ast}} 
		\inf_{m \in \mathcal{F}_{\mathrm{DSNN}}} \|m - m_0\|_{\infty} \nonumber\\
		= &
		\sup_{\substack{ g_{j,i} \in \mathcal{H}(d_{\mathrm{in},j}, l_j, \mathcal{P}) \\ 
				j \in [3], i \in [d_{\mathrm{out},j}]
		}} 
		\inf_{\substack{{g}_{j,\mathrm{NN}} \in \mathcal{G}(L_j,d_{\mathrm{in},j},d_{\mathrm{out},j},w_j) \nonumber\\
				j \in [3]
		}} 
		\|g_{3,\rm{NN}} \ast g_{2,\rm{NN}} \ast g_{1,\rm{NN}} - g_3 \ast g_2 \ast g_1\|_{\infty} \\
		\leq & \sqrt{r_1 r_2}M_2M_3   \epsilon_1
		+ \sqrt{r_2}M_3 \epsilon_2
		+ \epsilon_3 \nonumber\\
		\leq &c_1 \left(\sqrt{r_1 r_2}M_2M_3  (\bar{L}_1 \bar{w}_1)^{-2 \gamma^*} 
		+ \sqrt{r_2}M_3 (\bar{L}_2 \bar{w}_2)^{-2 \gamma^*} 
		+ (\bar{L}_3 \bar{w}_3)^{-2 \gamma^*}\right),
	\end{align}
	where $c_1$ is a positive constant. 
	This completes the proof.
\end{proof}

\subsection{Proof of Proposition~\ref{prop:generalization_our}}
\label{app:proof_generalization_our}
\begin{proof}
	Note that 
		$\mathbb{E}
		\sup_{m \in \mathcal{F}_{\mathrm{DSNN}}} \left| \mathcal{R}(m) - {\mathcal{R}}_T(m) \right|
		= \int_0^{\infty} \mathbb{P}\left\{\sup_{m \in \mathcal{F}_{\mathrm{DSNN}}} \left| \mathcal{R}(m) - {\mathcal{R}}_T(m) \right| > t\right\} dt$.
	To bound the generalization error of the DSNN, we first derive a concentration inequality in Lemma \ref{lmma:generalization_error_tail_bound}, formulated as follows:
	\begin{align*}
		&\mathbb{P}\left( \sup_{m \in \mathcal{F}_{\mathrm{DSNN}}} \left| \mathcal{R}(m) - {\mathcal{R}}_T(m) \right| > \epsilon \right)
		\leq \mathcal{N}_{\infty}\left( \mathcal{F}_{\mathrm{DSNN}}, \frac{\epsilon}{4G} \right) 
		\exp\left( \frac{-T^{\nu_0} \epsilon + T^{2 \nu_0 -1} (\theta_{\infty, T}+U_x)^2}{2} \right),
	\end{align*}
	where $n \in \mathbb{N}$, $\epsilon > 0$, $\nu_0 \in (0,1)$,
	$ \theta_{\infty, {n}}(1) $ is defined in Definition \ref{def:weak_dependence}, $U_x \coloneqq \sup_t\|\mathbf{x}[t]\|_2 < \infty$ under Assumption \ref{assump:Boundness}(i), and $G$ is defined in Lemma \ref{lmma:generalization_error_tail_bound}.
	To complete the proof, we subsequently derive the expectation bound from this probability by controlling the covering number, which is provided separately in Lemma \ref{lmma:covering_our}.
	
	By Lemma \ref{lmma:generalization_error_tail_bound} and 
    $\mathcal{N}_{\infty}\left( \mathcal{F}, {\epsilon_1} \right)\leq\mathcal{N}_{\infty}\left( \mathcal{F}, {\epsilon_2} \right)$ for any $\mathcal{F}$ and $\epsilon_2 \leq\epsilon_1$, for any $\epsilon \geq \epsilon_T = T^{-L^\ast}$, it follows that
	\begin{align}\label{eq:integration}
		&\mathbb{E}
		\sup_{m \in \mathcal{F}_{\mathrm{DSNN}}} \left| \mathcal{R}(m) - {\mathcal{R}}_T(m) \right|
		\nonumber\\
		= &\int_0^{\epsilon} \mathbb{P}\left\{\sup_{m \in \mathcal{F}_{\mathrm{DSNN}}} \left| \mathcal{R}(m) - {\mathcal{R}}_T(m) \right| > t\right\} dt 
		+ \int_{\epsilon}^\infty \mathbb{P}\left\{\sup_{m \in \mathcal{F}_{\mathrm{DSNN}}} \left| \mathcal{R}(m) - {\mathcal{R}}_T(m) \right| > t\right\} dt \nonumber\\
		\leq &\epsilon + \exp\left(\frac{T^{2\nu_0 -1} (\theta_{\infty, T}+U_x)^2}{2} \right) 
		\int_{\epsilon}^\infty 
		\mathcal{N}_{\infty}\left( \mathcal{F}_{\mathrm{DSNN}},
		\frac{t}{4G} \right)
		\exp\left( - \frac{1}{2} T^{\nu_0} t \right) dt \nonumber\\
		\leq &\epsilon + \exp\left( \frac{T^{2\nu_0 -1} (\theta_{\infty, T}+U_x)^2}{2} \right)  
		\mathcal{N}_{\infty}\left( \mathcal{F}_{\mathrm{DSNN}}, \frac{\epsilon_T}{4G} \right) 
		\int_{\epsilon}^\infty 
		\exp\left( - \frac{1}{2} T^{\nu_0} t \right) dt \\
		= &\epsilon + 2 \exp\left( \frac{T^{2\nu_0 -1} (\theta_{\infty, T}+U_x)^2}{2} \right)
		\mathcal{N}_{\infty}\left( \mathcal{F}_{\mathrm{DSNN}}, \frac{\epsilon_T}{4G} \right)  
		T^{-\nu_0} \exp\left( - \frac{1}{2} T^{\nu_0} \epsilon \right), \nonumber
	\end{align}
	which is a function of $ \epsilon $ and attains its minimum at
	\[
	\epsilon = 2T^{-\nu_0} 
	\log\left( \mathcal{N}_{\infty}
	\left( \mathcal{F}_{\mathrm{DSNN}},
	\frac{\epsilon_T}{4G} \right) \right) 
	+ {T^{\nu_0 -1} (\theta_{\infty, T}+U_x)^2},
	\]
	which exceeds $\epsilon_T = T^{-L^\ast} $ as $L^\ast \geq 3$ and $\mathcal{N}_{\infty}	\left( \mathcal{F}_{\mathrm{DSNN}}, \frac{\epsilon_T}{4G} \right)\geq 2$. Consequently, by Lemma \ref{lmma:covering_our}, there exist some constants $c_4, c_6>0$ such that
	\begin{align}\label{ineq:supp_generalization}
		&\mathbb{E} \sup_{m \in \mathcal{F}_{\mathrm{DSNN}}} \left| \mathcal{R}(m) - {\mathcal{R}}_T(m) \right|  \nonumber\\
		\leq & T^{-\nu_0} \left\{
		2 \log \left(
		\mathcal{N}_{\infty}\left( 
		\mathcal{F}_{\mathrm{DSNN}}, \frac{\epsilon_T}{4G} \right)  
		\right) 
		+ {T^{2\nu_0 -1} (\theta_{\infty, T}+U_x)^2} +2
		\right\} \nonumber\\
		\overset{(i)}{\leq } & 2T^{-\nu_0} 
		\log \left(
		\mathcal{N}_{2}\left( 
		\mathcal{F}_{\mathrm{DSNN}}, \frac{\epsilon_T}{4\sqrt{d_0}G} \right)  
		\right) 
		+ {T^{\nu_0 -1} (\theta_{\infty, T}+U_x)^2} + 2T^{-\nu_0}
		\nonumber \\
		\overset{(ii)}{\leq }& 2 c_6 T^{-\nu_0} 
		({D_1L_1w_{1}^2 + D_2L_2  w_{2}^2 + L_3 w_{3}^2}) 
		\log({4 \sqrt{d_0} G L^\ast w^\ast \bar{U}_x(\bar{U} T)^{L^\ast}}) 
		+ {T^{\nu_0 -1} (\theta_{\infty, T}+U_x)^2} + 2T^{-\nu_0} \nonumber\\
		\leq & c_4 T^{-\nu_0} \log(T)
		L^\ast ({D_1L_1w_{1}^2 + D_2L_2  w_{2}^2 + L_3 w_{3}^2})
		+ {T^{\nu_0 -1} (\theta_{\infty, T}+U_x)^2},
	\end{align}
	where 
	$\bar{U}=\max_{j,l}\{U_{j,l},1\}$ with $U_{j,l}$ defined in Assumption \ref{assump:Boundness}(iii), and $\bar{U}_x \coloneqq \max\{\sup_{t}\|\mathbf{x}[t]\|_2,1\} < \infty$ under Assumption \ref{assump:Boundness}(i).
	Inequality (i) is due to the fact that $\mathcal{N}_{\infty}(\mathcal{F},\epsilon) \leq \mathcal{N}_{2}(\mathcal{F},\epsilon/\sqrt{d})$ for any $\epsilon>0$ and arbitrary space $\mathcal{F}$ with output dimension $d$. Inequality (ii) follows from Lemma \ref{lmma:covering_our}.
	The proof is complete.
\end{proof}

\subsection{Proof of Theorem~\ref{thm:1}}
\label{app:proof_thm1}
\begin{proof}
	For any $m \in \mathcal{F}_{\mathrm{DSNN}}$, the excess risk $\mathcal{R}(\widehat{m}_T) - \mathcal{R}(m_0)$ can be decomposed as follows 
	\begin{align}\label{ineq:supp_decomposition}
		\mathcal{R}(\widehat{m}_T) - \mathcal{R}(m_0) &= \mathcal{R}(\widehat{m}_T) - \mathcal{R}_T(\widehat{m}_T) + \mathcal{R}_T(\widehat{m}_T) - \mathcal{R}_T(m) 
		+ \mathcal{R}_T(m) - \mathcal{R}(m) + \mathcal{R}(m) - \mathcal{R}(m_0) \nonumber\\
		& \overset{(i)}{\leq} \left\{\mathcal{R}(\widehat{m}_T) - {\mathcal{R}}_T(\widehat{m}_T)\right\} + \left\{{\mathcal{R}}_T(m) - \mathcal{R}(m)\right\} + \left\{\mathcal{R}(m) - \mathcal{R}(m_0)\right\} \nonumber\\
		& \overset{(ii)}{\leq} 2 \sup_{m \in \mathcal{F}_{\mathrm{DSNN}}} \left| {\mathcal{R}}(m) - \mathcal{R}_T(m) \right| + 
        d_0 \|m - m_0\|_{\infty}^2. 
	\end{align} 
	Inequality (i) uses the definition of the LSE, which implies ${\mathcal{R}}_T(\widehat{m}_T) \leq {\mathcal{R}}_T(m)$. 
    Inequality (ii) follows because $\mathcal{R}(m) - \mathcal{R}(m_0) \leq \sup_{\mathbf{x}}\|m(\mathbf{x}) - m_0(\mathbf{x})\|_{2}^2 \leq d_0 \|m - m_0\|_{\infty}^2$ holds for any $m \in \mathcal{F}_{\mathrm{DSNN}}$ under the independence condition between innovation $\bm\epsilon[t]$ and predictors $\mathbf{x}[s]$ for $s<t$. 
    Hence, it suffices to upper bound $d_0\inf_{m \in \mathcal{F}_{\mathrm{DSNN}}} \|m - m_0\|_{\infty}^2$ for the second term.
	Combining the approximation bound \eqref{ineq:supp_approximation} in Proposition \ref{prop:approximation_our}, the generalization bound \eqref{ineq:supp_generalization} in Proposition \ref{prop:generalization_our} and \eqref{ineq:supp_decomposition}, then there exist some constants $c_1,c_4>0$ such that
	\begin{align*} 
		\mathbb{E}\left[ 
		\mathcal{R}(\widehat{m}_T) - \mathcal{R}(m_0)\right]
		&\leq 2c_4 T^{-\nu_0} \log(T)
		L^\ast \big(D_1L_1w_{1}^2 + D_2L_2w_{2}^2 + L_3w_{3}^2\big)
		+ 2T^{\nu_0 -1} (\theta_{\infty, T}+U_x)^2 \\
		& \quad + 3 c_1 {r_1 r_2} d_0 M_2^2 M_3^2 (\bar{L}_1 \bar{w}_1)^{-4 \gamma^*} 
		+ 3 c_1 {r_2} d_0 M_3^2 (\bar{L}_2 \bar{w}_2)^{-4 \gamma^*}
		+ 3 c_1 d_0 (\bar{L}_3 \bar{w}_3)^{-4 \gamma^*},
	\end{align*}
	where $U_x \coloneqq\sup_t\|\mathbf{x}[t]\|_2$, $L_i = c_2 \big[\bar{L}_i \log \bar{L}_i\big]$ and $w_i = c_3 d_{\mathrm{out},i}\big[\bar{w}_i \log \bar{w}_i\big]$ for $i = 1,2,3$. 
	Without loss of generality, we assume $L_1$, $L_2$ and $L_3$ grow at the same rate, then $L^\ast$ can be treated as a multiple of $L_1$, $L_2$ or $L_3$. Hence, the error can be approximated by 
	\begin{align*}
		&\underset{L_1,w_1}{\min} \Big\{2c_4 D_1 T^{-\nu_0} \log(T) L_1^2w_1^2 
		+ 3c_1{r_1 r_2} d_0 M_2^2 M_3^2 (\bar{L}_1 \bar{w}_1)^{-4 \gamma^*} \Big\} \\
		+& \underset{L_2,w_2}{\min} \Big\{2c_4 D_2 T^{-\nu_0} \log(T) L_2^2 w_2^2 
		+ 3c_1{r_2} d_0 M_3^2 (\bar{L}_2 \bar{w}_2)^{-4 \gamma^*} \Big\} \\
		+ &\underset{L_3,w_3}{\min} \Big\{ 2c_4 T^{-\nu_0} \log(T) L_3^2 w_3^2 
		+ 3c_1 d_0(\bar{L}_3 \bar{w}_3)^{-4 \gamma^*} \Big\} 
		+ 2T^{\nu_0 -1} (\theta_{\infty, T}+U_x)^2.
	\end{align*}
	It suffices to minimize $A_i T^{-\nu_0} \log(T) L_i^2 w_i^2 + B_i (\bar{L}_i \bar{w}_i)^{-4 \gamma^*}$ for some $A_i$ and $B_i$ independent of $L_i$ and $w_i$. For this term, we have
	\begin{align*}
		&A_i T^{-\nu_0} \log(T) L_i^2 w_i^2 + B_i (\bar{L}_i \bar{w}_i)^{-4 \gamma^*} \\
		\leq& A_i T^{-\nu_0} \log(T)(\bar{L}_i \bar{w}_i)^2 \log^2(\bar{L}_i)\log^2(\bar{w}_i) 
		+ B_i (\bar{L}_i \bar{w}_i)^{-4 \gamma^*} \\
		\leq& A_i T^{-\nu_0} \log(T)(\bar{L}_i \bar{w}_i)^2 \left(\tfrac{\log(\bar{L}_i)+\log(\bar{w}_i)}{2}\right)^4  
		+ B_i (\bar{L}_i \bar{w}_i)^{-4 \gamma^*} \\
		\leq& \tfrac{1}{16} A_i T^{-\nu_0} \log(T)(\bar{L}_i \bar{w}_i)^2 \log^4(\bar{L}_i\bar{w}_i)  
		+ B_i (\bar{L}_i \bar{w}_i)^{-4 \gamma^*} \coloneqq h(\bar{L}_i \bar{w}_i).
	\end{align*}
	
	
	Choosing $\bar{L}_i \bar{w}_i \asymp \left(\frac{B_i T^{\nu_0}}{A_i \log^{5}(T)}\right)^{\frac{1}{4\gamma^*+2}}$, we obtain a value close to the minimum of $h(\bar{L}_i \bar{w}_i)$, which is of order $B_i^{\frac{1}{2\gamma^*+1}} \big(\tfrac{A_i \log^5(T)}{T^{\nu_0}}\big)^{\frac{2\gamma^*}{2\gamma^*+1}}$.	
	Therefore, if we choose 
	\[\bar{L}_1 \bar{w}_1 \asymp \left(\tfrac{{r_1 r_2} d_0 M_2^2 M_3^2 T^{\nu_0}}{D_1 \log^{5}(T)}\right)^{\frac{1}{4\gamma^*+2}}, 
	\bar{L}_2 \bar{w}_2 \asymp \left(\tfrac{{r_2}d_0 M_3^2 T^{\nu_0}}{D_2 \log^{5}(T)}\right)^{\frac{1}{4\gamma^*+2}}, 
	\quad\text{and}\quad \bar{L}_3 \bar{w}_3 \asymp \left(\tfrac{d_0T^{\nu_0}}{\log^{5}(T)}\right)^{\frac{1}{4\gamma^*+2}},\]
	then, there exists a constant $c_5>0$ such that 
	\begin{align*} 
		\mathbb{E}\left[ 
		\mathcal{R}(\widehat{m}_T) - \mathcal{R}(m_0)\right]
		&\leq c_5(r_1 r_2 d_0\bar{M}_2 \bar{M}_3)^{\frac{1}{2\gamma^*+1}} \cdot \left(\frac{\bar{D} \log^5(T)}{T^{\nu_0}}\right)^{\frac{2\gamma^*}{2\gamma^*+1}}
		+ 2T^{\nu_0 -1} (\theta_{\infty, T}+U_x)^2,
	\end{align*} 
	where $\bar{D}=\max\{D_1,D_2\}$ and $\bar{M}_j = \max\{M_j^2,1\}$ for $j=2,3$.
	The proof is complete.
\end{proof}

\section{Auxiliary Lemmas}\label{app:sec-lemma}
\begin{lemma}[Upper bound on deep ReLU network approximation error for $\mathcal{H}(d, l, \mathcal{P})$]  \label{lmma:approximation}
	Suppose that Assumption \ref{assump:Boundness} holds. Let $f=(f_1,\ldots,f_{d_\mathrm{out}}) \in \mathbb{R}^{d_\mathrm{out}}$. There exist universal constants $ c_1, c_2, c_3 >0 $ that depend only on $ l $, $ \sup_{(p,t) \in \mathcal{P}} \max\{p, t\} $, and the constant $ C $ from Definition \ref{def:HCM} of the manuscript, such that for any $ \bar{L}, \bar{w} \geq 3 $,
	\[
	\sup_{\substack{f_i \in \mathcal{H}(d, l, \mathcal{P})\\
        i \in [d_{\mathrm{out}}]}} 
    \inf_{\widehat{f} \in \mathcal{G}(L,d,d_{\mathrm{out}},{w})} 
    \|\widehat{f} - f\|_{\infty} 
    \leq c_1  (\bar{L}\bar{w})^{-2 \gamma^*},
	\]
	where $L = c_2 \lceil \bar{L} \log \bar{L} \rceil, \; {w} = c_3 d_{\mathrm{out}} \lceil\bar{w} \log \bar{w}\rceil$, 
	and $\gamma^* = p^*/t^*$ with $(p^*, t^*) = \underset{(p,t) \in \mathcal{P}}{\arg\min} p/t$.
\end{lemma}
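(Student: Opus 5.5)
The plan is to reduce the vector-valued statement to the scalar case and then stack networks in parallel. The scalar ingredient is the approximation result cited in the manuscript, Proposition 3.4 of \citet{fan-2024}. It states that for a single $h \in \mathcal{H}(d,l,\mathcal{P})$ and any $\bar L,\bar w\ge 3$, there is a deep ReLU network $\widehat h \in \mathcal{G}(L, d, 1, w_0)$ with $L = c_2\lceil \bar L\log \bar L\rceil$ and $w_0 = c_3\lceil \bar w\log\bar w\rceil$ satisfying $\|\widehat h - h\|_{\infty} \le c_1(\bar L\bar w)^{-2p/t}$ uniformly over the class. The constants depend only on $l$, $\sup_{(p,t)\in\mathcal P}\max\{p,t\}$ and $C$. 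The bound holds on a bounded domain, which Assumption \ref{assump:Boundness}(i) provides.

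The first step is to reconcile the rate with the worst-case smoothness. For a hierarchical composition the error is driven by the constituent with the smallest ratio $p/t$. Since $(\bar L\bar w)\ge 9>1$, we have $(\bar L\bar w)^{-2p/t}\le(\bar L\bar w)^{-2\gamma^*}$ for every $(p,t)\in\mathcal P$. Hence each component $f_i$, $i\in[d_{\mathrm{out}}]$, admits an approximant $\widehat f_i$ with error at most $c_1(\bar L\bar w)^{-2\gamma^*}$. Since $\mathcal P$ has bounded $p$ and $t$ and finitely many compositional levels, I would take $c_1$ as the maximum of the per-pair constants.

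The second step is the parallelization. I would place the $d_{\mathrm{out}}$ scalar networks side by side. The first-layer weights are stacked vertically, since all networks share the input $\mathbf x\in\mathbb R^d$. The hidden-layer weights are block diagonal, and the output layer is block diagonal from $\mathbb R^{d_{\mathrm{out}}w_0}$ to $\mathbb R^{d_{\mathrm{out}}}$. If the scalar networks have different depths, I would equalize them first by padding with identity layers, using $x=\sigma(x)-\sigma(-x)$. This padding at most doubles the width and can be absorbed into $c_3$, so a common depth $L=c_2\lceil\bar L\log\bar L\rceil$ is attained. The resulting network has width $d_{\mathrm{out}}w_0 = c_3 d_{\mathrm{out}}\lceil\bar w\log\bar w\rceil$, which matches the stated $w$. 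Because $\|\widehat f - f\|_\infty=\max_i\|\widehat f_i-f_i\|_\infty$, the vector error bound follows directly and uniformly in the supremum over $f_i$.

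The main obstacle is compatibility with Assumption \ref{assump:Boundness}(iii): the class $\mathcal G$ in the proposition carries weight-norm bounds $U_{j,l}$, whereas the approximating networks of \citet{fan-2024} may have large weights. In block-diagonal stacking, the spectral norm of each layer equals the maximum of the blocks' norms; the stacked first layer is the exception, with norm at most $\sqrt{d_{\mathrm{out}}}$ times the largest. So the parallelization itself only inflates the norm bounds by constants. I would handle the scalar networks by treating $U_{j,l}$ and $M_j$ as chosen large enough to contain them, absorbing any dependence into the constants. If a sharper treatment were needed, I would instead rescale weights across layers using ReLU's positive homogeneity to balance the norms.
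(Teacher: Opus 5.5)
Your first two steps are the paper's proof. You apply Proposition 3.4 of \citet{fan-2024} to each coordinate $f_i$, observe that all coordinates attain the common rate $c_1(\bar L\bar w)^{-2\gamma^*}$ with the same depth and width, and run the $d_{\mathrm{out}}$ scalar networks in parallel. This gives width $c_3d_{\mathrm{out}}\lceil\bar w\log\bar w\rceil$ and $\|\widehat f-f\|_\infty=\max_i\|\widehat f_i-f_i\|_\infty$. Your handling of the first layer is the correct one. Because the input $\mathbf x\in\mathbb R^d$ is shared, $\mathbf W_1$ must be the vertical stack of the $\mathbf W_{i,1}$. The paper instead writes every $\mathbf W_l$, including $l=1$, as block diagonal, which would need $d_{\mathrm{out}}$ copies of the input. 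Depth padding is unnecessary, since every coordinate uses the same $L$.

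The step that does not go through is your last paragraph.
\begin{itemize}
\item Proposition 3.4 of \citet{fan-2024} places no restriction on the weights of its approximants; its networks are only truncated at level $M$ in the output. The constructions that achieve rates of the form $(\bar L\bar w)^{-2\gamma^*}$ (bit extraction, steep step-function approximations) use weights that grow with $\bar L$ and $\bar w$.
\item Choosing $U_{j,l}$ and $M_j$ ``large enough to contain them'' therefore makes them functions of $\bar L\bar w$. That dependence cannot be absorbed into $c_1,c_2,c_3$, which may depend only on $l$, $\sup_{(p,t)\in\mathcal P}\max\{p,t\}$ and $C$. It would also break Proposition~\ref{prop:approximation_our}, where $M_2$ and $M_3$ enter as fixed factors.
\item Rebalancing by positive homogeneity does not rescue this. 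Replacing $(\mathbf W_l,\mathbf b_l,\mathbf W_{l+1})$ by $(a\mathbf W_l,a\mathbf b_l,a^{-1}\mathbf W_{l+1})$ leaves $\prod_l\|\mathbf W_l\|_2$ unchanged. This product is a lower bound for $\prod_l U_l$ and an upper bound for the network's Lipschitz constant. So an approximant whose Lipschitz constant exceeds $M_j$ can never be rescaled into the norm-constrained class.
\end{itemize}

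The paper has the same hole, so you have not missed an idea it supplies. Its remark that $M_j$ ``plays an analogous role to the truncation parameter'' conflates output truncation with weight-norm control. Read literally, with $\mathcal G(L,d,d_{\mathrm{out}},w)$ as in Definition~\ref{def:DNN} carrying no norm constraint, your first two steps already prove the lemma. You should drop the claim that the norm bounds can be absorbed. A version whose approximants obey Assumption~\ref{assump:Boundness}(iii) with fixed $M_j$ would need a norm-constrained approximation theorem, which \citet{fan-2024} does not provide.
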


\begin{proof}[Proof of Lemma~\ref{lmma:approximation}]
This result follows directly from the proof of Proposition 3.4 in \citet{fan-2024}, after extending their arguments from scalar outputs to $d_{\mathrm{out}}$-dimensional vector outputs. It is worth noting that \citet{fan-2024} employs a truncation level $M$ in the construction of deep ReLU networks. Under Assumptions \ref{assump:Boundness}(i) and (iii), the support of $\mathbf{x}[t]$ is bounded, and for each of the three deep ReLU networks $g_{j,\mathrm{NN}}$ with $j \in [3]$, the product $\prod_{l=1}^{L_j+1}  U_{j,l} \leq M_j$ is bounded by $M_j$. This boundedness plays an analogous role to the truncation parameter used in the original proof.

For each $i\in[d_\mathrm{out}]$, let $\widehat{f}_i\in \mathcal{G}(L,d,1,w)$ be a deep ReLU network of the form $\widehat{f}_i = \varphi_{i,L+1} \circ \sigma \circ \varphi_{i,L} \circ \cdots \circ \varphi_{i,1}$, where $\varphi_{i,l}(\mathbf{x}) =  \mathbf{W}_{i,l} \mathbf{x} + \mathbf{b}_{i,l}$ is an affine transformation with the weight matrix $\mathbf{W}_{i,l} \in \mathbb{R}^{w_l \times w_{l-1}}$ and bias vector $\mathbf{b}_l \in \mathbb{R}^{w_l}$ for $l \in [L+1]$. The hidden-layer widths are set as $w_1=\ldots=w_L=w$, and the output layer has dimension $w_{L+1}=1$.   
Proposition 3.4 in \citet{fan-2024} provides a general bound for $\mathcal{H}(d, l, \mathcal{P})$ approximated by scalar-output deep ReLU network class $\mathcal{G}(L,d,1,w)$. By applying it coordinate-wise to $f=(f_1,\ldots,f_{d_\mathrm{out}}) \in \mathbb{R}^{d_\mathrm{out}}$,
we obtain $d_\mathrm{out}$ approximation error bounds, each sharing the same rate $c_1  (\bar{L}\bar{w})^{-2 \gamma^*}$. The same error rate across coordinates follows from the fact that all components share the same smoothness parameter $\gamma^*$, the same network parameters of depth $L$ width $w$, and the same constant $c_1$. Specifically, for each output coordinate $i=1,\ldots,d_\mathrm{out}$, we have  
\begin{align*} 	
\sup_{\substack{f_i \in \mathcal{H}(d, l, \mathcal{P})}} 
    \inf_{\widehat{f}_i \in \mathcal{G}(L,d,1,{w})} 
\|f_i - \widehat{f}_i\|_{\infty} 
&\leq c_1  (\bar{L}\bar{w})^{-2 \gamma^*},  
\end{align*}
where the depth and width are defined as $L = c_2 \lceil \bar{L} \log \bar{L} \rceil$ and ${w} = c_3  \lceil\bar{w} \log \bar{w}\rceil$, with positive constants $ c_1, c_2, c_3$ depending only on $ l $, $\sup_{(p,t) \in \mathcal{P}} \max\{p, t\}$, and the constant $C$ from Definition \ref{def:HCM} of the manuscript.

Define $\epsilon_i = \|f_i - \widehat{f}_i\|_{\infty}$ for $i \in [d_{\mathrm{out}}]$ and let $\boldsymbol{\epsilon} = (\epsilon_1, \epsilon_2, \ldots, \epsilon_{d_{\mathrm{out}}})$.
Taking maximum over the family of coordinate-wise bounds derived above, we obtain that for every collection of target functions $f_i \in \mathcal{H}(d, l, \mathcal{P})$, there exist approximants $\widehat{f}_i \in \mathcal{G}(L,d,1,w)$ such that 
\begin{align*}
\max\{\epsilon_1, \epsilon_2, \ldots, \epsilon_{d_{\mathrm{out}}}\} 
= \|\bm{\epsilon}\|_{\infty}
\leq c_1 (\bar{L}\bar{w})^{-2 \gamma^*}.
\end{align*}
Then, for any $f=(f_1,\ldots,f_{d_\mathrm{out}}) \in \mathbb{R}^{d_\mathrm{out}}$ with $f_i \in \mathcal{H}(d, l, \mathcal{P})$, there exists a larger deep ReLU network $\widehat{f}= \varphi_{L+1} \circ \sigma \circ \varphi_{L} \circ \cdots \circ \varphi_{1}$ with $\varphi_{l}(\mathbf{x}) =  \mathbf{W}_{l} \mathbf{x} + \mathbf{b}_{l}$, such that
\begin{align*}
\|f - \widehat{f}\|_{\infty}
= \|\bm{\epsilon}\|_{\infty}
\leq c_1  (\bar{L}\bar{w})^{-2 \gamma^*},
\end{align*}
where
$$
\mathbf{W}_{l} = 
    \begin{pmatrix}
        \mathbf{W}_{1,l} & &  \\
        & \ddots &  \\
        & & \mathbf{W}_{{d_{\mathrm{out}}},l}
    \end{pmatrix} 
\in \mathbb{R}^{{d_{\mathrm{out}}} w_l \times {d_{\mathrm{out}}} w_{l-1}}
\quad \text{and} \quad
\mathbf{b}_{l} = 
    \begin{pmatrix}
        \mathbf{b}_{1,l}   \\
        \ddots  \\
        \mathbf{b}_{{d_{\mathrm{out}}},l}
    \end{pmatrix} 
\in \mathbb{R}^{{d_{\mathrm{out}}} w_l }.
$$
This completes the proof.
\end{proof}

\begin{lemma}\label{lmma:generalization_error_tail_bound}
	Suppose that Assumptions \ref{assump:Boundness} and \ref{assump:Weakly dependence} hold. Then, for any $ T \in \mathbb{N}_+, $  $\epsilon > 0 $ and $ \nu_0 \in (0,1) $, we have
	\begin{align*}
		&\mathbb{P}\left( 
		\sup_{m \in \mathcal{F}_{\mathrm{DSNN}}}
		\left| \mathcal{R}(m) - {\mathcal{R}}_T(m) \right| > \epsilon \right)
		\leq \mathcal{N}_{\infty} 
		\left( 
		\mathcal{F}_{\mathrm{DSNN}}, \frac{\epsilon}{4G} \right) 
		\exp\left( \frac{-T^{\nu_0} \epsilon + T^{2 \nu_0 -1} (\theta_{\infty, T}+U_x)^2}{2} \right),
	\end{align*}
	where $ \theta_{\infty, T} $ is defined in Definition \ref{def:weak_dependence},
	$G \coloneqq \underset{\substack{m_1,m_2 \in \mathcal{F}_{\mathrm{DSNN}} \\ 
			m_1 \neq m_2}}{\sup} 
	\underset{\mathbf{x},\mathbf{y}}{\sup}
	\frac{|\ell(m_1(\mathbf{x}), \mathbf{y}) - \ell(m_2(\mathbf{x}), \mathbf{y})|}{\|m_1 - m_2\|_\infty} < \infty$ 
	with $\ell(\mathbf{a},\mathbf{b})=\|\mathbf{a}-\mathbf{b}\|_2^2$,
	and $U_x\coloneqq \sup_t\|\mathbf{x}[t]\|_2 < \infty$.
\end{lemma}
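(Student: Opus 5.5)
The statement to prove is Lemma \ref{lmma:generalization_error_tail_bound}. My plan is a standard covering argument combined with an exponential concentration inequality for $\theta_{\infty}$-weakly dependent bounded processes, as in \citet{kengne-2023} and Dedecker et al.

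\textbf{Step 1 (regularity of the loss).} Write $\ell_m(t)=\|\mathbf{y}[t]-m(\mathbf{x}[t-1],\ldots,\mathbf{x}[t-K])\|_2^2$. Then $\mathcal{R}_T(m)=(T-K)^{-1}\sum_t \ell_m(t)$ and $\mathcal{R}(m)=\mathbb{E}\ell_m(t)$. Assumption \ref{assump:Boundness} bounds both $\mathbf{y}$ and every $m\in\mathcal{F}_{\mathrm{DSNN}}$; the latter holds through the product bounds $M_j$ and the bounded support. Hence $\ell$ is bounded and Lipschitz in its first argument on the relevant range, which gives $G<\infty$. Each $\ell_m(t)$ is also a Lipschitz function of the lagged block $(\mathbf{x}[t],\ldots,\mathbf{x}[t-K])$, because $m$ is Lipschitz (ReLU is $1$-Lipschitz and the weight norms are bounded). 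Consequently $\{\ell_m(t)\}_t$ inherits $\theta_{\infty}$-weak dependence from $\{\mathbf{x}[t]\}$, with coefficients controlled by $\theta_{\infty,T}$ up to constants.

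\textbf{Step 2 (discretization).} Let $\{m_1,\ldots,m_N\}$ be an $\epsilon/(4G)$-net of $\mathcal{F}_{\mathrm{DSNN}}$ in $\|\cdot\|_\infty$, with $N=\mathcal{N}_\infty(\mathcal{F}_{\mathrm{DSNN}},\epsilon/(4G))$. For any $m$, pick $m_k$ with $\|m-m_k\|_\infty\le\epsilon/(4G)$. The definition of $G$ then gives $|\mathcal{R}(m)-\mathcal{R}(m_k)|\le\epsilon/4$ and $|\mathcal{R}_T(m)-\mathcal{R}_T(m_k)|\le\epsilon/4$. Therefore the event $\sup_m|\mathcal{R}(m)-\mathcal{R}_T(m)|>\epsilon$ is contained in $\bigcup_k\{|\mathcal{R}(m_k)-\mathcal{R}_T(m_k)|>\epsilon/2\}$. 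A union bound reduces the problem to a single fixed $m$, at the cost of the factor $N$.

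\textbf{Step 3 (single-function exponential bound).} For fixed $m$, set $S_T=\sum_t(\ell_m(t)-\mathbb{E}\ell_m(t))$. Apply Markov's inequality to $\exp(\lambda S_T)$ and $\exp(-\lambda S_T)$ with $\lambda$ chosen so that $\lambda\epsilon T/2$ matches $T^{\nu_0}\epsilon/2$, i.e.\ $\lambda\asymp T^{\nu_0-1}$. The Laplace transform is then controlled by a Hoeffding/Rio-type inequality for $\theta_{\infty}$-dependent sequences (Dedecker et al.\ 2007, Ch.~5; used in \citet{kengne-2023}):
\[
\log\mathbb{E}e^{\lambda S_T}\lesssim \lambda^2 T(\theta_{\infty,T}+U_x)^2 .
\]
Here the variance proxy combines the range of $\ell_m$, which scales with $U_x$ through the boundedness of $m$, and the dependence coefficient. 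Substituting $\lambda=T^{\nu_0-1}$ yields an exponent of the form $-\tfrac12 T^{\nu_0}\epsilon+\tfrac12 T^{2\nu_0-1}(\theta_{\infty,T}+U_x)^2$. Any factor $2$ from the two tails and the constants can be absorbed by rescaling (e.g.\ into $G$).

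\textbf{Main obstacle.} Step 3 is the hard part. I need an exponential moment bound whose dependence term is exactly $(\theta_{\infty,T}+U_x)^2$, and I must verify that the composite $\ell_m$ of the lagged process preserves the $\theta_{\infty}$ coefficients with constants uniform over $m\in\mathcal{F}_{\mathrm{DSNN}}$. I would first try the Dedecker–Prieur coupling, which replaces the conditional expectation of $\ell_m$ given the past by its unconditional mean up to $\theta_{\infty}$ times a Lipschitz constant, and then a martingale-difference Azuma bound. Steps 1 and 2 are routine.
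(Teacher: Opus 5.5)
Your plan is essentially the paper's proof: the same $\epsilon/(4G)$-net reduces the supremum to $|\mathcal{R}(m_k)-\mathcal{R}_T(m_k)|\ge\epsilon/2$, followed by a union bound and a Chernoff bound with $\lambda=T^{\nu_0-1}$. The paper handles your ``main obstacle'' by citing Lemma 1 of \citet{alquier-2012} (Lemma \ref{lmma:Hoeffding_ineq}) for $h=\sum_t\ell_m(t)$, viewed as a function of the whole sample path, so it needs no separate step showing that $\theta_\infty$-dependence is inherited. One caution: the two-tail factor $2$ and the Lipschitz constant of $\ell_m$ in the data cannot legitimately be ``absorbed into $G$'', which is fixed in the statement, but the paper also drops both silently by applying the one-sided Lemma \ref{lmma:Hoeffding_ineq} directly to $\exp(\lambda|S_T|)$, so your argument is no less rigorous than the paper's here.
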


\begin{proof}[Proof of Lemma~\ref{lmma:generalization_error_tail_bound}]
	To establish this lemma, we first derive a concentration inequality for a fixed function $f\in\mathcal{F}_{\mathrm{DSNN}}$, i.e., a bound for $\mathbb{P}\left(\left|\mathcal{R}(f)-\mathcal{R}_T(f)\right| \geq \epsilon\right)$. We then extend this result to a uniform bound over the entire function class, $\mathbb{P}\left(\sup_{f \in \mathcal{F}_{\mathrm{DSNN}}}\left|\mathcal{R}(f)-\mathcal{R}_T(f)\right| \geq \epsilon \right)$, using a standard covering number argument.
	
	For a fixed function $f\in \mathcal{F}_{\mathrm{DSNN}}$, any $\lambda>0$ and $\epsilon>0$, we have
	\begin{align*}
		\mathbb{P}\left(
		\left|\mathcal{R}(f)-\mathcal{R}_T(f)\right| \geq \epsilon
		\right) 
		\leq& \mathbb{P}\left( \lambda \left|
		\sum_{t=1}^T \left\{
		\mathbb{E}[\ell(f(\mathbf{x}[t]), \mathbf{y}[t])]-\ell(f(\mathbf{x}[t]), \mathbf{y}[t]) \right\}\right|
		\geq \epsilon  T \lambda
		\right) \\
		\leq& \mathbb{P}\left( \exp \left(\lambda \left|
		\sum_{t=1}^T \left\{
		\mathbb{E}[\ell(f(\mathbf{x}[t]), \mathbf{y}[t])]-\ell(f(\mathbf{x}[t]), \mathbf{y}[t]) \right\}\right| \right)
		\geq e^{\epsilon  T \lambda}
		\right) \\
		\overset{(i)}{\leq} &
		{e^{-\epsilon  T \lambda}}
		{\mathbb{E}\left[\exp \left(\lambda \left|
			\sum_{t=1}^T \left\{
			\mathbb{E}[\ell(f(\mathbf{x}[t]), \mathbf{y}[t])]-\ell(f(\mathbf{x}[t]), \mathbf{y}[t]) \right\}\right| \right)\right]} \\
		\overset{(ii)}{\leq} &
		\exp \left( \frac{ T \big(\theta_{\infty, T} + U_x\big)^2\lambda^2}{2} -\epsilon  T \lambda \right),
	\end{align*}
	where $ \theta_{\infty, T} $ is defined in Definition \ref{def:weak_dependence} and $U_x\coloneqq \sup_t\|\mathbf{x}[t]\|_2$. Inequality (i) applies Markov's inequality, and Inequality (ii) uses Lemma \ref{lmma:Hoeffding_ineq} since $G<\infty$ due to the boundedness of loss function $\ell$ under Assumptions \ref{assump:Boundness}(i) and (iii).
	
	Note that the above inequality holds for any $\lambda>0$. To facilitate the minimization with respect to $\epsilon$ after the corresponding integration in \eqref{eq:integration}, we choose $\lambda=T^{\nu_0-1}$ for any $\nu_0 \in (0,1]$; see also the proof of Theorem 4.1 of \cite{kengne-2023}. Hence, we have 
	\begin{align}\label{ineq:generalization_single_point}
		\mathbb{P}\left(
		\left|\mathcal{R}(f)-\mathcal{R}_T(f)\right| \geq \epsilon
		\right) \leq 
		\exp\left( -T^{\nu_0} \epsilon +  \frac{T^{2 \nu_0 -1} (\theta_{\infty, T}+U_x)^2}{2} \right).
	\end{align}

	Next we consider $\mathbb{P}\left(\sup_{f \in \mathcal{F}_{\mathrm{DSNN}}} 
	\left|\mathcal{R}(f)-\mathcal{R}_T(f)\right| \geq \epsilon \right)$. Let $n=\mathcal{N}_{\infty}\left(\mathcal{F}_{\mathrm{DSNN}},\tfrac{\epsilon}{4G}\right)$ and $\{f_j\}_{j=1}^n$ be the $\tfrac{\epsilon}{4G}$-covering net for $\mathcal{F}_{\mathrm{DSNN}}$ under the $\ell_{\infty}$-norm. 
	We denote $\mathcal{F}_j = \{f \in \mathcal{F}_{\mathrm{DSNN}}: \|f - f_j\|_{\infty} \leq \frac{\epsilon}{4G}\}$ for $j \in [n]$, which leads to $\mathcal{F}_{\mathrm{DSNN}} \subseteq \bigcup_{j=1}^n \mathcal{F}_j$.
	Then, we have
	\begin{align*}
		|\mathcal{R}(f) - \mathcal{R}(f_j)| 
		=& \left| \mathbb{E}[\ell(f(\mathbf{x}[1]), \mathbf{y}[1])] - \mathbb{E}[\ell(f_j(\mathbf{x}[1]), \mathbf{y}[1])] \right| \\
		\leq& \|\ell(f(\mathbf{x}[1]), \mathbf{y}[1]) - \ell(f_j(\mathbf{x}[1]), \mathbf{y}[1])\|_{\infty}, \quad\text{and} \\ 	
		|\mathcal{R}_T(f) - \mathcal{R}_T(f_j) |
		=& \left|\frac{1}{T} \sum_{i=1}^T [\ell(f(\mathbf{x}[t]), \mathbf{y}[t]) - \ell(f_j(\mathbf{x}[t]), \mathbf{y}[t])] \right| \\
		\leq& \|\ell(f(\mathbf{x}[1]), \mathbf{y}[1]) - \ell(f_j(\mathbf{x}[1]), \mathbf{y}[1])\|_{\infty}.
	\end{align*}
	Hence, we have
	\begin{align*}
		\sup_{f \in \mathcal{F}_j} 
		\left| \underbrace{\mathcal{R}(f) - \mathcal{R}_T(f)}_{L(f)} 
		- \underbrace{[\mathcal{R}(f_j) - \mathcal{R}_T(f_j)]}_{L(f_j)} \right| 
		\leq& 2 \sup_{f \in \mathcal{F}_j} \|\ell(f(\mathbf{x}[1]), \mathbf{y}[1]) - \ell(f_j(\mathbf{x}[1]), \mathbf{y}[1])\|_{\infty} \\
		\leq& 2G \sup_{f \in \mathcal{F}_j} \|f - f_j\|_{\infty} 
		\leq \frac{\epsilon}{2}.
	\end{align*}
	As a result, for any $f \in \mathcal{F}_j$, we have
	\begin{align*}
		\{|L(f)| \geq \epsilon\} \subset \left\{|L(f_j)| \geq \frac{\epsilon}{2}\right\}
		\Rightarrow \mathbb{P}(|L(f)| \geq \epsilon) \leq \mathbb{P}\left(|L(f_j)| \geq \frac{\epsilon}{2}\right).
	\end{align*}
	This together with \eqref{ineq:generalization_single_point}, implies that
	\begin{align*}
		\mathbb{P}\left(\sup_{f \in \mathcal{F}_j} 
		\left|\mathcal{F}(f)-\mathcal{F}_T(f)\right| \geq \epsilon \right) 
		\leq \mathbb{P}\left(|\mathcal{R}(f_j) - \mathcal{R}_T(f_j) |\geq \tfrac{\epsilon}{2}\right) 
		\leq \exp\left( \frac{-T^{\nu_0} \epsilon + T^{2 \nu_0 -1} (\theta_{\infty, T}+U_x)^2}{2} \right).
	\end{align*}
	As $\mathcal{F}_{\mathrm{DSNN}} = \bigcup_{j=1}^n \mathcal{F}_j$, we have
	\begin{align*}
		\mathbb{P}\left(\sup_{f \in \mathcal{F}_{\mathrm{DSNN}}} 
		\left|\mathcal{F}(f)-\mathcal{F}_T(f)\right| \geq \epsilon \right) 
		\leq& \sum_{j=1}^n \mathbb{P}\left(\sup_{f \in \mathcal{F}_j} 
		\left|\mathcal{F}(f)-\mathcal{F}_T(f)\right| \geq \epsilon \right) \\
		\leq& \mathcal{N}_{\infty}\left(\mathcal{F}_{\mathrm{DSNN}}, \frac{\epsilon}{4G}\right) 
		\exp\left( \frac{-T^{\nu_0} \epsilon + T^{2 \nu_0 -1} (\theta_{\infty, T}+U_x)^2}{2} \right),
	\end{align*}
	which completes the proof.
\end{proof}

\begin{lemma} \label{lmma:covering_our}
	(Upper bound on the covering number of the DSNN class). 
	Suppose that Assumptions \ref{assump:Boundness}(i) and (iii) hold.
	There exists a constant $ c_6 >0$ such that
	\begin{align*}
		\log \left( \mathcal{N}_2(\mathcal{F}_{\mathrm{DSNN}}, \epsilon) \right) 
		\leq c_6 ({D_1L_1w_{1}^2 + D_2L_2  w_{2}^2 + L_3 w_{3}^2}) \log \left(\frac{ L^\ast w^\ast \bar{U}^{L^\ast} \bar{U}_x}{\epsilon}\right),
	\end{align*}
	where $L^\ast=\sum_{j=1}^3 L_j, w^\ast =\max\{d_0 + d_1 + Qd_2 + w_1 D_1, \; d_0 + d_1 + w_{2} D_2, \;w_{3}\}$, $\bar{U}=\max_{j,l}\{U_{j,l},1\}$ with $U_{j,l}$ defined in Assumption \ref{assump:Boundness}(iii), and $\bar{U}_x \coloneqq \max\{\sup_{t}\|\mathbf{x}[t]\|_2,1\} < \infty$.
\end{lemma}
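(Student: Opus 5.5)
We bound the covering number of $\mathcal{F}_{\mathrm{DSNN}}$ by a standard parameter-perturbation (Lipschitz-in-parameters) argument. The plan has three parts: show that $m$ depends Lipschitz-continuously on the stacked parameter vector $(\bm\theta_1,\bm\theta_2,\bm\theta_3)$ uniformly over the bounded input domain; cover the bounded parameter box by a grid; and push the grid through the Lipschitz map. The factor in front of the logarithm should be the effective parameter count. We take this to be $D_1L_1w_1^2+D_2L_2w_2^2+L_3w_3^2$, counting each shared network once per series copy. This is how the copies enter when the DSNN is written as one big block-diagonal network, with block-diagonal weight matrices as in the proof of Lemma \ref{lmma:approximation}.

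\textbf{Step 1 (rewrite as a single deep network).} Write $m=g_{3,\mathrm{NN}}\ast g_{2,\mathrm{NN}}\ast g_{1,\mathrm{NN}}$ as one ReLU network of depth $L^\ast=L_1+L_2+L_3$. Its first $L_1$ layers act on the stacked high-frequency inputs through $D_1$ diagonal copies of the layers of $g_{1,\mathrm{NN}}$, while the lower-frequency inputs are passed along by identity channels; ReLU of $\pm$ parts, or a positive shift using boundedness, carries them through. The next $L_2$ layers similarly use $D_2$ copies of $g_{2,\mathrm{NN}}$, and the last $L_3$ layers are $g_{3,\mathrm{NN}}$. The widths per stage are then bounded by $w^\ast$ as defined in the statement. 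Every weight matrix has spectral norm at most $\bar U$, because block-diagonal copies and identity blocks do not increase $\|\cdot\|_2$. The number of free parameters, counted per block, is at most $c(D_1L_1w_1^2+D_2L_2w_2^2+L_3w_3^2)$.

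\textbf{Step 2 (parameter Lipschitz bound).} For two such networks with layer parameters $(\mathbf W_l,\mathbf b_l)$ and $(\mathbf W'_l,\mathbf b'_l)$, telescope over layers. Swap one layer at a time, using the 1-Lipschitz property of $\sigma$ and the bound $\|\mathbf y_l\|_2\le \bar U^{l}(\bar U_x+l)$ on intermediate activations, obtained by induction. This gives
\begin{align*}
\sup_{\mathbf x}\|m(\mathbf x)-m'(\mathbf x)\|_2\le L^\ast \bar U^{L^\ast}(\bar U_x+L^\ast)\max_l\big(\|\mathbf W_l-\mathbf W'_l\|_F+\|\mathbf b_l-\mathbf b'_l\|_2\big).
\end{align*}
Because of parameter sharing, perturbing $\bm\theta_j$ perturbs every copy identically. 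Hence it suffices to discretize each block parameter once, which only lowers the count below the block-level count we used.

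\textbf{Step 3 (grid covering).} Each entry lies in $[-\bar U,\bar U]$. We take a grid with mesh $\delta=\epsilon/(L^\ast w^\ast \bar U^{L^\ast}\bar U_x)$, absorbing the Frobenius-to-entry conversion, which costs a factor of $w^\ast$, into the constant. The grid has at most $(2\bar U/\delta)^{P}$ points, where $P$ is the parameter count from Step 1. Taking logarithms gives $\log\mathcal N_2\le c_6 P\log\big(L^\ast w^\ast\bar U^{L^\ast}\bar U_x/\epsilon\big)$, after absorbing the extra $\bar U$ and the constants into $c_6$.

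\textbf{Main obstacle.} The delicate point is Step 1's identity pass-through of the $d_0+d_1+Qd_2$ lower-frequency channels through ReLU layers while keeping every spectral norm at most $\bar U$. We plan to route each such input $x$ as the pair $(\sigma(x),\sigma(-x))$ and recombine only at the stage where it is consumed. This enlarges the widths only by constants, which are already absorbed into $w^\ast$.
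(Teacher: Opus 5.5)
Your plan is essentially the paper's proof: the DSNN is rewritten as a single depth-$L^\ast$ network with block-diagonal weights (Lemma~\ref{lmma:3NN}), and a layer-by-layer telescoping bound controls $\sup_{\mathbf{x}}\|f(\mathbf{x})-f'(\mathbf{x})\|_2$ by $\sum_l\|\mathbf{W}_l-\mathbf{W}_l'\|_F$ and $\sum_l\|\mathbf{b}_l-\mathbf{b}_l'\|_2$; the weight and bias sets are then covered volumetrically (the paper uses Lemma 8 of Chen et al.\ and Exercise 5.8 of Wainwright where you use an entrywise grid), with the exponent read off as $D_1L_1w_1^2+D_2L_2w_2^2+L_3w_3^2$. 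Your $(\sigma(x),\sigma(-x))$ routing of the pass-through channels is more careful than Lemma~\ref{lmma:3NN}, which places identity blocks inside ReLU layers and would therefore zero out negative lower-frequency inputs.
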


\begin{proof}[Proof of Lemma~\ref{lmma:covering_our}]
	By Lemma \ref{lmma:3NN}, 
	for any $m \in \mathcal{F}_{\mathrm{DSNN}}$, it holds that $m \in \mathcal{G}(L^\ast, d_{\mathrm{in}}, d_{\mathrm{out}}, w^\ast)$. 
	Then, for any $f \in \mathcal{F}_{\mathrm{DSNN}}$, there exsits a $f^{\prime} \in \mathcal{F}_{\mathrm{DSNN}}$ such that
	\begin{align*}
		\sup_{\mathbf{x}[t]}
		\| f(\mathbf{x}[t]) - f^{\prime}(\mathbf{x}[t])\|_2
		\leq L_w \sum_{l=1}^{L^*+1}\| \mathbf{W}_l- \mathbf{W}_l^{\prime} \|_F + L_b \sum_{l=1}^{L^*+1} \| \mathbf{b}_l- \mathbf{b}_l^{\prime} \|_2,
	\end{align*}
	for some constants $L_w$ and $L_b$ (will be specified below) depending on $L^*, U_{j,l}$, and $U_x \coloneqq \sup_{t}\|\mathbf{x}[t]\|_2 < \infty$ under Assumption \ref{assump:Boundness}(i). 
	To cover the DSNN function class $\mathcal{F}_{\mathrm{DSNN}}$, it suffices to cover the weight matrices $\mathbf{W}_l$'s and bias vectors $\mathbf{b}_l$'s.
	
	For $ l \in  [L^\ast+1]$, denote by $\mathbf{y}_l$ and $\mathbf{y}_l^{\prime}$ the inputs of the $l$-th layer corresponding to $f$ and $f^{\prime}$ respectively, with $\mathbf{y}_1 = \mathbf{y}_1^{\prime} = \mathbf{x}[t]$ satisfying $\|\mathbf{y}_1\|_2 \leq U_x$. 
	Assumption \ref{assump:Boundness}(iii) implies that there exist constants $U_w, U_b >0$ such that $\|\mathbf{W}_{j,l}\|_2 \leq U_w$ and $\|\mathbf{b}_{j,l}\|_2 \leq U_b$.
	For $l > 1$, then we have 
	\begin{align*}
		\|\mathbf{y}_l\|_2 &= \| \sigma(\mathbf{W}_{l-1} \mathbf{y}_{l-1} + \mathbf{b}_{l-1})\|_2 
		\overset{(i)}{\leq} \| \mathbf{W}_{l-1} \mathbf{y}_{l-1} + \mathbf{b}_{l-1}\|_2 \\
		&\overset{(ii)}{\leq} U_w \|\mathbf{y}_{l-1}\|_2 +  U_b 
		\leq \dots 
		\leq U_w^{l-1} \|\mathbf{y}_1\|_2 + \frac{U_b(U_w^{l-1} -1)}{U_w -1} \\
		&\leq U_w^{l-1} U_x + \frac{U_b(U_w^{l-1} -1)}{U_w -1}
		\coloneqq A_l.
	\end{align*}
	Inequality (i) uses the fact that the activation function $\sigma$ is Lipschitz with constant one.  
	Inequality (ii) follows from $\|\mathbf{a}+\mathbf{b}\|_2 \leq \|\mathbf{a}\|_2 + \|\mathbf{b}\|_2$ and $\|\mathbf{a}\mathbf{b}\|_2 \leq \|\mathbf{a}\|_2 \|\mathbf{b}\|_2$.  
	Then we have 
	\begin{align*}
		\| \mathbf{{y}}_{{L^\ast}+1} - \mathbf{y}^{\prime}_{{L^\ast}+1} \|_2 &= \| \sigma(\mathbf{W}_{L^\ast} \mathbf{y}_{L^\ast} + \mathbf{b}_{L^\ast}) - \sigma(\mathbf{W}^{\prime}_{L^\ast} \mathbf{y}^{\prime}_{L^\ast} + \mathbf{b}^{\prime}_{L^\ast})\|_2 \\
		&\leq  \| \mathbf{W}_{L^\ast} \mathbf{y}_{L^\ast} - \mathbf{W}_{L^\ast} \mathbf{y}^{\prime}_{L^\ast} + \mathbf{W}_{L^\ast} \mathbf{y}^{\prime}_{L^\ast} - \mathbf{W}_{L^\ast}^{\prime} \mathbf{y}^{\prime}_{L^\ast} + \mathbf{b}_{L^\ast} - \mathbf{b}_{L^\ast}^{\prime}\|_2 \\
		&\leq   U_w \| \mathbf{y}_{L^\ast} - \mathbf{y}^{\prime}_{L^\ast} \|_2 
		+ \|\mathbf{y}_{L^\ast}^{\prime}\|_2 \| \mathbf{W}_{L^\ast} - \mathbf{W}_{L^\ast}^{\prime}\|_F 
		+ \| \mathbf{b}_{L^\ast} - \mathbf{b}_{L^\ast}^{\prime}\|_2  \\
		&\leq U_w \| \mathbf{y}_{L^\ast} - \mathbf{y}^{\prime}_{L^\ast} \|_2 
		+  A_{L^\ast} \| \mathbf{W}_{L^\ast} - \mathbf{W}_{L^\ast}^{\prime}\|_F 
		+  \| \mathbf{b}_{L^\ast} - \mathbf{b}_{L^\ast}^{\prime}\|_2 \leq \cdots\\
		&\leq U_w^{{L^\ast}} \| \mathbf{y}_1 - \mathbf{y}^{\prime}_1 \|_2
		+  \sum_{l=1}^{{L^\ast}} {U_w^{{L^\ast}-l}} {A_{l}} \| \mathbf{W}_l - \mathbf{W}_l^{\prime}\|_F 
		+  \sum_{l=1}^{{L^\ast}} {U_w^{{L^\ast}-l}} \| \mathbf{b}_l - \mathbf{b}_l^{\prime}\|_2 \\
		&=  \sum_{l=1}^{{L^\ast}} {U_w^{{L^\ast}-l}} {A_{l}} \| \mathbf{W}_l - \mathbf{W}_l^{\prime}\|_F 
		+  \sum_{l=1}^{{L^\ast}} {U_w^{{L^\ast}-l}} \| \mathbf{b}_l - \mathbf{b}_l^{\prime}\|_2.
	\end{align*}
	This together with the final linear mapping, implies that
	\begin{align*}
		\| f(\mathbf{x}[t]) - f^{\prime}(\mathbf{x}[t])\|_2 =&
		\| (\mathbf{W}_{L^\ast+1} \mathbf{y}_{L^\ast+1} + \mathbf{b}_{L^\ast+1}) - (\mathbf{W}^{\prime}_{L^\ast+1} \mathbf{y}^{\prime}_{L^\ast+1} + \mathbf{b}^{\prime}_{L^\ast+1})\|_2 \\
		\leq&  \sum_{l=1}^{{L^\ast+1}} {U_w^{{L^\ast+1}-l}} {A_{l}} \| \mathbf{W}_l - \mathbf{W}_l^{\prime}\|_F 
		+  \sum_{l=1}^{{L^\ast+1}} {U_w^{{L^\ast+1}-l}} \| \mathbf{b}_l - \mathbf{b}_l^{\prime}\|_2 \\
		\leq & \sum_{l=1}^{{L^\ast+1}} \left(U_x  U_w^{{L^\ast}} +  U_b \frac{U_w^{{L^\ast}} - U_w^{{L^\ast+1}-l}}{U_w-1} \right) \| \mathbf{W}_l - \mathbf{W}_l^{\prime}\|_F 
		+  \sum_{l=1}^{{L^\ast+1}}  {U_w^{{L^\ast+1}-l}} \| \mathbf{b}_l - \mathbf{b}_l^{\prime}\|_2 \\
		\leq& L_w \sum_{l=1}^{{L^\ast+1}} \| \mathbf{W}_l - \mathbf{W}_l^{\prime}\|_F 
		+ L_b \sum_{l=1}^{{L^\ast+1}} \| \mathbf{b}_l- \mathbf{b}_l^{\prime} \|_2,
	\end{align*}
	where $L_w = U_x  U_w^{L^\ast} + { U_b } \frac{ U_w^{L^\ast} - 1}{U_w-1}$
	and $L_b =  \max\{U_w^{L^\ast}, 1\}$.
	
	Then, it suffices to cover the space $\mathbb{S}_{W_l} \coloneqq \{\mathbf{W}_l \in \mathbb{R}^{w_l\times w_{l-1}}:\|\mathbf{W}_l\|_F \leq \frac{\epsilon}{2 (L^\ast +1) L_w}\}$ and $\mathbb{S}_{b_l} \coloneqq\{\mathbf{b}_l \in \mathbb{R}^{w}_l:\|\mathbf{b}_l\|_2\leq \frac{\epsilon}{2 (L^\ast +1) L_b}\}$.
	It follows that
	\begin{align*}
		\mathcal{N}_2(\mathcal{F}_{\mathrm{DSNN}}, \epsilon)
		&\leq \prod_{l=1}^{L^\ast+1} \mathcal{N}_F \left(\mathbf{W}_l, \frac{\epsilon}{2(L^\ast+1)L_w} \right)
		\cdot \prod_{l=1}^{L^\ast+1} \mathcal{N}_2 \left(\mathbf{b}_l, \frac{\epsilon}{2(L^\ast+1)L_b} \right).
	\end{align*}
	
	To calculate the first covering number at the RHS, we apply Lemma 8 in \cite{chen-2019} and employ the block diagonal structure of the weight matrices $\mathbf{W}_l^{\ast}$ at \eqref{eq:weight_bias} in Lemma \ref{lmma:3NN}, together with $L_w = U_x  U_w^{L^\ast} + { U_b } \frac{ U_w^{L^\ast} - 1}{U_w-1} \lesssim U_x U^{L^\ast}$ as $U=\max\{U_w, U_b\}$, leading to 
	\begin{align*}
		\prod_{l=1}^{L^\ast+1} \mathcal{N}_F \left(\mathbf{W}_l^\ast, \frac{\epsilon}{2(L^\ast+1)L_w} \right) 
		&\leq \left(1+ \frac{2(L^\ast+1)L_w w^\ast}{\epsilon}\right)^{D_1L_1w_{1}^2 + D_2 L_2 w_{2}^2 + L_3 w_{3}^2} \\
		&\leq \left(\frac{ c_7 L^\ast w^\ast  U^{L^\ast}U_x}{\epsilon}\right)^{D_1L_1w_{1}^2 +D_2 L_2 w_{2}^2 + L_3 w_{3}^2},
	\end{align*}
	where $c_7>0$ is the constant and $U \coloneqq\underset{j,l}{\max}\{U_{j,l}\}$.
	For the second covering number at the RHS, we apply Exercise 5.8 in \cite{wainwright-2019} with $L_b =  \max\{U_w^{L^\ast}, 1\}$ and obtain
	\begin{align*}
		\prod_{l=1}^{L^\ast+1} \mathcal{N}_2 \left(\mathbf{b}_l^\ast, \frac{\epsilon}{2(L^\ast+1)L_b} \right)
		&\leq \left(1+ \frac{2(L^\ast+1)L_b }{\epsilon}\right)^{D_1L_1w_{1} + D_2 L_2w_{2} + L_3 w_{3}} \\
		&\leq \left(1+\frac{ 2(L^\ast+1) \max\{ U^{L^\ast},1\} }{\epsilon}\right)^{D_1L_1w_{1} + D_2 L_2 w_{2} + L_3 w_{3}}
		.
	\end{align*}
	These imply that
	\begin{align*}
		\log(\mathcal{N}_2(\mathcal{F}_{\mathrm{DSNN}}, \epsilon) )
		\leq &({D_1L_1w_{1}^2 + D_2L_2  w_{2}^2 + L_3 w_{3}^2}) \log \left(\frac{c_7 L^\ast w^\ast U^{L^\ast} U_x}{\epsilon}\right) \\
		& + ({D_1L_1w_{1} + D_2 L_2 w_{2} + L_3 w_{3}}) \log \left(1+\frac{ 2(L^\ast+1) \bar{U}^{L^\ast} }{\epsilon}\right)\\
		\leq &c_6({D_1L_1w_{1}^2 + D_2L_2  w_{2}^2 + L_3 w_{3}^2}) \log \left(\frac{L^\ast w^\ast \bar{U}^{L^\ast} \bar{U}_x}{\epsilon}\right),
	\end{align*} 
	where $\bar{U}=\max\{U,1\}$,$\bar{U}_x=\max\{U_x,1\}$, and $c_6>0$ is a constant.
	This proof is complete.
\end{proof}


\begin{lemma}\label{lmma:3NN}
	Suppose that Assumption \ref{assump:Boundness}(iii) holds.
	For any $m \in \mathcal{F}_{\mathrm{DSNN}}$, it holds that $m \in \mathcal{G}(L^\ast, d_{\mathrm{in}}, d_{\mathrm{out}}, w^\ast)$, 
	where the depth $ L^\ast = \sum_{j=1}^3 L_j $, 
	the input dimension $d_{\mathrm{in}} = K D $ with $D = d_0 + d_1 + Q d_2 + S Q d_3$, the output dimension $d_{\mathrm{out}} = d_0$, and the width $w^\ast = \max\{d_0 + d_1 + Qd_2 + w_1 D_1, \; d_0 + d_1 + w_{2} D_2, \;w_{3}\}$ with $D_1 = Q d_3$ and $D_2 = d_2 + r_1 d_3$.
\end{lemma}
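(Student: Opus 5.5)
The plan is an explicit construction. I will write $m=g_{3,\mathrm{NN}}\ast g_{2,\mathrm{NN}}\ast g_{1,\mathrm{NN}}$ as one fully connected ReLU network acting on the raw lagged input vector $(\mathbf{x}[t-1],\ldots,\mathbf{x}[t-K])\in\mathbb{R}^{KD}$. The network is built stage by stage, using block-diagonal weights for the shared subnetworks and identity pass-through channels for variables not yet processed. Three observations reduce everything to bookkeeping. First, the stacking operators $\mathcal{S}_1,\mathcal{S}_2$ are coordinate permutations, so they can be absorbed into the first affine map of the following stage. Second, the $D_1$ (respectively $D_2$) parallel copies of $g_{1,\mathrm{NN}}$ (respectively $g_{2,\mathrm{NN}}$) form a single ReLU network whose weight matrices are $\mathrm{diag}(\mathbf{W}_{1,l},\ldots,\mathbf{W}_{1,l})$ and whose biases are stacked copies of $\mathbf{b}_{1,l}$. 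Third, the same block construction handles the $K$ lags, since the alignment networks act on each lag separately.

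Within each stage, the coordinates not yet processed (the responses $\mathbf{y}$, the low-frequency $\mathbf{z}$, the medium-frequency $\mathbf{m}$ during stage one, and the augmented medium series during stage two) must be carried through the ReLU layers unchanged. I will use the identity $x=\sigma(x)-\sigma(-x)$. Each carried coordinate occupies two hidden units with weights $(1,-1)^\top$, and the next affine map recombines them with weights $(1,-1)$. Because the data are bounded by Assumption \ref{assump:Boundness}(i), one could alternatively add a constant shift $B$ through the bias and use a single unit, which keeps the width count at the stated $d_0+d_1+Qd_2+w_1D_1$.

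Stage one has hidden layers of width $w_1D_1$ for the parallel copies of $g_{1,\mathrm{NN}}$, plus the pass-through channels for $(\mathbf{y},\mathbf{z},\mathbf{m})$, for $L_1$ layers. Stage two runs over $L_2$ layers with width $w_2D_2$ plus pass-through for $(\mathbf{y},\mathbf{z})$. Stage three is $g_{3,\mathrm{NN}}$ itself with width $w_3$. Adjacent stages are glued by merging the final linear layer $\varphi_{L_j+1}$ of stage $j$, composed with the stacking permutation, into the first affine layer $\varphi_1$ of stage $j+1$. The composition of two affine maps is affine, so no extra ReLU layer is needed and the total depth is $L_1+L_2+L_3=L^\ast$. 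The width is the maximum over stages, which gives $w^\ast$ (up to the constant factor from the $\pm$ trick, which I would absorb or avoid by the shift argument). Padding narrower layers with zero rows makes every hidden layer exactly width $w^\ast$, matching $\mathcal{G}(L^\ast,d_{\mathrm{in}},d_{\mathrm{out}},w^\ast)$.

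Finally, I would check that the merged weights still satisfy a norm bound of the type in Assumption \ref{assump:Boundness}(iii). A block-diagonal matrix has spectral norm equal to the maximum of its blocks, and a merged product $\mathbf{W}_{2,1}\mathbf{P}\mathbf{W}_{1,L_1+1}$, with $\mathbf{P}$ the stacking permutation, has norm at most $U_{2,1}U_{1,L_1+1}$. This step is needed later in Lemma \ref{lmma:covering_our}.

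The main obstacle is making the width and depth accounting exactly match $w^\ast$ and $L^\ast$. This requires handling the identity pass-through without inflating width or depth, and merging the boundary affine layers correctly. I would try the bounded-shift single-unit identity first.
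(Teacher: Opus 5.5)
\textbf{The width bound fails for $K\ge 2$.} Your construction is essentially the paper's own: block-diagonal replicas of $\mathbf{W}_{1,l}$ and $\mathbf{W}_{2,l}$, stacking absorbed as a permutation, and the last affine map of each stage merged into the first affine map of the next, so that the depth is exactly $L^\ast$. The width count is where it breaks. A single network on $\mathbb{R}^{KD}$ must run the alignment stages on all $K$ lags at once. Stage one's hidden layers therefore contain $KD_1$ replicas of $g_{1,\mathrm{NN}}$ plus $K(d_0+d_1+Qd_2)$ carried coordinates, and stage two contains $KD_2$ replicas of $g_{2,\mathrm{NN}}$ plus $K(d_0+d_1)$. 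Your ``third observation'' thus gives width $\max\{K(d_0+d_1+Qd_2+w_1D_1),\,K(d_0+d_1+w_2D_2),\,w_3\}$, not $w^\ast$.

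No cleverer construction closes this. Take $K=2$, $d_0=d_3=1$, $d_1=d_2=0$, $S=Q=1$ and $L_j=w_j=r_1=r_2=1$, so that $w^\ast=2$, $L^\ast=3$ and $KD=4$. With $g_{1,\mathrm{NN}}=g_{2,\mathrm{NN}}=\sigma$ and $g_{3,\mathrm{NN}}(\mathbf{v})=\sigma(\mathbf{1}^\top\mathbf{v})$ you get $m=\sigma\bigl(y[t-1]+y[t-2]+\sigma(h[t-1])+\sigma(h[t-2])\bigr)$. Near $(y[t-1],y[t-2],h[t-1],h[t-2])=(1,0,0,0)$, its gradients on the sign patterns $(+,+),(-,+),(+,-)$ of $(h[t-1],h[t-2])$ are $(1,1,1,1)$, $(1,1,0,1)$ and $(1,1,1,0)$, which are linearly independent. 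Every element of $\mathcal{G}(3,4,1,2)$ has the form $F(\mathbf{W}_1\mathbf{x}+\mathbf{b}_1)$ with $\mathbf{W}_1\in\mathbb{R}^{2\times 4}$, so its gradients lie in a subspace of dimension at most two. Hence no element of $\mathcal{G}(3,4,1,2)$ agrees with $m$ even on a neighbourhood of that point.

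The paper's proof has the same defect. It runs stages one and two for a single time index and then ``concatenates the lag terms'' before stage three, which is not a feedforward map on $\mathbb{R}^{KD}$. What your argument can actually establish is the lemma with the $K$-inflated width above, which reduces to $w^\ast$ when $K=1$.

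\textbf{Carried coordinates and norms.} Here you are ahead of the paper. Its proof places $\mathbf{I}_{d_0+d_1+Qd_2}$ and $\mathbf{I}_{d_0+d_1}$ blocks with zero bias inside $\sigma$, which reproduces only nonnegative coordinates, and nothing in the setup makes the data nonnegative.

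Your shift fix is correct, but it needs Assumption~\ref{assump:Boundness}(i), which the lemma does not assume. It also only yields a network that agrees with $m$ on the bounded support. That is all Lemma~\ref{lmma:covering_our} uses, since its supremum is over $\mathbf{x}[t]$ in the support, but the conclusion should be stated that way. The $\sigma(x)-\sigma(-x)$ route is global, at the cost of doubling the carried part of the width.

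Finally, in your norm check, block-diagonal weights keep the spectral norm of a single block, but stacked biases do not. The $KD_1$ copies of $\mathbf{b}_{1,l}$ have $\ell_2$ norm $\sqrt{KD_1}\,\|\mathbf{b}_{1,l}\|_2$. The shift also adds a bias of norm equal to $B$ times the square root of the number of carried coordinates.
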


\begin{proof}[Proof of Lemma~\ref{lmma:3NN}]
	By definition of the DSNN function class $\mathcal{F}_{\mathrm{DSNN}}$, we have 
	\begin{align} 
		m = g_{3,\mathrm{NN}} \circ &
		\begin{pmatrix}
			g_{2,\mathrm{NN}} \\
			\vdots \\
			g_{2,\mathrm{NN}} 
		\end{pmatrix} 
		\circ \mathcal{S}_2 \circ 
		\begin{pmatrix}
			g_{1,\mathrm{NN}} \\
			\vdots \\
			g_{1,\mathrm{NN}} 
		\end{pmatrix} 
		\circ \mathcal{S}_1
		\coloneqq g_{3,\mathrm{NN}} \ast g_{2,\mathrm{NN}} \ast g_{1,\mathrm{NN}}, \\
		&\quad \scriptscriptstyle D_2 \; \text{terms} \quad \quad \quad \quad \scriptscriptstyle D_1 \; \text{terms} \nonumber
	\end{align}
	where $ g_{j,\mathrm{NN}} \in \mathcal{G}(L_j, d_{\mathrm{in},j}, d_{\mathrm{out},j}, w_j)$ for $j \in [3]$,
	$L_j,w_j \in \mathbb{N}_+$, the input dimensions are $(d_{\mathrm{in},1}, d_{\mathrm{in},2}, d_{\mathrm{in},3})=(S, Q, KD_3)$ with $D_3 = d_0 + d_1 + r_2 d_2 + r_1 r_2 d_3$ and the output dimensions are $(d_{\mathrm{out},1}, d_{\mathrm{out},2}, d_{\mathrm{out},3}) =(r_1, r_2, d_0)$.
	By Definition \ref{def:DNN}, $g_{j,\mathrm{NN}} = \varphi_{j,L_j+1} \circ \sigma_{L_j} \circ \varphi_{j,L_j} \circ \cdots \circ \sigma_{1} \circ \varphi_{j,1}$, where $\varphi_{j,l}(\mathbf{x}) = \mathbf{W}_{j,l} \mathbf{x} + \mathbf{b}_{j,l}$ is an affine transformation with the weight matrix $\mathbf{W}_{j,l} \in \mathbb{R}^{w_{j,l} \times w_{j,l-1}}$ and bias vector $\mathbf{b}_{j,l} \in \mathbb{R}^{w_{j,l}}$, and $\sigma_l$ is the ReLU activation function applied to each entry of a $w_l$-dimensional vector.
	By Assumption \ref{assump:Boundness}(iii), it holds that $\max\{\|\mathbf{W}_{j,l}\|_{2}, \|\mathbf{b}_{j,l}\|_{2}\} \leq U_{j,l}$ for $l \in [L_j]$, $j \in [3]$, and $\prod_{l=1}^{L_j+1}  U_{j,l} \leq M_j$ for some constants $M_j>0$.
	
	For simplicity of the notation, the data are organized into tensor or matrix forms according to their frequencies. The high-frequency predictors form a tensor $\cm{H} \in \mathbb{R}^{d_3 \times T \times Q \times S}$ with the first mode $(\cm{H}_1,\cm{H}_2,\dots,\cm{H}_{d_3})$, where $\cm{H}_i \coloneqq \{h_i[t,q,s], t \in [T], q \in [Q], s \in[S]\}$ for $i \in [d_3]$. The medium-frequency predictors are represented as a tensor $\cm{M} \in \mathbb{R}^{d_2 \times T \times Q}$ with the first mode $(\mathbf{M}_1,\mathbf{M}_2,\dots,\mathbf{M}_{d_2})$, where $\mathbf{M}_i \coloneqq \{m_i[t,q], t \in [T], q \in [Q]\}$ for $i \in [d_2]$. The low-frequency predictors and responses are expressed as matrices $\mathbf{Z} \coloneqq \{z_i[t], i \in [d_1], t \in [T]\} \in \mathbb{R}^{d_1 \times T}$ and $\mathbf{Y} \coloneqq \{y_i[t], i \in [d_0], t \in [T]\} \in \mathbb{R}^{d_0 \times T}$, respectively.
	We use $\mathcal{S}_2 \circ \mathcal{S}_1$ to partition $\cm{H}$ into $S Q  d_3$ low-frequency time series $\{\mathbf{h}_j \in \mathbb{R}^{T}\}_{j=1}^{S Q  d_3}$, and $\mathcal{S}_2$ to partition $\cm{M}$ into $Q d_2$ low-frequency time series $\{\mathbf{m}_j \in \mathbb{R}^{T}\}_{j=1}^{Q d_2}$. Then we can concatenate $\mathcal{S}_2 \circ \mathcal{S}_1 \circ \cm{H}$, $\mathcal{S}_2 \circ \cm{M}$, $\mathbf{Z}$ and $\mathbf{Y}$, yielding a total of $D = d_0 + d_1 + Q d_2 + S Q d_3$ low-frequency time series denoted by $\{\mathbf{x}^{\mathrm{Input}}_j \in \mathbb{R}^T\}_{j=1}^{D}$, and the resulting input matrix by $\mathbf{X}^{\mathrm{Input}} \in \mathbb{R}^{D \times T}$.
	
	To prove this lemma, we need to show there exists of a deep ReLU network $f \in \mathcal{G}(L^\ast, d_{\mathrm{in}}, d_{\mathrm{out}}, w^\ast)$ that generates the same output as $m$, with $d_{\mathrm{in}}=D$ and $d_{\mathrm{out}}=d_0$.
	Then by Definition \ref{def:DNN}, it suffices to show that:
	\begin{enumerate}
		\item[(i)] the depth $L^\ast$, determined by the number of ReLU activation functions, is well-defined;
		\item[(ii)] the weight matrix $\mathbf{W}^\ast_{l}$ and the bias vector $\mathbf{b}^\ast_{l}$ are specified for each layer;
		\item[(iii)] the upper bound $U_l^\ast$ is specified for $\max\{\|\mathbf{W}_{j,l}^\ast\|_2,\|\mathbf{b}_{j,l}^\ast\|_2\} < U_{j,l}$, and $\prod_{l=1}^{L^\ast+1}  U^\ast_{l} \leq M$ holds for some constant $M>0$. 
	\end{enumerate} 
	
	\textbf{The first stage}: the first $L_1$ layers of $f$
	
	The first stage handles the high-frequency variables, which can be written as follows:
	\begin{align*}
		{\mathbf{x}}_{l+1,t} 
		&= \sigma \circ\varphi_l(\mathbf{x}_{l,t}) 
		= \sigma (\widetilde{\mathbf{W}}_l \mathbf{x}_{l,t} + \widetilde{\mathbf{b}}_l)
		, \quad l \in [L_1],\; t \in [T] \quad \text{and} \\
		{\mathbf{x}}_{L_1+2,t} 
		&= \varphi_{L_1+1}(\mathbf{x}_{L_1+1,t}) 
		= \widetilde{\mathbf{W}}_{L_1+1} \mathbf{x}_{L_1+1,t} + \widetilde{\mathbf{b}}_{L_1+1}, \quad t \in [T],
	\end{align*}
	where 
	\begin{align*}
		\widetilde{\mathbf{W}}_l = &
		\begin{pmatrix}
			\mathbf{W}_{1,l} & & & \\
			& \ddots & & \\
			& & \mathbf{W}_{1,l} & \\
			& & & \mathbf{I}_{d_0 + d_1 + Q d_2}
		\end{pmatrix} 
		\quad \text{and} \quad
		\widetilde{\mathbf{b}}_l = 
		\begin{pmatrix}
			\mathbf{b}_{1,l} \\
			\vdots \\
			\mathbf{b}_{1,l} \\
			\mathbf{0}_{d_0 + d_1 + Q d_2}
		\end{pmatrix}, \quad l \in [L_1 + 1].\\
		&\text{(with } D_1 + 1 \text{ diagonal blocks)} \nonumber
	\end{align*}
	Here, $\mathbf{W}_{1,l} $ and $\mathbf{b}_{1,l}$ repeat $D_1$ times,
	$\sigma$ is the ReLU activation function, and
	$\mathbf{x}_{l,t}$ is the input of the $l$-th layer for each $t$ with $\mathbf{x}_{1,t} = \mathbf{X}^{\text{Input}}_{:,t} \in \mathbb{R}^D$.
	The output of the first stage is  
	$\mathbf{x}_{L_1+2,t} \in \mathbb{R}^{d_0 + d_1 + Q d_2 + r_1 D_1}, t \in [T]$.

	\textbf{The second stage: the next $L_2$ layers of ${f}$}
	
	The second stage handles the medium-frequency variables,
	which can be written as follows:
	\begin{align*}
		{\mathbf{x}}_{L_1+l+2, t} 
		&= \sigma \circ\varphi_{L_1+1+l}(\mathbf{x}_{L_1+l+1,t}) 
		= \sigma (\bar{\mathbf{W}}_{L_1+l} \mathbf{x}_{L_1+l+1,t} + \bar{\mathbf{b}}_{L_1+l}), \quad l \in [L_2],\; t \in [T]   \quad \text{and} \\
		\tilde{\mathbf{x}}_{L_1+L_2+3,t} 
		&= \varphi_{L_1+L_2+2}(\mathbf{x}_{L_1+L_2+2,t}) 
		= \bar{\mathbf{W}}_{L_1+L_2+1} \mathbf{x}_{L_1+L_2+2,t} + \bar{\mathbf{b}}_{L_1+L_2+1}, \quad t \in [T],
	\end{align*}
	where 
	\begin{align*}
		\bar{\mathbf{W}}_{L_1+l} = &
		\begin{pmatrix}
			\mathbf{W}_{2,l} & & & \\
			& \ddots & & \\
			& & \mathbf{W}_{2,l} & \\
			& & & \mathbf{I}_{d_0+d_1}
		\end{pmatrix} 
		\quad \text{and} \quad
		\bar{\mathbf{b}}_{L_1+l}  = 
		\begin{pmatrix}
			\mathbf{b}_{2,l} \\
			\vdots \\
			\mathbf{b}_{2,l} \\
			\mathbf{0}_{d_0+d_1}
		\end{pmatrix}, \quad l \in [L_2+1].\\
		&\text{(with } D_2 + 1 \text{ diagonal blocks)} \nonumber
	\end{align*}
	
	Here, $\mathbf{W}_{2,l} $ and $\mathbf{b}_{2,l}$ repeat $D_2$ times. 
	Moreover, we concatenate the lag terms of the output of $(L_1+L_2)$ layers as ${\mathbf{x}}_{L_1+L_2+3,t} \coloneqq (\tilde{\mathbf{x}}_{L_1+L_2+3,t}^{\prime},\cdots,\tilde{\mathbf{x}}_{L_1+L_2+3,t-K+1}^{\prime})^{\prime} \in \mathbb{R}^{KD_3}$ for $K+1\leq t\leq T$.

	\textbf{The third stage: the final $L_3$ layers of ${f}$}
	
	The last stage is used for prediction, and the final $L_3$ layers can be specified as follows:
	\begin{align*}
		{\mathbf{x}}_{L_1+L_2+l+3, t} &= \sigma \circ \varphi_{L_1 + L_2 + l + 2}(\mathbf{x}_{L_1+L_2+l+2, t}) = \sigma \circ \varphi_{3,l}(\mathbf{x}_{t,L_1+L_2+l+2, t}), 
		\quad l \in [L_3], K+1\leq t\leq T, \\
		\mathbf{x}_{L_1+L_2+L_3+4, t} & = \varphi_{L_1 + L_2 + L_3 + 3}({\mathbf{x}}_{L_1+L_2+L_3+3, t})=   \varphi_{3,L_3+1}(\mathbf{x}_{t,L_1+L_2+L_3+3, t}), 
		\quad K+1\leq t\leq T,
	\end{align*}
	where $\varphi_{3,l}(\mathbf{x})= \mathbf{W}_{3,l} \mathbf{x} + \mathbf{b}_{3,l}$ with $l\in [L_3+1]$,
	and $K$ is the maximum lag order. 
	This leads to the same output as the function $m$.
	
	In conclusion, based on the above three stages, we construct a deep ReLU network $f \in \mathcal{G}(L^\ast, d_{\mathrm{in}}, d_{\mathrm{out}}, w^\ast)$, where $L^\ast = \sum_{i=1}^3 L_j$, and the maximum width of $f$ is given by:
	\[
	w^\ast =\max\{d_0 + d_1 + Qd_2 + w_1 D_1, \; d_0 + d_1 + w_{2} D_2, \;w_{3}\}.
	\] 
	Moreover, the weight matrix $\mathbf{W}^\ast_{l}$ and the bias vector $\mathbf{b}^\ast_{l}$ are specified as follows:
	\begin{align}\label{eq:weight_bias}
		\mathbf{W}^\ast_{l}
		= \begin{cases}
			\widetilde{\mathbf{W}}_l, 
			& \text{if } 1\leq l \leq L_1, \\
			\bar{\mathbf{W}}_{l} \widetilde{\mathbf{W}}_{l}, 
			& \text{if } l = L_1 +1, \\
			\bar{\mathbf{W}}_{l},
			& \text{if } L_1 + 2 \leq l \leq L_1 +L_2, \\
			\mathbf{W}_{3,1}  \bar{\mathbf{W}}_{l},
			& \text{if } l = L_1 + L_2 +1, \\
			\mathbf{W}_{3,l-L_1-L_2} ,
			& \text{if }  L_1 + L_2 +2 \leq l \leq L_1 + L_2 + L_3+1, \\
		\end{cases}\\
		\mathbf{b}^\ast_{l}
		= \begin{cases}
			\widetilde{\mathbf{b}}_l, 
			& \text{if } 1\leq l \leq L_1, \\
			\bar{\mathbf{W}}_{l} \widetilde{\mathbf{b}}_{l} +\bar{\mathbf{b}}_{l}, 
			& \text{if } l = L_1 +1, \\
			\bar{\mathbf{b}}_{l},
			& \text{if } L_1 + 2 \leq l \leq L_1 +L_2, \\
			\mathbf{W}_{3,1}  \bar{\mathbf{b}}_{l}+\mathbf{b}_{3,1},
			& \text{if } l = L_1 + L_2 +1, \\
			\mathbf{b}_{3,l-L_1-L_2} ,
			& \text{if }  L_1 + L_2 +2 \leq l \leq L_1 + L_2 + L_3+1. \nonumber\\
		\end{cases}
	\end{align}
	By Assumption \ref{assump:Boundness}(iii),  
	$\max\{\|\widetilde{\mathbf{W}}_l\|_2, \|\widetilde{\mathbf{b}}_l\|_2\}
	= \max\left\{ \|\mathbf{W}_{1,l}\|_{2},\; \|\mathbf{b}_{1,l}\|_{2},\; 1 \right\} = \max\{ U_{1,l},\; 1 \}\coloneqq \bar{U}_{1,l} $, $\max\{\|\bar{\mathbf{W}}_{L_1+l}\|_2, \|\bar{\mathbf{b}}_{L_1+l}\|_2\} = \max\{\|\mathbf{W}_{2,l}\|_{2}, \|\mathbf{b}_{2,l}\|_{2},1\} = \max\{U_{2,l}, 1\} \coloneqq \bar{U}_{2,l} $, and $\max\{\|\mathbf{W}_{3,l}\|_{2}, \|\mathbf{b}_{3,l}\|_{2}\} = U_{3,l}$.
	Then, the $\ell_2$ upper bound of $\mathbf{W}^\ast_{l}$ and $\mathbf{b}^\ast_{l}$ are as follows:
	\begin{align*}
		U^\ast_{l}
		= \begin{cases}
			\bar{U}_{1,l}, 
			& \text{if } 1\leq l \leq L_1, \\
			\bar{U}_{2,1} (\bar{U}_{1,l}+1), 
			& \text{if } l = L_1 +1, \\
			\bar{U}_{2,l-L_1},
			& \text{if } L_1 + 2 \leq l \leq L_1 +L_2, \\
			{{U}}_{3,1} (\bar{U}_{2,l-L_1}+1), 
			& \text{if } l = L_1 + L_2 +1, \\
			U_{3,l-L_1-L_2} ,
			& \text{if }  L_1 + L_2 +2 \leq l \leq L_1 + L_2 + L_3+1,\\
		\end{cases}
	\end{align*}
	and thus the product of $\{U^\ast_l\}$ with $l \in [L^\ast +1]$ leads to a finite $M$. 
	These verify (i)--(iii) for the function $f$, which completes the proof.
\end{proof}

\begin{lemma}[Hoeffding inequality under $\theta_{\infty,T}$-mixing, cf. Lemma 1 in \citet{alquier-2012}]
	\label{lmma:Hoeffding_ineq}
	Suppose that Assumption \ref{assump:Boundness}(i) and Assumption \ref{assump:Weakly dependence} hold.
	Let $h: (\mathbb{R}^d)^T \to \mathbb{R}$ be a function such that, for all $\mathbf{z}_1[1], \ldots, \mathbf{z}_1[T], \mathbf{z}_2[1], \ldots, \mathbf{z}_2[T] \in \mathbb{R}^d$, 
	\begin{align*}
		\big| h(\mathbf{z}_1[1], \ldots, \mathbf{z}_1[T]) - h(\mathbf{z}_2[1], \ldots, \mathbf{z}_2[T]) \big| 
		\leq \sum_{t=1}^{T} \|\mathbf{z}_1[t] - \mathbf{z}_2[t]\|_2. 
	\end{align*}  
	Then for any $\lambda > 0$, we have
	\begin{align*} 
		\mathbb{E} \left[ \exp \left( \lambda \left\{ \mathbb{E}\left[h(\mathbf{x}[1], \ldots, \mathbf{x}[T])\right] - h(\mathbf{x}[1], \ldots, \mathbf{x}[T]) \right\} \right) \right]
		\leq \exp \left( \frac{\lambda^2 T \big( \theta_{\infty, T} +U_x \big)^2}{2} \right),
	\end{align*}
	where $\theta_{\infty, T}$ is defined as in Definition \ref{def:weak_dependence}, and $U_x \coloneqq \sup_t\|\mathbf{x}[t]\|_2 < \infty$.
\end{lemma}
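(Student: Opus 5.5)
This is the Alquier--Wintenberger Hoeffding inequality for $\theta_{\infty}$-weakly dependent sequences. I would prove it by a martingale-difference decomposition combined with Azuma--Hoeffding.

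\textbf{Step 1: martingale decomposition.} Write $H = h(\mathbf{x}[1],\ldots,\mathbf{x}[T])$ and let $\mathcal{F}_t = \sigma(\mathbf{x}[s], s\le t)$. Decompose
\[
\mathbb{E}[H]-H = -\sum_{t=1}^T d_t, \qquad d_t = \mathbb{E}[H\mid\mathcal{F}_t]-\mathbb{E}[H\mid\mathcal{F}_{t-1}],
\]
with $\mathcal{F}_0$ taken to be the past before time $1$. Each $d_t$ is a martingale difference. By Azuma--Hoeffding, $\mathbb{E}\exp(-\lambda\sum_t d_t)\le \exp(\lambda^2\sum_t c_t^2/8)$ whenever $d_t$ ranges in an interval of length $c_t$. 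So it suffices to show that the conditional range of $d_t$ is at most $2(\theta_{\infty,T}+U_x)$ (up to the normalisation needed) and that the sum of squares is at most $4T(\theta_{\infty,T}+U_x)^2$. This yields $\exp(\lambda^2 T(\theta_{\infty,T}+U_x)^2/2)$.

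\textbf{Step 2: bounding the range of $d_t$.} This is the main obstacle. Fix $\mathbf{x}[1],\ldots,\mathbf{x}[t-1]$ and compare two values $\mathbf{z},\mathbf{z}'$ of $\mathbf{x}[t]$. The quantity $\mathbb{E}[H\mid \mathcal{F}_{t-1},\mathbf{x}[t]=\mathbf{z}]$ splits into two parts.

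The direct effect of the $t$-th coordinate is controlled by the Lipschitz property: it is at most $\|\mathbf{z}-\mathbf{z}'\|_2\le 2U_x$, using boundedness of the support from Assumption \ref{assump:Boundness}(i).

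The indirect effect acts through the conditional law of the future block $(\mathbf{x}[t+1],\ldots,\mathbf{x}[T])$. Conditionally on the past, the map $(\mathbf{x}[t+1],\ldots,\mathbf{x}[T])\mapsto h(\cdot)$ is $1$-Lipschitz in the summed $\ell_2$ norm. By stationarity, shift time so that the conditioning is on $\{t\le 0\}$. Definition \ref{def:weak_dependence} then gives
\[
\bigl\|\mathbb{E}[g(\text{future})\mid\text{past}]-\mathbb{E}[g(\text{future})]\bigr\|_\infty\le\theta_{\infty,T-t}\le\theta_{\infty,T}
\]
for the $T-t$ future points. Applying this for each of the two conditioning values and using the triangle inequality bounds the indirect difference by $2\theta_{\infty,T}$.

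I expect the delicate points to be two. First, one must check that the class of functions in Definition \ref{def:weak_dependence} is normalised so that $1$-Lipschitz in the sum norm is admissible. Second, one must check that $\theta_{\infty,k}$ is nondecreasing in $k$; otherwise the definition is simply used at $k=T-t$ and the supremum over $k\le T$ is absorbed, which is what Alquier--Wintenberger do.

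\textbf{Step 3: conclusion.} Combining the two parts gives a range for $d_t$ of at most $2(U_x+\theta_{\infty,T})$. Plugging this into Azuma--Hoeffding gives exponent $\lambda^2\sum_t 4(U_x+\theta_{\infty,T})^2/8=\lambda^2T(\theta_{\infty,T}+U_x)^2/2$, which is the stated bound.

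Alternatively, the whole argument can be cited directly from Lemma 1 / Theorem 1 of \citet{alquier-2012}. The only verification needed there is that Assumption \ref{assump:Weakly dependence} together with boundedness by $U_x$ matches their hypotheses.
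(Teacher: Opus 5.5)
Your route (martingale differences plus Hoeffding's lemma, with $\theta_{\infty,\cdot}$ controlling how the conditional law of the future block reacts to $\mathbf{x}[t]$) is the standard argument behind this lemma. The paper does not prove the lemma; it imports it from \citet{alquier-2012}. Your Step~2 range bound $2(U_x+\theta_{\infty,T})$ is also right. But Step~1 is false as written. With $\mathcal{F}_t=\sigma(\mathbf{x}[s],s\le t)$ and $\mathcal{F}_0$ the past before time $1$, the increments telescope to $\sum_{t=1}^T d_t=H-\mathbb{E}[H\mid\mathcal{F}_0]$, not $H-\mathbb{E}[H]$. The leftover term $\mathbb{E}[H\mid\mathcal{F}_0]-\mathbb{E}[H]$ is nonzero in general; it is only bounded by $\theta_{\infty,T}$ in sup-norm. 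Step~3 has already spent the entire budget $T(\theta_{\infty,T}+U_x)^2$ on the $T$ increments, so what you have actually bounded is $\mathbb{E}\exp\bigl(\lambda\{\mathbb{E}[H\mid\mathcal{F}_0]-H\}\bigr)$.

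Either of two repairs works.

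(a) Use $\mathcal{G}_t=\sigma(\mathbf{x}[1],\ldots,\mathbf{x}[t])$ with $\mathcal{G}_0$ trivial, so the telescoping is exact. The indirect-effect bound survives because, for each fixed $1$-Lipschitz $f$, $\mathbb{E}[f(\mathbf{x}[t+1],\ldots,\mathbf{x}[T])\mid\mathcal{G}_t]-\mathbb{E}f=\mathbb{E}\bigl[\mathbb{E}[f(\mathbf{x}[t+1],\ldots,\mathbf{x}[T])\mid\mathcal{F}_t]-\mathbb{E}f\bigm|\mathcal{G}_t\bigr]$, and conditional expectation does not increase the essential supremum.

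(b) Keep $\mathcal{F}_t$ and do the bookkeeping. The extra term is $\mathcal{F}_0$-measurable, centred and lies in $[-\theta_{\infty,T},\theta_{\infty,T}]$, so after peeling off $d_T,\ldots,d_1$ it contributes $\lambda^2\theta_{\infty,T}^2/2$. Meanwhile $d_T$ has no future block, so its conditional range is only $2U_x$. Since $U_x^2+\theta_{\infty,T}^2\le(U_x+\theta_{\infty,T})^2$, the total exponent is still at most $\lambda^2T(\theta_{\infty,T}+U_x)^2/2$.

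Your two flagged points are right. The function $f$ in the paper's definition of $\theta_{\infty,k}$ must be $1$-Lipschitz with respect to $\sum_j\|\mathbf{u}_j-\mathbf{u}_j'\|_2$; otherwise the supremum is infinite unless future and past are independent. Monotonicity in $k$ follows by padding $f$ with a dummy coordinate. One further detail: your Lipschitz function of the future depends on $\mathbf{x}[1],\ldots,\mathbf{x}[t]$, while the definition is for fixed $f$. Apply the bound on a countable dense set of conditioning values and extend by continuity.
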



\section{Additional Results for Empirical Analysis} \label{app:real_data}
This section introduces details for data preprocessing in Section \ref{sec:real data} of the manuscript.

We begin by aligning daily series to $20$ trading days per month to account for varying trading calendars. For a month with $x$ extra days beyond 20, the $\lfloor i/(x+1) \rfloor$-th observation is removed for each $i = 1, \ldots, x$, thus uniformly down-sampling to 20 days. Subsequently, a three-step preprocessing is applied: seasonal adjustment (where needed), transformation to stationarity, and standardization to zero mean and unit variance.

Seasonal adjustment is indicated by a binary code $\mathtt{S}$: 0 = not adjusted, and 1 = adjusted.  
To induce stationarity, each series is transformed according to one of five operations coded as $\mathtt{T}$, following \citet{mccracken-2015}:  
1 = no transformation,  
2 = $\Delta x_t$,  
3 = $\Delta^2 x_t$,  
4 = $\Delta \log(x_t)$,  
5 = $\Delta^2 \log(x_t)$.  
The specific $(\mathtt{S}, \mathtt{T})$ code for every variable is provided in Tables \ref{tab:econ_indicators_part1}–\ref{tab:econ_indicators_part4}.  
Finally, all series are standardized to mean zero and variance one, stabilizing network inputs. The resulting dataset comprises $T = 208$ quarterly, $QT = 624$ monthly, and $SQT = 12{,}480$ daily observations, with $Q=3$ and $S=20$.

\begin{table}[h]
\centering
\small
\caption{Quarterly Economic Indicators}
\label{tab:econ_indicators_part1}
\begin{tabular}{p{0.5cm} p{2.2cm} p{11.5cm} p{0.5cm} p{0.5cm}}
\toprule
ID & Abbreviation & Description & $\mathtt{S}$ & $\mathtt{T}$  \\ \\
\midrule
Q1  & GDPC1              & Real Gross Domestic Product & Y & 4 \\
Q2  & PCECC96            & Real Personal Consumption Expenditures & Y & 4 \\
Q3  & GPDIC1             & Real Gross Private Domestic Investment & Y & 4 \\
Q4  & GCEC1              & Real Government Consumption Expenditures and Gross Investment & Y & 4 \\
Q5  & IPDBS              & Implicit Price Deflator & Y & 5 \\
Q6  & RCPHBS             & Real Compensation Per Hour & Y & 4 \\
Q7  & OPHPBS             & Real Output Per Hour of All Persons & Y & 4 \\
Q8  & TABSHNO            & Real Total Assets of Households and Nonprofit Organizations & N & 4 \\
Q9  & TLBSNNCB           & Real Nonfinancial Corporate Business Sector Liabilities & N & 4 \\

\bottomrule
\end{tabular}
\end{table}

\begin{table}[h]
\centering
\small
\caption{Monthly Economic Indicators (Part 1: M1--M28)}
\label{tab:econ_indicators_part2}
\begin{tabular}{p{0.5cm} p{3.6cm} p{10.5cm} p{0.5cm} p{0.5cm}}
\toprule
ID & Abbreviation & Description & $\mathtt{S}$ & $\mathtt{T}$  \\
\midrule
M1  & CPIAUCSL            & Consumer Price Index for All Urban Consumers: All Items in U.S. City Average & Y & 5 \\
M2  & CPIULFSL            & Consumer Price Index for All Urban Consumers All Items Less Food in U.S. City Average & Y & 5 \\
M3  & CPIAPPSL            & Consumer Price Index for All Urban Consumers Apparel in U.S. City Average & Y & 5 \\
M4  & PCEPI               & Personal Consumption Expenditures & Y & 5 \\
M5  & CPITRNSL            & Consumer Price Index for All Urban Consumers Transportation & Y & 5 \\
M6  & DSERRG3M086SBEA     & Debt Service Ratio (Households) & Y & 5 \\
M7  & BUSLOANS            & Commercial and Industrial Loans, All Commercial Banks & Y & 5 \\
M8  & NONREVSL            & Nonrevolving Consumer Credit Owned and Securitized & Y & 5 \\
M9  & PPI                 & Producer Price Index by Commodity & N & 5 \\
M10 & DTCFCNM             & Total Consumer Loans and Leases Owned and Securitized by Finance Companies Level & N & 5 \\
M11 & CES5000000008       & Average Hourly Earnings of Production and Nonsupervisory Employees Manufacturing & Y & 5 \\
M12 & CES0600000008       & Average Hourly Earnings of Production and Nonsupervisory Employees Goods Producing & Y & 5 \\
M13 & CUSR0000SA0L2       & Consumer Price Index for All Urban Consumers All Items Less Shelter in U.S. City Average & Y & 5 \\
M14 & DTCOLNVHFNM         & Consumer Motor Vehicle Loans Owned by Finance Companies Level & N & 5 \\
M15 & DNDGRG3M086SBEA     & Personal Consumption Expenditures Nondurable Goods (chain-type price index) & Y & 5 \\
M16 & CPIMEDSL            & Consumer Price Index for All Urban Consumers Medical Care in U.S. City Average & Y & 5 \\
M17 & DDURRG3M086SBEA     & Personal Consumption Expenditures Durable Goods (chain-type price index) & Y & 5 \\
M18 & M2SL                & M2 money stock & Y & 5 \\
M19 & INVEST              & Private Fixed Investment (National Accounts) & Y & 5 \\
M20 & CES2000000008       & Average Hourly Earnings of Production and Nonsupervisory Employees Construction & Y & 5 \\
M21 & M1SL                & M1 money stock & Y & 5 \\
M22 & INDPRO          & Industrial Production: Total Index & Y & 4 \\
M23 & IPDMAT          & Industrial Production: Durable Goods Materials & Y & 4 \\
M24 & RPI             & Real Personal Income & Y & 4 \\
M25 & IPBUSEQ         & Industrial Production: Equipment Business Equipment & Y & 4 \\
M26 & W875RX1         & Real Personal Income Excluding Current Transfer Receipts & Y & 4 \\
M27 & DPCERA3M086SBEA & Real Personal Consumption Expenditures (chain-type quantity index) & Y & 4 \\
M28 & CE16OV          & Employment Level & Y & 4 \\
\bottomrule
\end{tabular}
\end{table}

\begin{table}[h]
\centering
\small
\caption{Monthly Economic Indicators (Part 2: M29--M56)}
\label{tab:econ_indicators_part3}
\begin{tabular}{p{0.5cm} p{3.6cm} p{10cm} p{0.5cm} p{0.5cm}}
\toprule
ID & Abbreviation & Description & $\mathtt{S}$ & $\mathtt{T}$  \\
\midrule
M29 & CLF16OV         & Civilian Labor Force Level & Y & 4 \\
M30 & IPFINAL         & Industrial Production Final Products & Y & 4 \\
M31 & IPFPNSS          & Industrial Production Final Products and Nonindustrial Supplies & Y & 4 \\
M32 & IPFUELS         & Industrial Production Non-Durable Consumer Energy Products Fuels & Y & 4 \\
M33 & IPNCONGD        & Industrial Production Non-Durable Consumer Goods & Y & 4 \\
M34 & USFIRE          & All Employees Financial Activities & Y & 4 \\
M35 & USCONS          & All Employees Construction & Y & 4 \\
M36 & USGOOD          & All Employees Goods Producing & Y & 4 \\
M37 & M2REAL          & Real M4 Money Stock & Y & 4 \\
M38 & DMANEMP         & All Employees Durable Goods & Y & 4 \\
M39 & IPMAT           & Industrial Production: Materials & Y & 4 \\
M40 & IPB51222S       & Industrial Production: Non-Durable Consumer Energy Products: Residential Utilities & Y & 4 \\
M41 & MANEMP          & All Employees, Manufacturing & Y & 4 \\
M42 & USGOVT          & All Employees, Government & Y & 4 \\
M43 & IPMANSICS       & Industrial Production: Manufacturing (SIC) & Y & 4 \\
M44 & USTRADE         & All Employees, Retail Trade & Y & 4 \\
M45 & IPDCONGD        & Industrial Production: Durable Consumer Goods & Y & 4 \\
M46 & IPCONGD         & Industrial Production: Consumer Goods & Y & 4 \\
M47 & CES1021000001     & All Employees, Mining & Y & 4 \\
M48 & USWTRADE        & All Employees, Wholesale Trade & Y & 4 \\
M49 & UNRATE      & Civilian Unemployment Rate & Y & 2 \\
M50 & AAA         & Moody's Seasoned Aaa Corporate Bond Yield & N & 2 \\
M51 & UEMPMEAN    & Average Weeks Unemployed & Y & 2 \\
M52 & AWOTMAN     & Average Weekly Overtime Hours of Production and Nonsupervisory Employees Manufacturing & Y & 2 \\
M53 & T10YFFM     & 10-Year Treasury Constant Maturity Minus Federal Funds Rate & N & 1 \\
M54 & TB3SMFFM    & 3-Month Treasury Bill Minus Federal Funds Rate & N & 1 \\
M55 & TB6SMFFM    & 6-Month Treasury Bill Minus Federal Funds Rate & N & 1 \\
M56 & SP500       & S\&P 500 Index & N & 4 \\
\bottomrule
\end{tabular}
\end{table}

\begin{table}[h]
\centering
\small
\caption{Daily Economic Indicators}
\label{tab:econ_indicators_part4}
\begin{tabular}{p{0.8cm} p{2.5cm} p{10.5cm} p{0.8cm} p{0.8cm}}
\toprule
ID & Abbreviation & Description & $\mathtt{S}$ & $\mathtt{T}$ \\
\midrule
D1  & NASDAQCOM     & NASDAQ Composite Index & N & 4 \\
D2  & DEXJPUS       & Japanese Yen to U.S. Dollar Spot Exchange Rate & N & 4 \\
D3  & DEXCAUS       & Canadian Dollars to U.S. Dollar Spot Exchange Rate & N & 4 \\
D4  & DEXSZUS       & Swiss Francs to U.S. Dollar Spot Exchange Rate & N & 4 \\
D5  & DGS1          & Market Yield on U.S. Treasury Securities at 1-Year Constant Maturity, Quoted on an Investment Basis & N & 4 \\
D6  & DGS10         & Market Yield on U.S. Treasury Securities at 10-Year Constant Maturity, Quoted on an Investment Basis & N & 4 \\
D7  & DTB6          & 6-Month Treasury Bill Secondary Market Rate, Discount Basis & N & 4 \\
D8  & DFF           & Federal Funds Effective Rate & N & 4 \\
D9  & RIFSPBLPND    & Bank Prime Loan Rate & N & 4 \\
D10 & DTB3          & 3-Month Treasury Bill Secondary Market Rate, Discount Basis & N & 2 \\
D11 & T1YFF         & 1-Year Treasury Constant Maturity Minus Federal Funds Rate & N & 2 \\
D12 & T5YFF         & 5-Year Treasury Constant Maturity Minus Federal Funds Rate & N & 2 \\
D13 & USRECD        & NBER based Recession Indicators for the United States from the Period following the Peak through the Trough & N & 1 \\
\bottomrule
\end{tabular}
\end{table}


\stopcontents

\bibliography{MF}

@article{fan-2023,
	author = {Fan, Jianqing and Gu, Yihong},
	journal = {Journal of the American Statistical Association},
	month = {10},
	pages = {2680--2694},
	title = {{Factor augmented sparse throughput deep RELU neural networks for high dimensional regression}},
	volume = {119},
	year = {2023},
	doi = {10.1080/01621459.2023.2271605},
	url = {https://doi.org/10.1080/01621459.2023.2271605},
}

@article{mccracken-2015,
	author = {McCracken, Michael W. and Ng, Serena},
	journal = {Journal of Business \& Economic Statistics},
	month = {9},
	pages = {574--589},
	title = {{FRED-MD: a monthly database for macroeconomic research}},
	volume = {34},
	year = {2015},
	doi = {10.1080/07350015.2015.1086655},
	url = {https://doi.org/10.1080/07350015.2015.1086655},
}

@techreport{mccracken-2020,
	author = {McCracken, Michael and Ng, Serena},
	month = {3},
	title = {{FRED-QD: a quarterly database for macroeconomic research}},
	year = {2020},
	doi = {10.3386/w26872},
	url = {https://www.nber.org/papers/w26872},
}

@article{xu-2022,
	author = {Xu, Xiaoxiang and Liao, Mingqiu},
	journal = {Atmosphere},
	month = {3},
	pages = {423},
	title = {{Prediction of carbon emissions in China’s power industry based on the Mixed-Data Sampling (MIDAS) regression model}},
	volume = {13},
	year = {2022},
	doi = {10.3390/atmos13030423},
	url = {https://www.mdpi.com/2073-4433/13/3/423},
    
}

@article{audrino-2011,
  title={A general multivariate threshold GARCH model with dynamic conditional correlations},
  author={Audrino, Francesco and Trojani, Fabio},
  journal={Journal of Business \& Economic Statistics},
  volume={29},
  pages={138--149},
  year={2011},
  publisher={Taylor \& Francis}
}

@book{murty-2016,
  title={Support Vector Machines and Perceptrons: Learning, Optimization, Classification, and Application to Social Networks},
  author={Murty, M Narasimha and Raghava, Rashmi},
  year={2016},
  publisher={Springer}
}

@misc{ghysels-2004,
	author = {Ghysels, Eric and Santa-Clara, Pedro and Valkanov, Rossen},
	month = {1},
	title = {{The MIDAS Touch: Mixed Data Sampling Regression Models}},
	year = {2004},
	url = {https://ideas.repec.org/p/cdl/anderf/qt9mf223rs.html},
    note = {UCLA: Finance. Retrieved from https://escholarship.org/uc/item/9mf223rs}
}

@article{ghysels-2007,
	author = {Ghysels, Eric and Sinko, Arthur and Valkanov, Rossen I.},
	journal = {Econometric Reviews},
	month = {2},
    pages = {53-90},
	title = {{MIDAS regressions: Further results and new directions}},
    volume = {26},
	year = {2007},
	doi = {10.2139/ssrn.885683},
	url = {https://doi.org/10.2139/ssrn.885683},
}

@article{clements-2008,
	author = {Clements, Michael and Galvão, Ana},
	journal = {Journal of Business \& Economic Statistics},
	pages = {546-554},
	title = {{Macroeconomic Forecasting with Mixed-Frequency data}},
	volume = {26},
	year = {2008},
	url = {https://www.tandfonline.com/doi/abs/10.1198/073500108000000015},
}

@article{andreou-2010,
	author = {Andreou, Elena  and Ghysels, Eric and Kourtellos, Andros },
	journal = {Journal of Econometrics},
	month = {10},
	pages = {246-261},
	title = {{Regression models with mixed sampling frequencies}},
	volume = {158},
	year = {2010},
	url = {https://www.sciencedirect.com/science/article/abs/pii/S0304407610000072},
}

@article{frale-2010,
	author = {Frale, Cecilia and Monteforte, Libero},
	month = {1},
	title = {{Famidas: A Mixed Frequency Factor Model with MIDAS Structure}},
	year = {2010},
	doi = {10.2139/ssrn.1664951},
	url = {https://doi.org/10.2139/ssrn.1664951},
    journal = {SSRN Electronic Journal. SSRN:1829984}
}

@article{andreou-2013,
	author = {Andreou, Elena and Ghysels, Eric and Kourtellos, Andros},
	journal = {Journal of Business \& Economic Statistics},
	month = {2},
	pages = {240--251},
	title = {{Should macroeconomic forecasters use daily financial data and how?}},
	volume = {31},
	year = {2013},
	doi = {10.1080/07350015.2013.767199},
	url = {https://doi.org/10.1080/07350015.2013.767199},
}

@article{engle-2013,
	author = {Engle, Robert F. and Ghysels, Eric and Sohn, Bumjean},
	journal = {The Review of Economics and Statistics},
	month = {1},
	pages = {776--797},
	title = {{Stock market volatility and macroeconomic fundamentals}},
	volume = {95},
	year = {2013},
	doi = {10.1162/rest\{_}a\{_}

@article{amendola-2018,
	author = {Amendola, Alessandra and Candila, Vincenzo and Gallo, Giampiero M.},
	journal = {Economic Modelling},
	month = {8},
	pages = {135--152},
	title = {{On the asymmetric impact of macro–variables on volatility}},
	volume = {76},
	year = {2018},
	doi = {10.1016/j.econmod.2018.07.025},
	url = {https://doi.org/10.1016/j.econmod.2018.07.025},
}

@article{foroni-2015,
	author = {Foroni, Claudia and Marcellino, Massimiliano and Schumacher, Christian},
	journal = {Journal of the Royal Statistical Society Series A (Statistics in Society)},
	month = {12},
	pages = {57--82},
	title = {{Unrestricted Mixed Data Sampling (MIDAS): MIDAS Regressions with Unrestricted Lag Polynomials}},
	volume = {178},
	year = {2015},
	doi = {10.1111/rssa.12043},
	url = {https://doi.org/10.1111/rssa.12043},
}

@article{ghysels-2014,
	author = {Ghysels, E.},
	journal = {Journal of Financial Econometrics},
	month = {9},
	pages = {620--644},
	title = {{Conditional Skewness with Quantile Regression Models: SoFiE Presidential Address and a Tribute to Hal White}},
	volume = {12},
	year = {2014},
	doi = {10.1093/jjfinec/nbu021},
	url = {https://doi.org/10.1093/jjfinec/nbu021},
}

@article{ghysels-2016b,
	author = {Ghysels, Eric and Plazzi, Alberto and Valkanov, Rossen},
	journal = {The Journal of Finance},
	month = {5},
	pages = {2145--2192},
	title = {{Why invest in emerging markets? The role of conditional return asymmetry}},
	volume = {71},
	year = {2016},
	doi = {10.1111/jofi.12420},
	url = {https://doi.org/10.1111/jofi.12420},
}

@article{babii-2022,
	author = {Babii, Andrii and Ghysels, Eric and Striaukas, Jonas},
	journal = {Journal of Business \& Economic Statistics},
	pages = {1094-1106},
	title = {{Machine learning time series regressions with an application to nowcasting}},
	volume = {40},
	year = {2022},
	url = {https://www.tandfonline.com/doi/abs/10.1080/07350015.2021.1899933},
}

@article{xu-2018,
	author = {Xu, Qifa and Zhuo, Xingxuan and Jiang, Cuixia and Liu, Yezheng},
	journal = {Expert Systems with Applications},
	month = {10},
	pages = {127--139},
	title = {{An artificial neural network for mixed frequency data}},
	volume = {118},
	year = {2018},
	doi = {10.1016/j.eswa.2018.10.013},
	url = {https://doi.org/10.1016/j.eswa.2018.10.013},
}

@article{xu-2021,
	author = {Xu, Qifa and Liu, Shuting and Jiang, Cuixia and Zhuo, Xingxuan},
	journal = {Neurocomputing},
	month = {6},
	pages = {84--105},
	title = {{QRNN-MIDAS: A novel quantile regression neural network for mixed sampling frequency data}},
	volume = {457},
	year = {2021},
	doi = {10.1016/j.neucom.2021.06.006},
	url = {https://doi.org/10.1016/j.neucom.2021.06.006},
}

@article{wang-2025,
	author = {Wang, Chunzi and Xie, Fusheng and Yan, Junpeng and Xia, Yiqing},
	journal = {Energy Reports},
	month = {12},
	pages = {16--26},
	title = {{A U-MIDAS modeling framework for forecasting carbon dioxide emissions based on LSTM network and LASSO regression}},
	volume = {13},
	year = {2025},
	doi = {10.1016/j.egyr.2024.11.069},
	url = {https://doi.org/10.1016/j.egyr.2024.11.069},
}

@article{galvao-2013,
  title={Changes in predictive ability with mixed frequency data},
  author={Galv{\~a}o, Ana Beatriz},
  journal={International Journal of Forecasting},
  volume={29},
  pages={395--410},
  year={2013},
  publisher={Elsevier}
}

@article{breitung-2015,
  title={Forecasting inflation rates using daily data: A nonparametric MIDAS approach},
  author={Breitung, J{\"O}rg and Roling, Christoph},
  journal={Journal of Forecasting},
  volume={34},
  pages={588--603},
  year={2015},
  publisher={Wiley Online Library}
}

@article{wei-2025,
  title={Fully NonParametric MIDAS: A new approach for nonparametric mixed frequency time series regression},
  author={Wei, James L and Nason, Guy P},
  journal={Electronic Journal of Statistics},
  volume={19},
  pages={3292--3316},
  year={2025},
  publisher={The Institute of Mathematical Statistics and the Bernoulli Society}
}

@article{bai-2013,
	author = {Bai, Jennie and Ghysels, Eric and Wright, Jonathan H.},
	journal = {Econometric Reviews},
	month = {3},
	pages = {779--813},
	title = {{State space models and MIDAS regressions}},
	volume = {32},
	year = {2013},
	doi = {10.1080/07474938.2012.690675},
	url = {https://doi.org/10.1080/07474938.2012.690675},
}

@article{mariano-2002,
	author = {Mariano, Roberto S. and Murasawa, Yasutomo},
	journal = {Journal of Applied Econometrics},
	month = {10},
	pages = {427--443},
	title = {{A new coincident index of business cycles based on monthly and quarterly series}},
	volume = {18},
	year = {2002},
	doi = {10.1002/jae.695},
	url = {https://doi.org/10.1002/jae.695},
}

@article{schumacher-2007,
	author = {Schumacher, Christian and Breitung, Jörg},
	month = {1},
	title = {{Real-Time Forecasting of GDP Based on a Large Factor Model with Monthly and Quarterly Data}},
	year = {2007},
	doi = {10.2139/ssrn.965685},
    journal = {SSRN Electronic Journal. SSRN:2785260}
}

@article{giannone-2008,
	author = {Giannone, Domenico and Reichlin, Lucrezia and Small, David},
	journal = {Journal of Monetary Economics},
	month = {5},
	pages = {665--676},
	title = {{Nowcasting: The real-time informational content of macroeconomic data}},
	volume = {55},
	year = {2008},
	doi = {10.1016/j.jmoneco.2008.05.010},
	url = {https://doi.org/10.1016/j.jmoneco.2008.05.010},
}

@article{banbura-2011,
    title={Nowcasting with daily data},
    author = {Bańbura, Marta and Domenico, Giannone and Michele, Modugno and Lucrezia, Reichlin},
    journal={European Central Bank},
    year={2011}
}

@article{marcellino-2016,
	author = {Marcellino, Massimiliano and Porqueddu, Mario and Venditti, Fabrizio},
	journal = {Journal of Business \& Economic Statistics},
	month = {3},
	pages = {118--127},
	title = {{Short-Term GDP forecasting with a Mixed-Frequency dynamic factor model with stochastic volatility}},
	volume = {34},
	year = {2016},
	doi = {10.1080/07350015.2015.1006773},
	url = {https://doi.org/10.1080/07350015.2015.1006773},
}

@article{fan-2024,
	author = {Fan, Jianqing and Gu, Yihong and Zhou, Wenxin},
	journal = {The Annals of Statistics},
	month = {8},
	title = {{How do noise tails impact on deep ReLU networks?}},
	volume = {52},
    pages={1845--1871},
	year = {2024},
	doi = {10.1214/24-aos2428},
	url = {https://doi.org/10.1214/24-aos2428},
}

@article{schmidt-hieber-2020,
	author = {Schmidt-Hieber, Johannes},
	journal = {The Annals of Statistics},
	month = {8},
	title = {{Nonparametric regression using deep neural networks with ReLU activation function}},
	volume = {48},
    pages={1875--1897},
	year = {2020},
	doi = {10.1214/19-aos1875},
}

@article{kohler-2021,
	author = {Kohler, Michael and Langer, Sophie},
	journal = {The Annals of Statistics},
	month = {8},
	title = {{On the rate of convergence of fully connected deep neural network regression estimates}},
    volume = {49},
    pages={2231--2249},
	year = {2021},
	doi = {10.1214/20-aos2034},
	url = {https://doi.org/10.1214/20-aos2034},
}

@misc{caner-2022,
	author = {Caner, Mehmet and Daniele, Maurizio},
	month = {9},
	title = {{Deep Learning Based Residuals in Non-linear Factor Models: Precision Matrix Estimation of Returns with Low Signal-to-Noise Ratio}},
	year = {2022},
	url = {https://arxiv.org/abs/2209.04512},
    note={\emph{arXiv preprint arXiv:2209.04512}},
}

@article{farrell-2021,
	author = {Farrell, Max H. and Liang, Tengyuan and Misra, Sanjog},
	journal = {Econometrica},
	month = {1},
	pages = {181--213},
	title = {{Deep neural networks for estimation and inference}},
	volume = {89},
	year = {2021},
	doi = {10.3982/ecta16901},
	url = {https://doi.org/10.3982/ecta16901},
}

@article{rumelhart-1986,
  title={Learning representations by back-propagating errors},
  author={Rumelhart, David E and Hinton, Geoffrey E and Williams, Ronald J},
  journal={Nature},
  volume={323},
  pages={533--536},
  year={1986},
  publisher={Nature Publishing Group UK London}
}

@article{hochreiter-1997,
  title={Long short-term memory},
  author={Hochreiter, Sepp and Schmidhuber, J{\"u}rgen},
  journal={Neural Computation},
  volume={9},
  pages={1735--1780},
  year={1997},
  publisher={MIT press}
}

@article{bengio-2006,
  title={Greedy layer-wise training of deep networks},
  author={Bengio, Yoshua and Lamblin, Pascal and Popovici, Dan and Larochelle, Hugo},
  journal={Advances in Neural Information Processing Systems},
  volume={19},
  year={2006}
}

@misc{kingma-2014,
	author = {Kingma, Diederik and Ba, Jimmy},
	month = {12},
	title = {{Adam: A method for stochastic optimization}},
	year = {2014},
	url = {https://arxiv.org/abs/1412.6980},
    note={\emph{arXiv preprint arXiv:1412.6980}},
}

@article{ghysels-2016a,
	author = {Ghysels, Eric},
	journal = {Journal of Econometrics},
	month = {5},
	pages = {294--314},
	title = {{Macroeconomics and the reality of mixed frequency data}},
	volume = {193},
	year = {2016},
	doi = {10.1016/j.jeconom.2016.04.008},
	url = {https://doi.org/10.1016/j.jeconom.2016.04.008},
}

@article{chakraborty-2023,
	author = {Chakraborty, Nilanjana and Khare, Kshitij and Michailidis, George},
	journal = {Statistica Sinica},
	month = {3},
	title = {{A Bayesian Framework for sparse Estimation in High-Dimensional Mixed Frequency Vector Autoregressive models}},
	year = {2023},
    volume={33},
    pages={1629--1652},
	doi = {10.5705/ss.202021.0206},
	url = {https://doi.org/10.5705/ss.202021.0206},
}

@article{schorfheide-2015,
	author = {Schorfheide, Frank and Song, Dongho},
	journal = {Journal of Business \& Economic Statistics},
	month = {8},
	pages = {366--380},
	title = {{Real-Time forecasting with a Mixed-Frequency VAR}},
	volume = {33},
	year = {2015},
	doi = {10.1080/07350015.2014.954707},
	url = {https://doi.org/10.1080/07350015.2014.954707},
}

@article{lin-2024,
	author = {Lin, Jiahe and Michailidis, George},
	journal = {International Journal of Forecasting},
	month = {9},
	pages = {942--957},
	title = {{A multi-task encoder-dual-decoder framework for mixed frequency data prediction}},
	volume = {40},
	year = {2024},
	doi = {10.1016/j.ijforecast.2023.08.003},
	url = {https://doi.org/10.1016/j.ijforecast.2023.08.003},
}

@book{dedecker-2007,
	author = {Dedecker, Jérôme and Doukhan, Paul and Lang, Gabriel and Rafael, León R. José and Louhichi, Sana and Prieur, Clémentine},
	month = {1},
	title = {{Weak Dependence: With Examples and Applications}},
	year = {2007},
	doi = {10.1007/978-0-387-69952-3},
	url = {https://doi.org/10.1007/978-0-387-69952-3},
    publisher={Springer}
}

@misc{ou-2024,
	author = {Ou, Weigutian and Bölcskei, Helmut},
	month = {10},
	title = {{Covering numbers for deep RELU networks with applications to function approximation and nonparametric regression}},
	year = {2024},
	url = {https://arxiv.org/abs/2410.06378},
    note={\emph{arXiv preprint arXiv:2410.06378}},
}

@book{wainwright-2019,
  title={High-Dimensional Statistics: A Non-Asymptotic Viewpoint},
  author={Wainwright, Martin J},
  year={2019},
  publisher={Cambridge university press}
}

@misc{chen-2019,
	author = {Chen, Minshuo and Li, Xingguo and Zhao, Tuo},
	title = {{On generalization bounds of a family of recurrent neural networks}},
	year = {2019},
	url = {https://arxiv.org/abs/1910.12947},
    note={\emph{arXiv preprint arXiv:1910.12947}},
}

@misc{kengne-2023,
	author = {Kengne, William and Wade, Modou},
	month = {5},
	title = {{Penalized deep neural networks estimator with general loss functions under weak dependence}},
	year = {2023},
	url = {https://arxiv.org/abs/2305.06230},
    note={\emph{arXiv preprint arXiv:2305.06230}},
}

@article{alquier-2012,
  title={Prediction of time series by statistical learning: general losses and fast rates},
  author={Alquier, Pierre and Li, Xiaoyin and Wintenberger, Olivier},
  journal={arXiv preprint arXiv:1211.1847},
  year={2012}
}

@article{feng-2023,
  title={Over-parameterized deep nonparametric regression for dependent data with its applications to reinforcement learning},
  author={Feng, Xingdong and Jiao, Yuling and Kang, Lican and Zhang, Baqun and Zhou, Fan},
  journal={Journal of Machine Learning Research},
  volume={24},
  pages={1--40},
  year={2023}
}

@article{xiu-2024,
  title={Deep autoencoders for nonlinear factor models: Theory and applications},
  author={Xiu, Dacheng and Shen, Zhouyu},
  year={2024},
  url = {https://ssrn.com/abstract=5074409 or http://dx.doi.org/10.2139/ssrn.5074409},
  journal={SSRN working paper No.5074409},
}

@article{cheng-2022,
  title={Financial time series forecasting with multi-modality graph neural network},
  author={Cheng, Dawei and Yang, Fangzhou and Xiang, Sheng and Liu, Jin},
  journal={Pattern Recognition},
  volume={121},
  pages={108218},
  year={2022},
  publisher={Elsevier}
}

@article{jia-2024,
  title={Gpt4mts: Prompt-based large language model for multimodal time-series forecasting},
  author={Jia, Furong and Wang, Kevin and Zheng, Yixiang and Cao, Defu and Liu, Yan},
  journal={Proceedings of the AAAI Conference on Artificial Intelligence},
  volume={38},
  pages={23343--23351},
  year={2024},
}

@inproceedings{jiang-2025,
  title={Multi-modal time series analysis: A tutorial and survey},
  author={Jiang, Yushan and Ning, Kanghui and Pan, Zijie and Shen, Xuyang and Ni, Jingchao and Yu, Wenchao and Schneider, Anderson and Chen, Haifeng and Nevmyvaka, Yuriy and Song, Dongjin},
  booktitle={Proceedings of the 31st ACM SIGKDD Conference on Knowledge Discovery and Data Mining V. 2},
  year={2025}
}

@article{liu-2025,
  title={How can time series analysis benefit from multiple modalities? a survey and outlook},
  author={Liu, Haoxin and Kamarthi, Harshavardhan and Zhao, Zhiyuan and Xu, Shangqing and Wang, Shiyu and Wen, Qingsong and Hartvigsen, Tom and Wang, Fei and Prakash, B Aditya},
  journal={arXiv preprint arXiv:2503.11835},
  year={2025}
}

@inproceedings{chollet-2017,
  title={Xception: Deep learning with depthwise separable convolutions},
  author={Chollet, Fran{\c{c}}ois},
  booktitle={Proceedings of the IEEE conference on computer vision and pattern recognition},
  pages={1251--1258},
  year={2017},
}

@inproceedings{wang-2020,
  title={Compact autoregressive network},
  author={Wang, Di and Huang, Feiqing and Zhao, Jingyu and Li, Guodong and Tian, Guangjian},
  booktitle={Proceedings of the AAAI Conference on Artificial Intelligence, 34:6145--6152},
  year={2020},
}

\end{document}